\newcommand{\defn}[1]{\textbf{\textit{#1}}}
\newcommand{\prob}[1]{\textsc{#1}}
\gdef\fps@figure{!htbp}}
\let\realbfseries=\bfseries
\def\bfseries{\realbfseries\boldmath}
\newtheorem{theorem}{Theorem}[section]
\newtheorem{lemma}[theorem]{Lemma}
\newtheorem{corollary}[theorem]{Corollary}
\newtheorem{definition}[theorem]{Definition}
\newtheorem{problem}{Problem}
 \gdef\xxxmark{%
   \expandafter\ifx\csname @mpargs\endcsname\relax % in minipage?
     \expandafter\ifx\csname @captype\endcsname\relax % in figure/caption?
       \marginpar{xxx}% not in a caption or minipage, can use marginpar
     \else
       xxx % notice trailing space
     \fi
   \else
     xxx % notice trailing space
   \fi}
 \gdef\xxx{\@ifnextchar[\xxx@lab\xxx@nolab}
 \long\gdef\xxx@lab[#1]#2{\textbf{[\xxxmark #2 ---{\sc #1}]}}
 \long\gdef\xxx@nolab#1{\textbf{[\xxxmark #1]}}
\def\GrabProofArgument[#1]{ #1: \egroup\ignorespaces}
\def\proof{\noindent\textbf\bgroup Proof%
           \@ifnextchar[{\GrabProofArgument}{: \egroup\ignorespaces}}
\begin{document}

\title{Hamiltonicity is Hard in Thin or Polygonal Grid Graphs, but Easy in Thin Polygonal Grid Graphs}
\author{
Erik D. Demaine%
    \thanks{MIT Computer Science and Artificial Intelligence Laboratory,
      32 Vassar St., Cambridge, MA 02139, USA,
      \protect\url{edemaine@mit.edu}}
\and
  Mikhail Rudoy%
  \thanks{Google Inc, \protect\url{mrudoy@gmail.com}. Work completed at MIT Computer Science and Artificial Intelligence Laboratory.}
}
\date{}

\maketitle

\begin{abstract}
In 2007, Arkin et al.~\cite{Arkin} initiated a systematic study of the complexity of the Hamiltonian cycle problem on square, triangular, or hexagonal grid graphs, restricted to polygonal, thin, superthin, degree-bounded, or solid grid graphs. They solved many combinations of these problems, proving them either polynomially solvable or NP-complete, but left three combinations open.
In this paper, we prove two of these unsolved combinations to be NP-complete:
\prob{Hamiltonicity of Square Polygonal Grid Graphs} and
\prob{Hamiltonicity of Hexagonal Thin Grid Graphs}.
%(leaving the third open: \prob{Hamiltonicity of Hexagonal Solid Grid Graphs}).
We also consider a new restriction, where the grid graph is both thin and
polygonal, and prove that Hamiltonicity then becomes polynomially solvable for
square, triangular, and hexagonal grid graphs.
%
%First of all, we consider the case where the grid graphs are both thin and polygonal (a case not considered in \cite{Arkin}). We prove that for every shape of grid graph, the \prob{Hamiltonicity of Thin Polygonal Grid Graphs} problem is polynomial time solvable. The most in-depth of these proofs---by far---is the proof that the \prob{Hamiltonicity of Hexagonal Thin Polygonal Grid Graphs} problem is polynomial time solvable, which is shown by polynomial time reduction from this problem to the \prob{Planar Cubic Feedback Vertex Set} problem (which is known to be polynomially solvable). 
%
%This motivates a connection between the thin hexagonal grid graph case and the \prob{Feedback Vertex Set} problem, and leads to our second result. We introduce a variant of \prob{Feedback Vertex Set} and use our version of this problem to prove that the \prob{Hamiltonicity of Hexagonal Thin Grid Graphs} problem (which was left open in \cite{Arkin}) is NP-hard. In particular, we first reduce from the known NP-hard problem \prob{Planar Graph 3-Coloring} to our \prob{Feedback Vertex Set} variant and then we reduce from that problem to \prob{Hamiltonicity of Hexagonal Thin Grid Graphs}.
%
%Finally, we show that the \prob{Hamiltonicity of Square Polygonal Grid Graphs} problem (which is a second of the three problems left open in \cite{Arkin}) is NP-hard by reduction from the known NP-hard problem \prob{Planar Monotone Rectilinear 3SAT}. 
\end{abstract}

\FloatBarrier
\section{Introduction}
\FloatBarrier

Hamiltonicity (Hamiltonian Cycle) is one of the prototype NP-complete problems from Karp's 1972
paper \cite{Karp-1972}.  An important NP-complete special case of Hamiltonicity
is its restriction to \emph{(square) grid graphs}
\cite{Itai-Papadimitriou-Luiz-1982},
where vertices lie on the 2D integer square grid and edges connect all
unit-distance vertex pairs.
Hamiltonicity in grid graphs has been the basis for NP-hardness reductions to
many geometric and planar-graph problems, such as
Euclidean TSP \cite{Itai-Papadimitriou-Luiz-1982},
% Not the first proof.  "We notice that this proof is much simpler than that in [Pa]. It also avoids an annoying complication having to do with the precision in which the distances are calculated (see [Pa])."
Euclidean degree-bounded minimum spanning tree
\cite{Papadimitriou-Vazirani-1984},
2D platform games with item collection and time limits \cite{Forisek-2010},
the Slither Link puzzle \cite{Yato-2000},
the Hashiwokakero puzzle \cite{Andersson-2009},
lawn mowing and milling (e.g., 3D printing) \cite{Arkin-Fekete-Mitchell-2000},
and
minimum-turn milling \cite{Arkin-Bender-Demaine-Fekete-Mitchell-Sethia-2005};
see \cite{6.890}.

Given all these applications, it is natural to wonder how special we can make
the grid graphs, and whether we can change the grid to triangular or hexagonal,
and still keep Hamiltonicity NP-complete.  Two notable examples are
NP-completeness in maximum-degree-3 square grid graphs
\cite{Papadimitriou-Vazirani-1984} and a polynomial-time algorithm for
\emph{solid} square grid graphs \cite{Umans-Lenhart-1997}.
In 2007, Arkin et al.~\cite{Arkin} initiated a systematic study of the
complexity of Hamiltonicity in square, triangular, or hexagonal grid graphs,
restricted to several special cases: polygonal, thin, superthin,
degree-bounded, or solid grid graphs.
See \cite{Arkin} or Section~\ref{sec:terminology}
for definitions.
Table~\ref{tab:results_table} (nonbold) summarizes the many results they
obtained, including several NP-completeness results and a few polynomial-time
algorithms.

\paragraph*{Our Results}
Arkin et al.~\cite{Arkin} left unsolved three of the combinations between grid
shape and special property: \prob{Hamiltonicity of Square Polygonal Grid
Graphs}, \prob{Hamiltonicity of Hexagonal Thin Grid Graphs}, and
\prob{Hamiltonicity of Hexagonal Solid Grid Graphs}.  In this paper, we
prove that the first two of these, \prob{Hamiltonicity of Square Polygonal Grid
Graphs} and \prob{Hamiltonicity of Hexagonal Thin Grid Graphs}, are NP-complete.
In addition, we consider another case not considered in that paper, namely,
\emph{thin polygonal} grid graphs (the fusion of two special cases).
We show that Hamiltonicity becomes polynomially solvable in this case,
for all three shapes of grid graph.

Table~\ref{tab:results_table} (bold) summarizes our new results.
%the combined results of Arkin et al.~\cite{Arkin}.

\begin{table}
\centering
\begin{tabular}{ | l | c | c | c |}
\hline
Grid & Triangular & Square & Hexagonal \\
\hline \hline
General & NP-complete & NP-complete & NP-complete \\
\hline
Degree- & deg $\le 3$ & deg $\le 3$ & deg $\le 2$ \\
bounded & NP-complete & NP-complete & Polynomial \\
\hline
Thin & NP-complete & NP-complete & \textbf{NP-complete} \\
\hline
Superthin & NP-complete & Polynomial & Polynomial \\
\hline
Polygonal & Polynomial & \textbf{NP-complete} & NP-complete \\
\hline
Solid & Polynomial & Polynomial & Open \\
\hline
Thin Polygonal & \textbf{Polynomial} & \textbf{Polynomial} & \textbf{Polynomial} \\
\hline
\end{tabular}
\caption{Complexity of Hamiltonicity in grid graph variants; bold entries correspond to new results in this paper (see \cite{Arkin} or Section~\ref{sec:terminology} for definitions)}
\label{tab:results_table}
\end{table}

In Section~\ref{sec:terminology}, we briefly define the several types of grid graphs. In Section~\ref{sec:polygonal thin}, we show that Hamiltonicity can be solved in polynomial time in the three thin polygonal grid graph cases; this is particularly challenging for hexagonal grid graphs, where the problem reduces to the polynomially solvable problem \prob{Max-Degree-$3$ Tree-Residue Vertex-Breaking}. In Section~\ref{sec:hex thin}, we prove NP-completeness of \prob{Hamiltonicity of Hexagonal Thin Grid Graphs}. In Section~\ref{sec:square polygonal}, we prove NP-completeness of the \prob{Hamiltonicity of Square Polygonal Grid Graphs} problem. Finally, in Section~\ref{sec:conclusion}, we discuss the final remaining open problem, \prob{Hamiltonicity of Hexagonal Solid Grid Graphs}.

\FloatBarrier
\section{Grid Graph Terminology}
\FloatBarrier
\label{sec:terminology}

In this section, we introduce the definitions of several terms relating to grid graphs. We restrict our attention to only those terms and concepts relevant to the contents of this paper. See Arkin et al.~\cite{Arkin} for a more general overview of these concepts. 

We begin with a general definition.

\begin{definition}
The sets $\mathbb{Z}_{\square}$, $\mathbb{Z}_{\triangle}$, and $\mathbb{Z}_{\hexagon}$ refer to the sets of vertices of the tilings of the plane with unit-side squares, equilateral triangles, and regular hexagons. A \defn{grid graph} is a finite graph $G = (V, E)$ where $V$ is a subset of $\mathbb{Z}_{\square}$, $\mathbb{Z}_{\triangle}$, or $\mathbb{Z}_{\hexagon}$ and $E$ is the set of pairs $(u,v)$ of elements of $V$ such that $u$ and $v$ are at a distance of $1$ from each other. If $V \subset \mathbb{Z}_{\square}$, the grid graph is said to be a \defn{square grid graph}. Similarly, if $V \subset \mathbb{Z}_{\triangle}$ then $G$ is said to be a \defn{triangular grid graph} and if $V \subset \mathbb{Z}_{\hexagon}$ then $G$ is said to be a \defn{hexagonal grid graph}.
\end{definition}

Because we are concerned with Hamiltonicity, we restrict our attention to connected grid graphs with no degree-1 vertices. This does not affect the hardness of any Hamiltonicity problems because all grid graphs which are disconnected or which contain a degree-1 vertex are trivially not Hamiltonian and can be easily recognized.

In order to define the grid graph properties we are interested in, we need some more terminology:

\begin{definition}
Let $G$ be a grid graph. Consider the faces of the graph. There is one unbounded face. The cycle bordering this unbounded face is called the \defn{outer boundary} of $G$. The bounded faces of $G$ fall into two categories. Any bounded face containing a lattice point in its interior is called a \defn{hole}. The cycles bordering the holes of $G$ are called the \defn{inner boundaries} of $G$. The other category of bounded face is the category without lattice points in the interior; any such face must necessarily have a minimal length cycle (length 3, 4, or 6 for triangular, square, or hexagonal grid graphs) as its boundary. This type of face is called a \defn{pixel}. Any vertex on the inner or outer boundaries is called a \defn{boundary vertex}. All other vertices are \defn{interior vertices}.
\end{definition}

The above terminology allows us to define the grid graph properties of interest:

\begin{definition}
A \defn{polygonal grid graph} is a grid graph $G = (V, E)$ such that every vertex in $V$ and every edge in $E$ belongs to a pixel and such that no vertex can be removed to merge two boundaries (See Figure~\ref{fig:graphs}, top.)
\end{definition}

\begin{figure}[!htbp]
  \centering
  \begin{tabular}{|ccc|ccc|}
    \hline
    \includegraphics[scale=0.2]{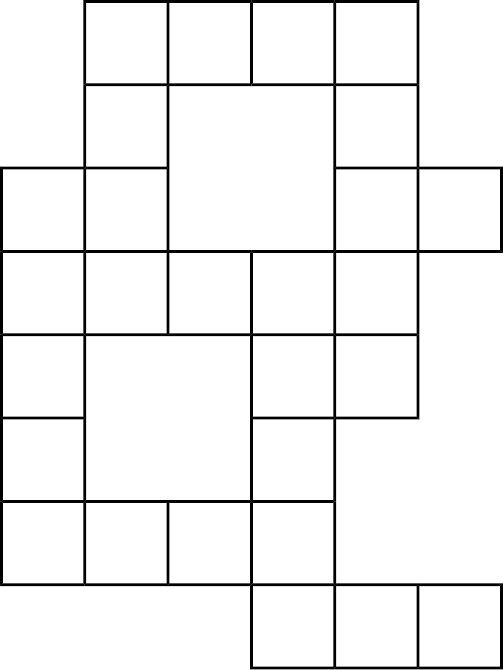} &
    \hspace{-0.2in}
    \includegraphics[scale=0.15]{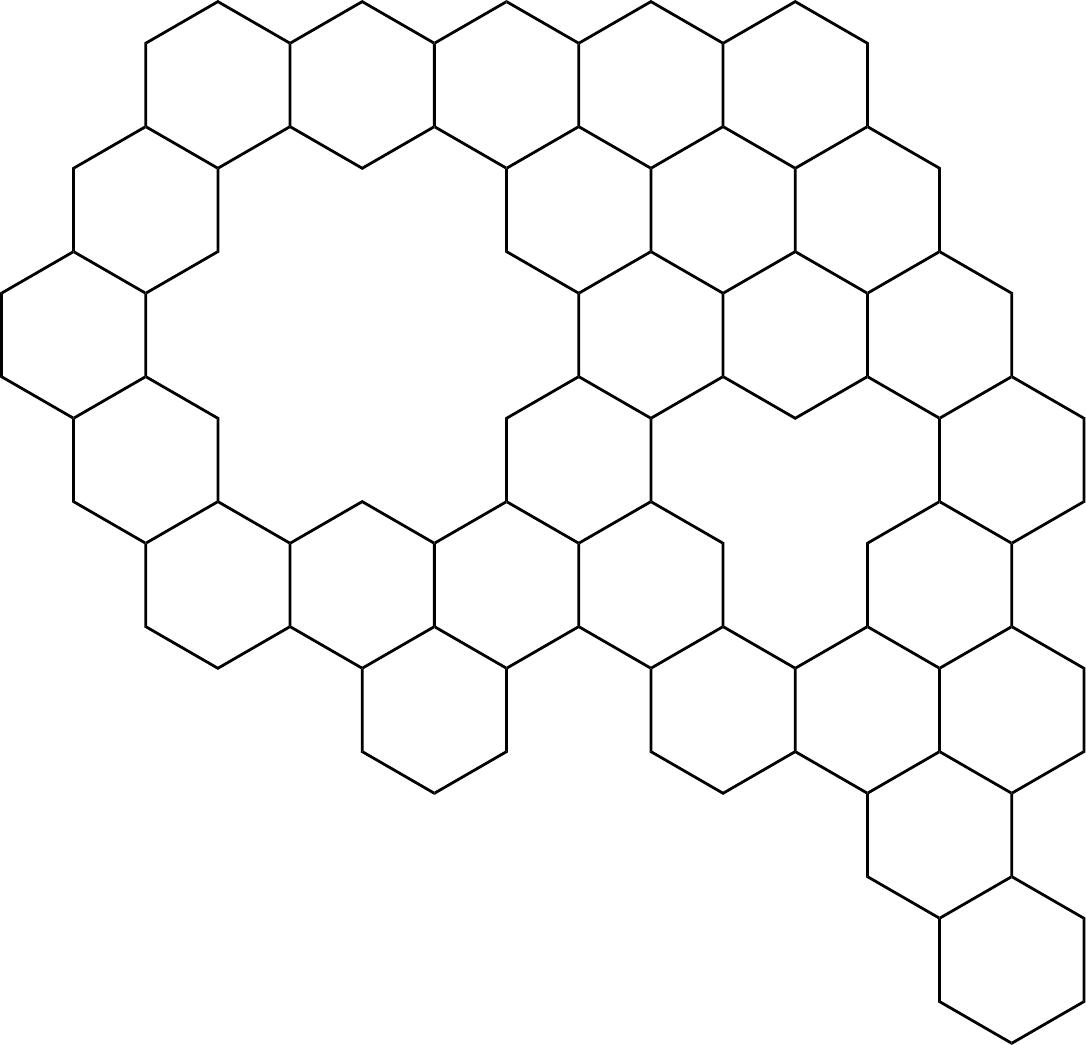} &
    \hspace{-0.2in}
    \includegraphics[scale=0.25]{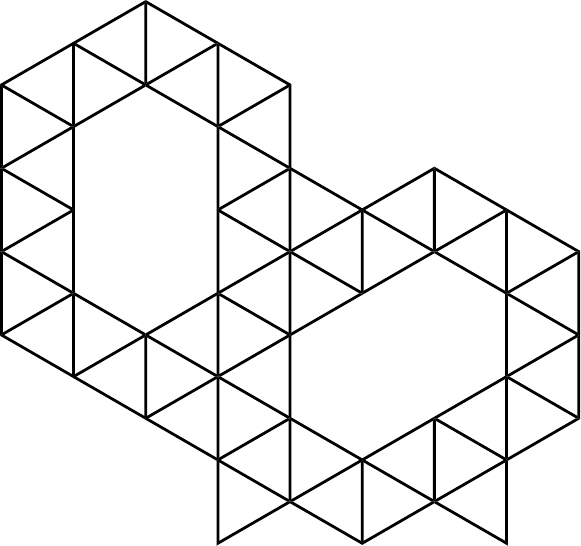} &
    \includegraphics[scale=0.2]{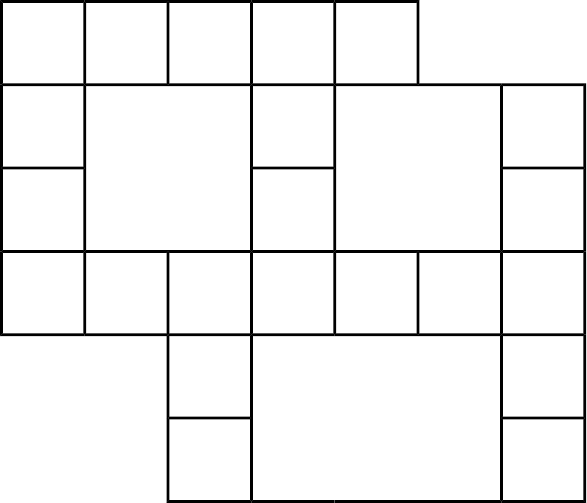} &
    \hspace{-0.1in}
    \includegraphics[scale=0.15]{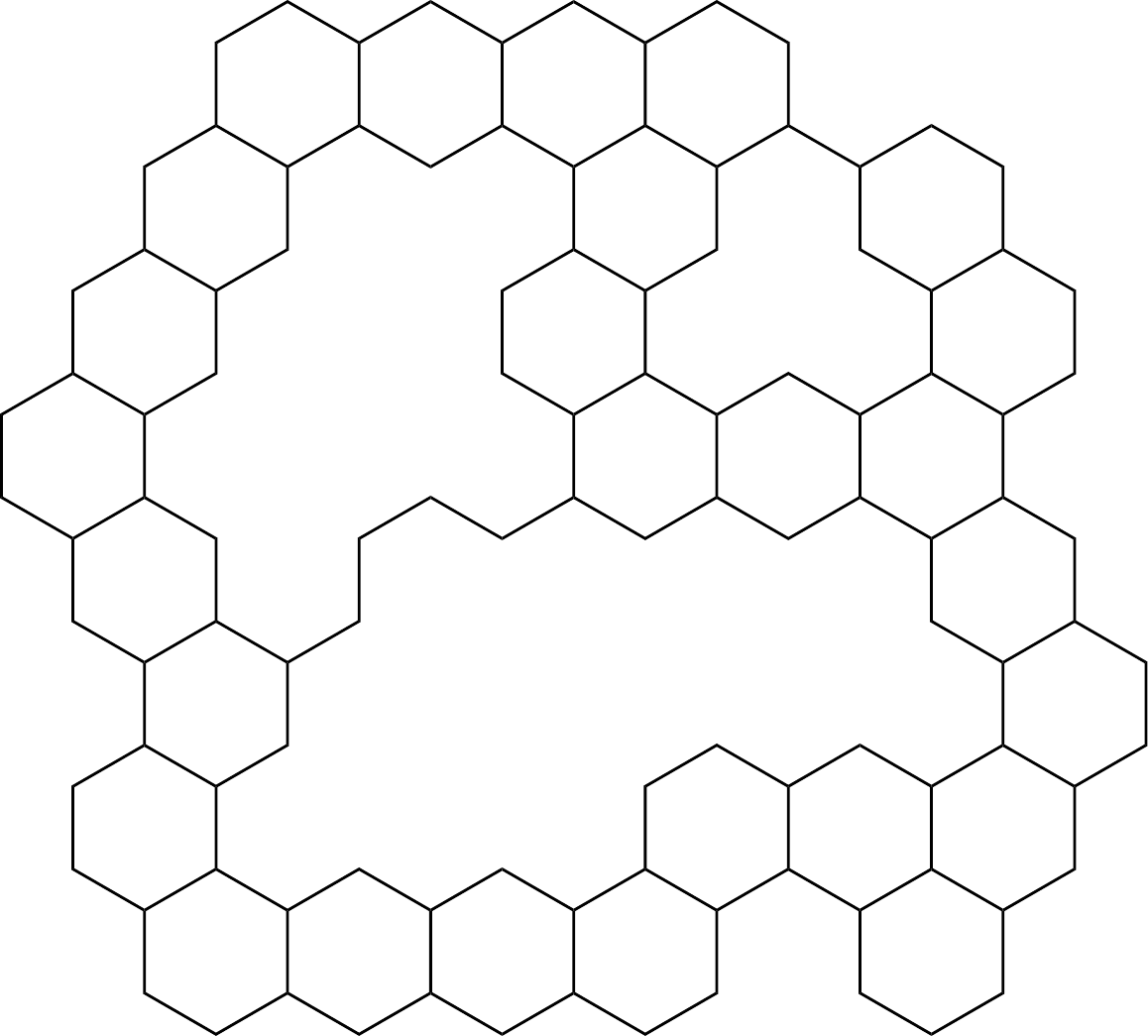} &
    \hspace{-0.1in}
    \includegraphics[scale=0.25]{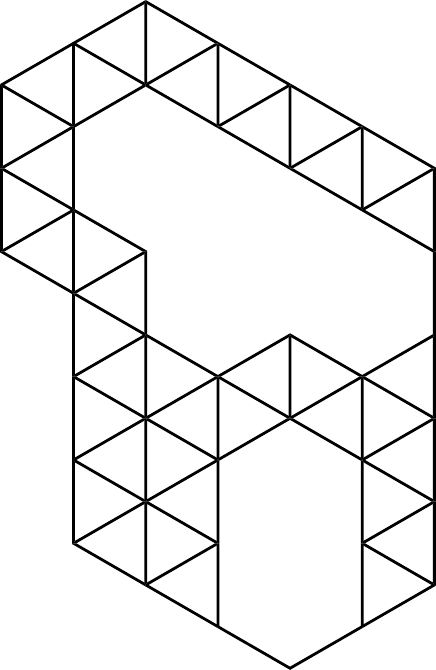} \\
    \multicolumn{3}{|c|}{\bf polygonal} &
    \multicolumn{3}{|c|}{\bf not polygonal} \\
    \hline
  %\end{tabular}
  %
  %\begin{tabular}{|ccc|ccc|}
    \hline
    \includegraphics[scale=0.2]{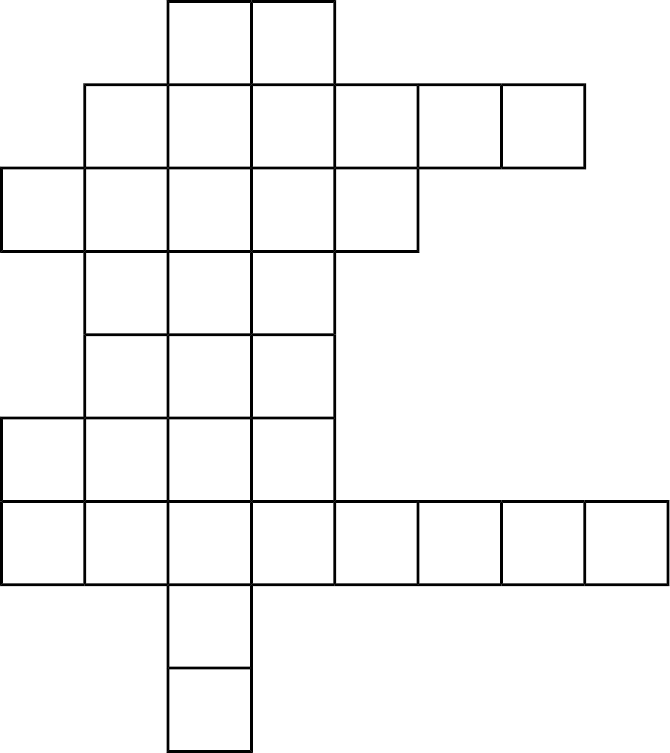} &
    \hspace{-0.1in}
    \includegraphics[scale=0.15]{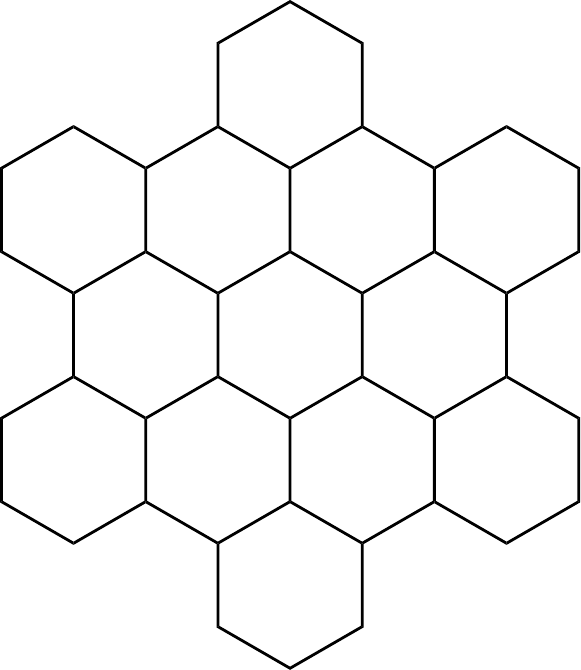} &
    \hspace{-0.1in}
    \includegraphics[scale=0.25]{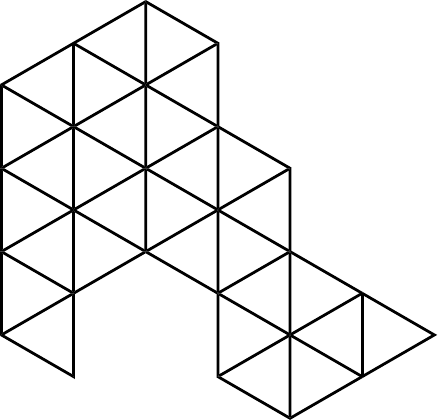} &
    \includegraphics[scale=0.2]{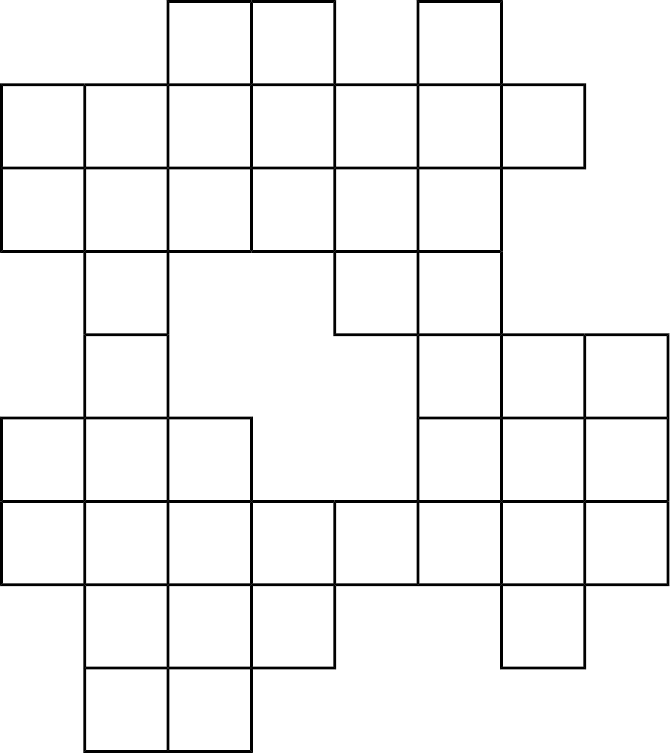} &
    \hspace{-0.1in}
    \includegraphics[scale=0.15]{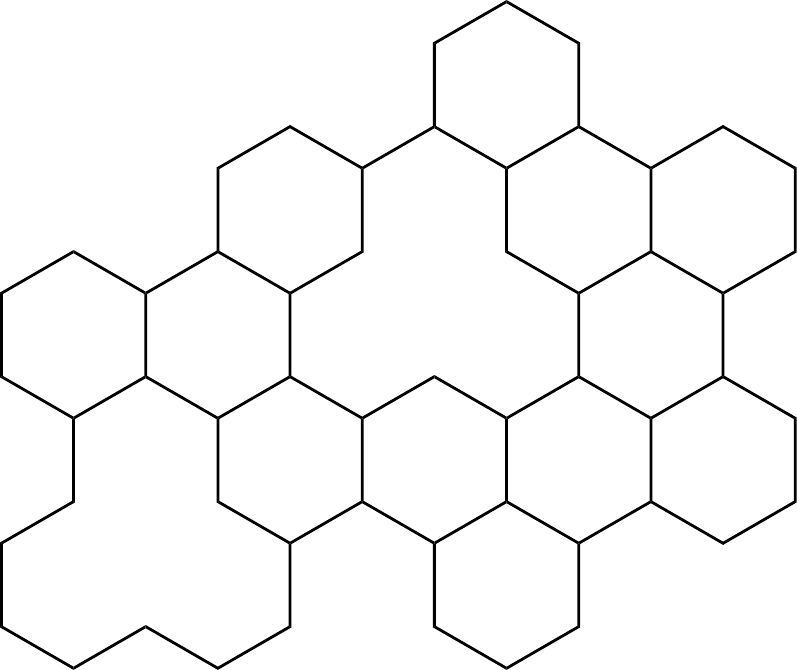} &
    \hspace{-0.1in}
    \includegraphics[scale=0.25]{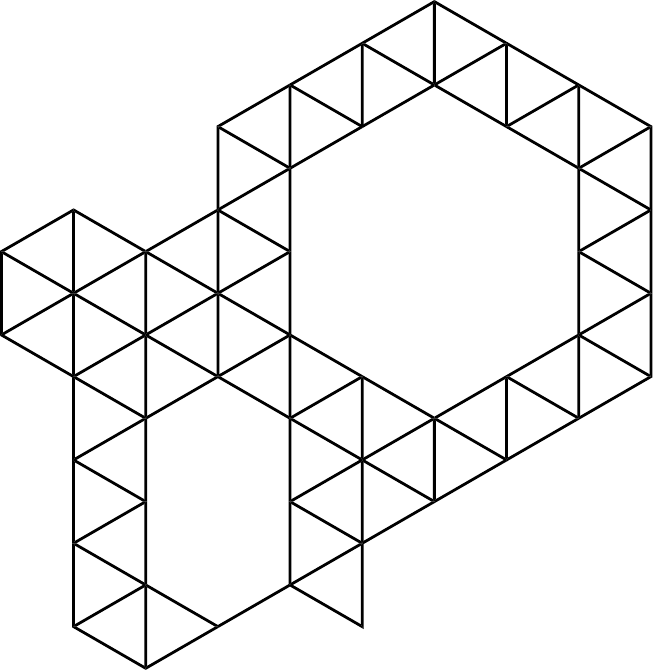} \\
    \multicolumn{3}{|c|}{\bf solid} &
    \multicolumn{3}{|c|}{\bf not solid} \\
    \hline
  %\end{tabular}
  %
  %\begin{tabular}{|ccc|ccc|}
    \hline
    \includegraphics[scale=0.2]{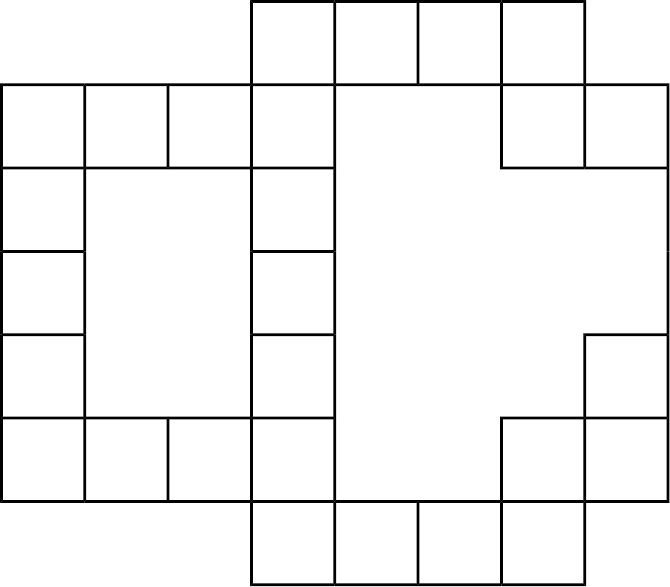} &
    \hspace{-0.15in}
    \includegraphics[scale=0.15]{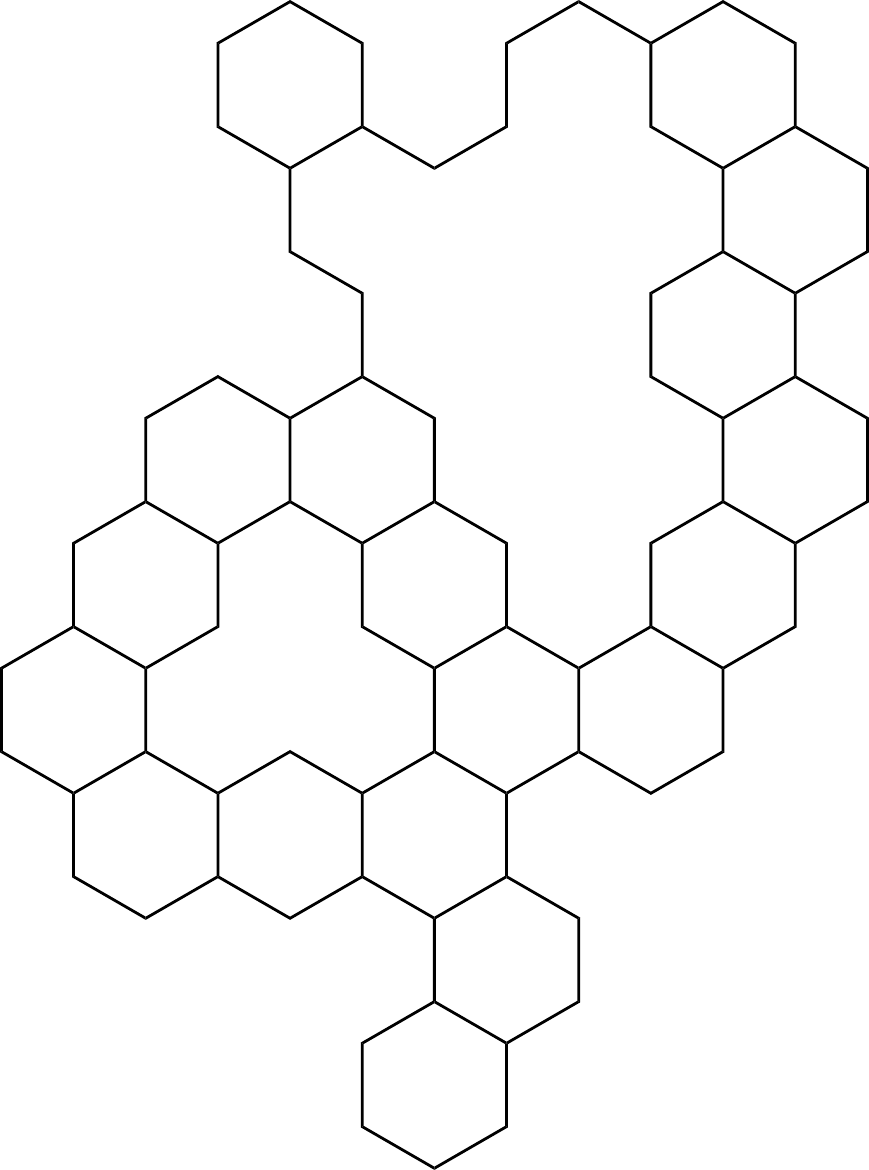} &
    \hspace{-0.25in}
    \includegraphics[scale=0.25]{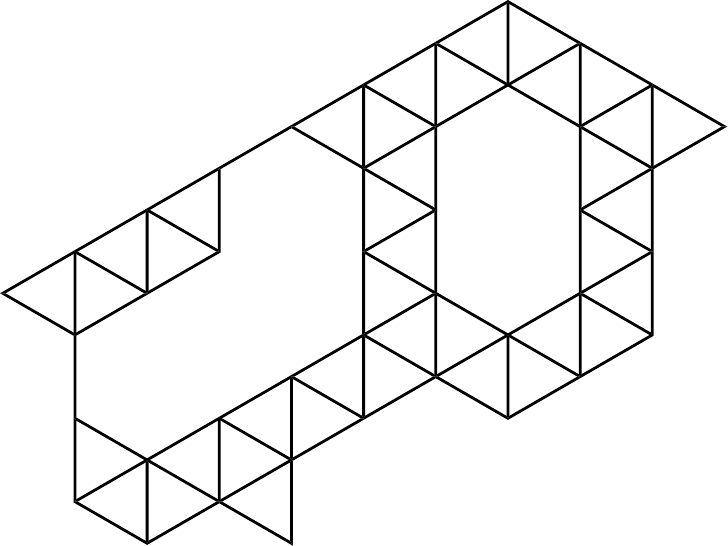} &
    \includegraphics[scale=0.2]{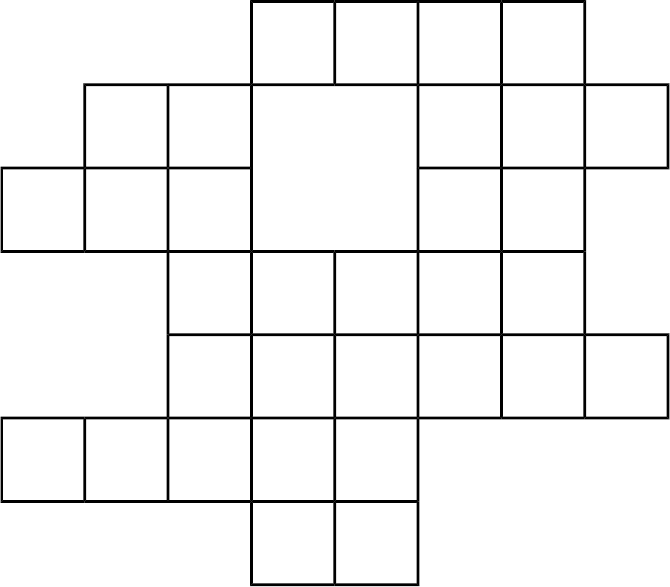} &
    \hspace{-0.2in}
    \includegraphics[scale=0.15]{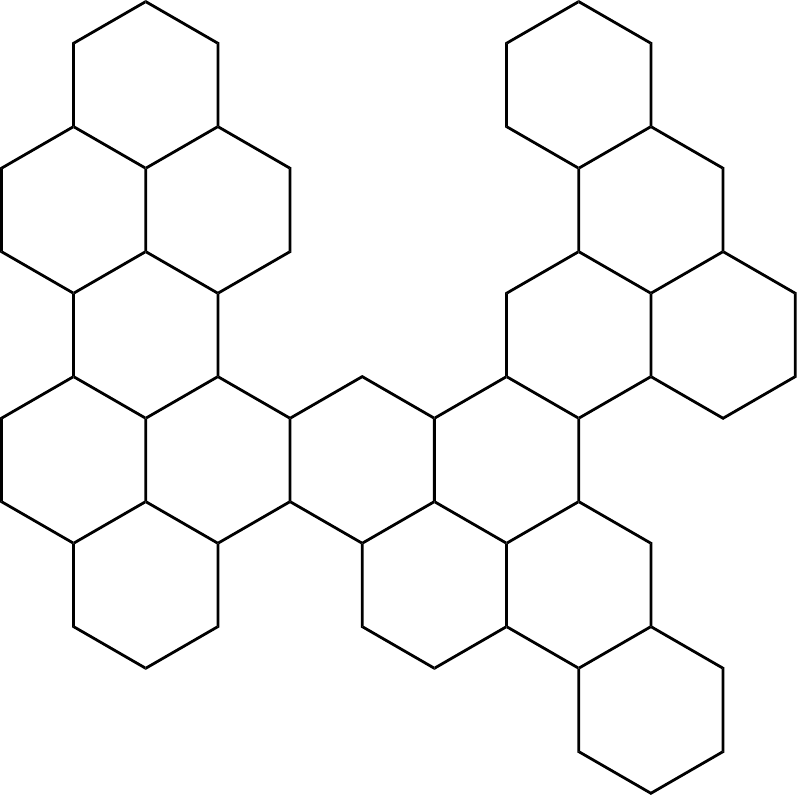} &
    \hspace{-0.25in}
    \includegraphics[scale=0.25]{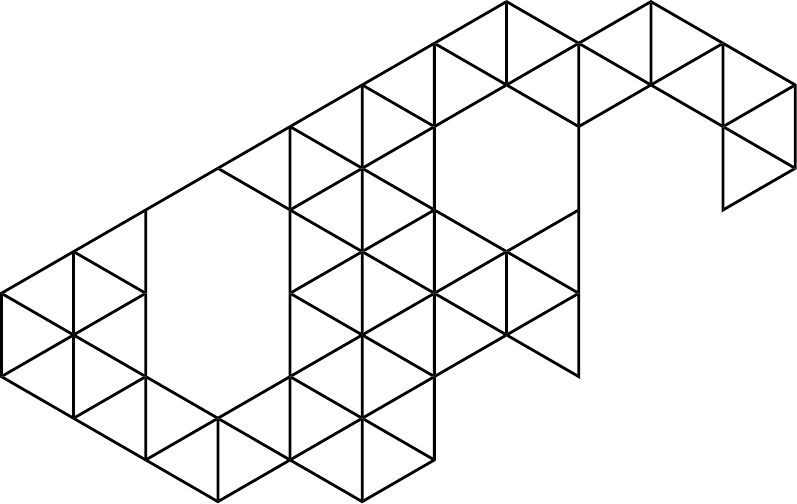} \\
    \multicolumn{3}{|c|}{\bf thin} &
    \multicolumn{3}{|c|}{\bf not thin} \\
    \hline
  \end{tabular}
  \caption{Examples of grid graphs that are or are not polygonal, solid, or thin}
  \label{fig:graphs}
\end{figure}

\begin{definition}
A grid graph is called \defn{solid} if it has no holes, or equivalently if every bounded face is a pixel. 
(See Figure~\ref{fig:graphs}, middle.)
\end{definition}

\begin{definition}
A grid graph is called \defn{thin} if every vertex in the graph is a boundary vertex. Note that a thin grid graph need not be polygonal.
(See Figure~\ref{fig:graphs}, bottom.)
\end{definition}

Now that we have defined all of the relevant terms, we can state the problems
in question:
the \prob{Hamiltonicity of [Square/Hexagonal/Triangular] [Polygonal/Thin/Polygonal Thin] Grid Graphs} problem asks whether a given [square/hexagonal/triangular] [polygonal/thin/polygonal and thin] grid graph is Hamiltonian.
%The following three problems are the problems left open in \cite{Arkin}. We will prove the first two NP-complete but will not address the third.
%In addition, we will also consider the following problems, none of which were considered in \cite{Arkin}.

\FloatBarrier
\section{Polygonal Thin Grid Graph Hamiltonicity is Easy}
\FloatBarrier
\label{sec:polygonal thin}

In this section, we show that the three polygonal thin grid graph Hamiltonicity problems are all polynomial-time solvable. This is trivial for triangular grids and easy for square grids, but is non-trivial to show for hexagonal grids.
%\ifabstract
%We focus here on the last case, and leave the easy cases to
%Appendix~\ref{app:polygonal thin}.
%\fi

%\later{\FloatBarrier\section{Polygonal Thin Grid Graph Hamiltonicity is Easy}\FloatBarrier
%\label{app:polygonal thin}}

%\later{
\FloatBarrier
\subsection{Triangular Grids}
\FloatBarrier

\begin{theorem}[\cite{Arkin}]
The \prob{Hamiltonicity of Triangular Polygonal Thin Grid Graphs} problem is polynomially solvable.
\end{theorem}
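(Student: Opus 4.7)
The plan is to show that every thin polygonal triangular grid graph is in fact Hamiltonian, with its outer boundary serving as a Hamiltonian cycle. Then the decision algorithm reduces to verifying the input meets the definitions (trivially polynomial) and returning \textbf{yes}.

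First, I would establish a uniform local constraint at every vertex $v$: the pixels incident to $v$ occupy a contiguous, proper arc among the six angular sectors around $v$. Thinness gives properness, since otherwise $v$ would have all six sectors filled and therefore be interior. The polygonal condition gives contiguity: if the occupied sectors split into two disjoint arcs, then $v$ would lie on two distinct boundary components of the graph, and removing $v$ would merge those components, violating the definition of polygonal.

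Next, I would rule out holes. Suppose for contradiction the graph has an inner boundary cycle $C$. At any vertex $q$ on $C$, the sectors around $q$ that face the hole have all their corner vertices inside the hole and therefore contain no pixel; the pixel arc at $q$ lies on the side opposite the hole, with the two end-edges of that arc being precisely the two $C$-edges at $q$. Tracing around $C$, the "outer tips" of these pixel triangles are not adjacent to the hole and must therefore lie on the outer boundary (or on another inner boundary), so the pixel arc at $q$ physically bridges $C$ and the outer boundary. Removing $q$ merges the faces on the two sides of this arc, which merges $C$ with the outer boundary, contradicting the polygonal property. Hence no hole can exist.

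With holes ruled out, every vertex of the graph (being a boundary vertex by thinness) lies on the unique outer boundary. The polygonal condition's no-cut-vertex property makes this outer boundary a simple cycle, which then visits every vertex and is therefore a Hamiltonian cycle. The main obstacle is making the no-holes step rigorous, since it demands a careful local case analysis combining the pixel-arc constraint with the rigidity of the triangular grid; once that geometric fact is pinned down, the algorithm and its correctness are immediate, which explains why this case is "trivial" in contrast to the square and hexagonal cases that follow.
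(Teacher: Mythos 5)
Your plan diverges from the paper's, which is a one\--line observation: the thin polygonal case is a special case of \prob{Hamiltonicity of Triangular Polygonal Grid Graphs}, already shown polynomial in \cite{Arkin}, so nothing new needs to be proved. Your route instead tries to mirror the square case by showing every thin polygonal triangular grid graph is Hamiltonian via its outer boundary. That stronger claim is false, and the failure is exactly at your no\--holes step. Counterexample: let $v$ be a lattice point with neighbors $a_1,\dots,a_6$; take as pixels every unit triangle incident to some $a_i$ except the six triangles incident to $v$. This is an annulus of $18$ triangles surrounding a hexagonal face whose interior contains the lattice point $v$ \--- a hole. The graph is thin (the six $a_i$ lie on the inner hexagonal boundary; the remaining twelve vertices lie on the outer boundary) and polygonal (every vertex and edge lies in a pixel, and the inner and outer boundaries are vertex\--disjoint, so no single vertex removal can merge them). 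Its outer boundary visits only $12$ of the $18$ vertices, so it is not a Hamiltonian cycle; the graph does have a Hamiltonian cycle, but one that weaves through all three concentric layers.

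The precise point where your hole\--elimination argument breaks is the sentence ``removing $q$ merges the faces on the two sides of this arc, which merges $C$ with the outer boundary.'' At a hole vertex $q$, the only faces incident to $q$ are the hole and the pixels of the arc; removing $q$ fuses the hole with those pixels into a larger hole but never reaches the outer face, because the pixels of the arc still separate the two. Since the definition of polygonal only forbids a vertex removal that merges two \emph{boundaries} (inner or outer boundary cycles), no contradiction arises, and holes can coexist with thinness and polygonality in the triangular grid just as they do in the square grid (where the paper's own algorithm must explicitly break cycles of pixels). To salvage a self\--contained argument you would need to handle holes the way the square\--grid proof does, but the intended proof here is simply the citation to \cite{Arkin}.
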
 

\begin{proof}
This problem is a special case of the \prob{Hamiltonicity of Triangular Polygonal Grid Graphs} problem, which was shown to be polynomially solvable in \cite{Arkin}.
\end{proof}

\FloatBarrier
\subsection{Square Grids}
\FloatBarrier

We prove below that 

\begin{theorem}
\label{theorem:polygonal_thin_square}
Every polygonal thin square grid graph is Hamiltonian. 
\end{theorem}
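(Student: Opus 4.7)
The plan is to show that the polygonal-thin hypothesis is so restrictive in the square case that any such graph $G$ must be a single linear strip of pixels; once that structure is in hand, Hamiltonicity is immediate.

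First I would use polygonality to write $G$ as the union of its pixels (unit-square $4$-cycles), so that every vertex of $G$ is a corner of at least one pixel. I would then perform a local analysis at an arbitrary vertex $v$. Around $v$, the four surrounding lattice cells split into four ``quadrants''; each pixel containing $v$ occupies one quadrant and contributes two of $v$'s four possible grid neighbors. A short case check shows that if $v$ lies in two pixels in non-adjacent quadrants, or in three or four pixels, then all four grid neighbors of $v$ are present in $V$ and $v$ has degree $4$, making it interior and contradicting thinness. Hence every vertex of $G$ lies either in a single pixel (degree $2$) or in two pixels that share the grid edge through $v$ (degree $3$).

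Next I would pass to the pixel-adjacency graph $D$, whose nodes are the pixels of $G$ and whose edges are pairs of pixels sharing a grid edge. Any L-turn at a pixel $P$, meaning $P$ has neighbors on two adjacent sides, would place the common corner of the two neighbors in three pixels at once --- and hence, by the previous step, make it an interior vertex. The same analysis rules out a pixel with three or more neighbors in $D$. So $D$ has maximum degree $2$ and every pixel of $D$ that has two $D$-neighbors has them on opposite sides; combined with connectedness of $G$, this forces $D$ to be a straight path and $G$ to be a $1 \times N$ strip of pixels for some $N \ge 1$. For such a strip the outer boundary is a single $2(N+1)$-cycle that visits every vertex of $V$, and so is a Hamilton cycle.

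The main obstacle is making the local vertex case analysis airtight --- I need to verify every configuration in which two or more pixels meet at a vertex and confirm that it indeed yields a degree-$4$ vertex. Once that is done, the no-vertex-removal-merges-two-boundaries clause of polygonality plays no essential role: the straight-strip conclusion is forced by thinness plus the pixel-cover condition, and any such strip trivially satisfies the no-cut-vertex clause as well.
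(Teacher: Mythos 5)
There is a genuine gap, and it is fatal: your structural claim that $G$ must be a $1\times N$ strip of pixels is false. The error is in the local analysis, where you infer ``degree $4$ $\Rightarrow$ interior.'' In a square grid graph a vertex $v$ is interior only when \emph{all four} unit cells incident to $v$ are pixels (bounded faces); having all four grid neighbors present is necessary but not sufficient, because the fourth cell at $v$ may be missing a corner and hence belong to the outer face or to a hole, in which case $v$ still lies on a boundary cycle. Concretely, take the L-tromino of pixels $[0,1]^2$, $[1,2]\times[0,1]$, $[0,1]\times[1,2]$. The vertex $(1,1)$ lies in three pixels and has degree $4$, yet $(2,2)$ is absent, so the cell $[1,2]^2$ is part of the unbounded face and $(1,1)$ sits on the outer boundary. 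This graph is polygonal and thin but is not a straight strip, and its pixel-adjacency graph has an L-turn. Worse, a width-one annulus of pixels (say the $12$ boundary cells of a $4\times4$ block, with the central vertex omitted) is also polygonal and thin, and its pixel-adjacency graph is a $12$-cycle --- so the pixel graph need not even be a tree. Your step ruling out L-turns and your conclusion that $D$ is a path therefore both fail. (Your treatment of two pixels in diagonally opposite quadrants is fine: since grid graphs are induced, that configuration forces all four cells to be pixels and the vertex really is interior. The three-pixel case is where the argument breaks.)

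The paper's proof has to work considerably harder precisely because such non-strip examples exist. It shows that the pixel set always contains a spanning \emph{tree} of pixels obtained by repeatedly deleting, from each cycle of pixels, a pixel whose only two neighbors lie on opposite sides (a separate lemma shows such a pixel always exists in a thin polygonal graph), and that the boundary of the resulting tree of pixels visits every vertex. If you want to salvage your approach, you would need to replace the strip classification with something like that tree-pruning argument; the local analysis alone cannot deliver the global structure you assumed.
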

and therefore conclude that

\begin{corollary}
There exists a polynomial time algorithm which decides the \prob{Hamiltonicity of Square Polygonal Thin Grid Graphs} problem (the ``always accept'' algorithm). 
\end{corollary}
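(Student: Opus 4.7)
The plan is to induct on the number $h$ of holes of $G$. In the base case $h=0$, thinness forces every vertex of $G$ onto the outer boundary, which polygonality makes a simple cycle; this cycle is itself a Hamiltonian cycle.

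For $h \geq 1$, I would pick any hole $H$ and choose a pixel $P$ satisfying: (i) $P$ has exactly one edge lying on the boundary of $H$, and (ii) the edge of $P$ opposite to this hole edge is also a boundary edge (lying on the outer face or on some other hole). Deleting these two opposite edges of $P$ from $G$ causes the face $P$ to merge with $H$ and with the face on the opposite side, reducing the hole count by exactly one and yielding a graph $G'$. The other two edges of $P$ remain, being shared with $P$'s two neighboring pixels along the ring around $H$; hence all four corners of $P$ remain on at least one other pixel (so no vertex is orphaned) and now lie together on the single merged boundary of $G'$ (so polygonality is preserved, since no corner ends up on two distinct boundaries). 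Thinness is preserved automatically, because we only removed edges. By induction $G'$ has a Hamiltonian cycle, and since $V(G')=V(G)$ and $E(G') \subseteq E(G)$, this cycle is Hamiltonian in $G$ as well.

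The main obstacle is showing that such a $P$ always exists. Because $H$ contains at least one interior lattice point, its boundary cycle has length at least $8$, giving at least $8$ pixels of $G$ adjacent to $H$ via a single edge. Of any two consecutive such pixels along the same straight stretch of $H$'s boundary, at least one has its opposite edge on the outer face or on another hole: if both had their opposite edge shared with yet another pixel, those four pixels together would form a $2\times 2$ block, violating thinness. Hence at least one eligible $P$ always exists, closing the induction and proving Hamiltonicity.
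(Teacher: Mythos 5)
Your overall strategy is the same as the paper's: repeatedly delete a pixel that meets a hole in one edge and whose opposite edge also borders a non-pixel face, thereby merging that hole with an adjacent face, until the pixel region is simply connected, and then take the region's boundary as the Hamiltonian cycle. (One bookkeeping caveat: your $G'$ is obtained by deleting edges, so it is no longer an induced subgraph of the lattice and hence not a grid graph; ``thin,'' ``polygonal,'' and ``hole'' are not defined for it and the induction hypothesis does not literally apply. The paper avoids this by keeping $G$ fixed and inducting on a shrinking \emph{set of pixels} whose union's boundary is output at the end; your argument should be recast the same way.)

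The genuine gap is the existence argument for $P$, which is exactly the hard part of this proof. You claim that a hole boundary of length at least $8$ yields at least $8$ pixels meeting $H$ in a single edge, and you then take two consecutive such pixels ``along the same straight stretch.'' Neither step is justified. A pixel sitting at a reflex corner of $H$ meets $H$ in \emph{two} edges, so the number of single-edge pixels can be far smaller than the boundary length; worse, nothing you have said rules out a hole whose boundary turns at every vertex, e.g.\ a staircase diamond (the $20$-edge rectilinear polygon with unit edges alternating horizontal/vertical has area $13$ and, by Pick's theorem, $4$ interior lattice points, so it qualifies as a hole yet has no two consecutive collinear boundary edges). For such a hole your $2\times 2$-block argument produces no candidate $P$ at all; you would need to either prove that thinness and polygonality forbid such holes or find eligible pixels by other means. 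Even in the presence of a straight stretch there is a residual case you do not address: the selected $P$ may sit at the end of the stretch next to a reflex corner of $H$, in which case its perpendicular edge on that side need not be shared with any pixel of the ring, so $P$ need not have two opposite pixel-neighbours and deleting it can orphan a corner of $P$ from the surviving pixel region. The paper's corresponding lemma does real work here: it takes the leftmost column meeting a pixel cycle, the bottom-most such pixel $p_{0,0}$, and chases forced consequences of thinness and the induced-subgraph property up that column to locate a removable pixel at $p_{0,1}$ or $p_{0,2}$. Some argument of that kind is required; as written, your existence claim does not close the induction.
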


To prove Theorem~\ref{theorem:polygonal_thin_square}, we will provide a polynomial-time algorithm for finding a Hamiltonian cycle in a polygonal thin square grid graph, prove that if the algorithm produces an output then the output is a Hamiltonian cycle, and prove that following the algorithm is always possible.

First, the algorithm: Suppose the set of pixels in input graph $G$ is $P$. Initialize $S$ to be the empty set. Then repeat the following step until $P-S$ contains no cycles of pixels: identify a cycle of pixels in $P-S$, find a pixel $p$ in this cycle such that exactly two pixels in $G$ neighbor $p$ and the two neighboring pixels are on opposite sides of $p$, and add $p$ to $S$. Once this loop is finished, let $T = P-S$ be the set of pixels in $G$ but not in $S$. Treating $T$ as a region, output the boundary of that region as a Hamiltonian cycle.

Clearly, this algorithm is a polynomial-time algorithm. The only questions are (1) whether the output is actually a Hamiltonian cycle if the algorithm succeeds and (2) whether a given cycle of pixels always contains a pixel $p$ such that exactly two pixels in $G$ neighbor $p$ and the two neighboring pixels are on opposite sides of $p$. We prove that the answer to both these questions is ``yes'' below:

\begin{lemma}
Provided the given algorithm succeeds at each step on input $G$, the final output will be a Hamiltonian cycle in $G$.
\end{lemma}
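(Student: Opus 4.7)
The approach is to maintain two invariants throughout the algorithm's execution and then read off the Hamiltonian-cycle conclusion at termination. Writing $T = P - S$ for the current surviving pixel set, the invariants are: \textbf{(Covering)} every vertex $v \in G$ is a corner of some pixel of $T$, and \textbf{(Connectivity)} $T$ is connected in its pixel-adjacency graph. Both hold initially because $T = P$, and $G$ is polygonal (so every vertex belongs to some pixel) and connected.

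For the inductive step, suppose the algorithm removes a pixel $p$ that lies on some cycle of pixels in $T$ and has exactly two pixel-neighbors $q_L, q_R$ in $G$, on opposite sides of $p$. Because $p$ lies on a cycle of $T$-pixels, $p$ has at least two pixel-neighbors in $T$; combined with the upper bound of two from the choice of $p$, its $T$-neighbors are precisely $q_L$ and $q_R$. For (Covering), each of the four corners of $p$ is also a corner of $q_L$ or of $q_R$ (the two corners on $p$'s $q_L$-side are corners of $q_L$, and the two on the $q_R$-side are corners of $q_R$), and both $q_L$ and $q_R$ remain in $T - \{p\}$; so no vertex previously covered by $p$ becomes uncovered, and vertices not touching $p$ are unaffected. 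For (Connectivity), removing $p$ leaves $q_L$ and $q_R$ connected via the remainder of the cycle, and any other pixel of $T$ that was reachable through $p$ must have been attached via $q_L$ or $q_R$; hence $T - \{p\}$ stays connected.

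At termination, the pixel-adjacency graph of $T$ has no cycles; combined with (Connectivity), it is a tree. A polyomino whose pixel-adjacency graph is a tree has no holes, since any hole would be enclosed by a cycle of pixels, so $T$ is simply connected and its boundary is a single simple closed polygon. Every edge of this polygon separates a pixel of $T$ from something outside $T$, so it is an edge of some pixel in $T \subseteq G$ and hence an edge of $G$. By (Covering) together with thinness of $G$ (which ensures every vertex of $G$ is adjacent to a pixel-position outside $G$, and therefore outside $T$), every vertex of $G$ lies on this boundary. Therefore the boundary of $T$ is a Hamiltonian cycle of $G$.

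The main obstacle I anticipate is the topological step at the end: passing from the combinatorial statement that $T$'s pixel-adjacency graph is a tree to the geometric statement that the boundary of $T$ is a genuinely simple cycle with no lattice point visited twice. The implication \emph{no pixel-cycle} $\Rightarrow$ \emph{no hole} is a standard fact about polyominoes, but ruling out pinch-point self-intersections of the boundary requires appealing to the polygonal hypothesis on $G$, which is exactly what forbids the vertex configurations that would otherwise allow the boundary walk to revisit a lattice point.
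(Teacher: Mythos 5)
Your proof is correct and follows the same overall skeleton as the paper's: show the surviving pixel set $T$ is a tree of pixels, hence a connected hole-free region whose boundary is a single cycle, and then show every vertex of $G$ lies on that boundary using thinness plus the fact that every vertex touches a surviving pixel. The one place you genuinely diverge is in how you establish that last fact. The paper proves that the removed set $S$ is an independent set of pixels (once one of a pair of adjacent two-neighbor pixels is removed, the other can never again lie on a pixel cycle, so it is never removed), and then argues that every vertex adjacent to a removed pixel is adjacent to one of its two surviving opposite neighbors. You instead maintain a covering invariant and observe that \emph{at the moment} $p$ is removed it lies on a cycle of the current $T$, so both of its $G$-neighbors $q_L,q_R$ are still present and inherit all four of $p$'s corners; this sidesteps the independence-of-$S$ lemma entirely and is arguably the cleaner argument, since the needed fact is local to the removal step. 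Both proofs assert at the same level of rigor that a tree of pixels has a single simple boundary cycle; your closing remark attributes the absence of pinch points to the polygonal hypothesis on $G$, but what actually rules out a pinch (two diagonally touching $T$-pixels whose other two surrounding cells are outside $T$) is the acyclicity of $T$'s adjacency graph, since such a configuration would force a hole and hence a pixel cycle. That is a misattribution in your commentary rather than a gap in the proof itself.
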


\begin{proof}
Since $G$ is connected, the pixels in $P$ are connected as well. At every step of the algorithm, $P-S$ remains connected since the only pixel added to $S$ (and therefore removed from $P-S$) is a pixel in a cycle with no neighbors outside the cycle. Furthermore, the final value of $P-S$ (also known as $T$) will be acyclic since that is the terminating condition of the loop. Thus $T$ is connected and acyclic. In other words $T$ is a tree of pixels. As a result, the region defined by $T$ is connected and hole-free. Therefore the boundary of $T$ is one cycle. All that is left to show is that every vertex in $G$ is on this boundary.

Consider any vertex $v$ in $G$. $G$ is a thin grid graph, so every vertex, including $v$, is on the boundary of $G$. Then provided $v$ is adjacent to some pixel in $T$, we also have that $v$ is on the boundary of $T$. Thus we need to show that every vertex is adjacent to at least one pixel in $T$. 

Consider any pair of adjacent pixels $p_1$ and $p_2$ such that each of the two pixels has exactly two neighbors. As soon as one of these pixels is added to $S$ (if this ever occurs), the other will forevermore have at most one neighbor in $P-S$. As a result, this second pixel will never be in a cycle of pixels in $P-S$. Then this second pixel will never itself be added to $S$, or in other words at most one of $p_1$ and $p_2$ will be added to $S$. Thus, the final value of the set $S$ will contain no two adjacent pixels, or in other words every pixel adjacent to a pixel in $S$ will be in $T$.

But if $S$ contains pixel $p$ then every vertex adjacent to $p$ is also adjacent to one of the two neighbors of $p$ (since the two neighbors must be on opposite sides of $p$). Since these neighbors are in $T$, we see that every vertex adjacent to a pixel in $S$ is also adjacent to a pixel in $T$. 

Since the graph is polygonal, every vertex in the graph is adjacent to some pixel: either a pixel in $T$ or a pixel in $S$. In either case, we can conclude that the vertex is adjacent to a pixel in $T$, and therefore, as argued above, the boundary of $T$ is a Hamiltonian cycle in~$G$.
\end{proof}

\begin{lemma}
For any cycle of pixels $C$ in a polygonal thin grid graph $G$, there exists a pixel $p$ in that cycle such that exactly two pixels in $G$ neighbor $p$ and the two neighboring pixels are on opposite sides of $p$. 
\end{lemma}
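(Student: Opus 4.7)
The plan is to reduce the lemma to a structural statement---that $C$ must surround a rectangular hole with sides of length at least $2$---and then find the desired pixel via a local two-pixel argument. First I would argue that the bounded region $I$ enclosed by $C$ contains no pixel of $G$: if $q \in G$ were such a pixel, then $q$ being pixel-adjacent to a cycle pixel would make some shared lattice vertex a corner of four $G$-pixels and hence interior (violating thinness), while $q$ being pixel-disconnected from $C$ would contradict the connectivity of $G$.

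I would then refine this to show that $I$ is rectangular with all sides of length at least $2$. A width-$1$ protrusion of $I$ would contain a lattice edge whose endpoints, being corners of cycle pixels, lie in $V$, so that the edge is in $E$ but is a side of no pixel of $G$, violating polygonality. A reflex corner of $\partial I$ would force a lattice vertex diagonally across the concavity, sitting inside $G$, to be surrounded by four $G$-pixels---an interior vertex contradicting thinness. Consequently $C$ is a rectangular ring with four convex corner pixels and four straight sides, each of length at least $2$.

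The lemma now follows from a two-pixel argument on any side. Let $p_1, p_2$ be two adjacent straight pixels on a side of $C$. Each has its cycle-neighbors on opposite sides (perpendicular to the side's direction), one remaining side facing $I$ (non-$G$ by the first step), and one remaining side facing the exterior of $C$. If both $p_1$ and $p_2$ had a $G$-pixel at their exterior sides, those two ``appendage'' pixels would themselves be edge-adjacent, and the lattice vertex shared by $p_1$, $p_2$, and both appendages would have all four surrounding positions in $G$---an interior vertex, contradicting thinness. Hence at least one of $p_1, p_2$ has no exterior $G$-neighbor, so its only $G$-neighbors are its two cycle-neighbors on opposite sides, as required. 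The main obstacle I expect is the rectangular-hole structural claim, where one must separately invoke thinness (to rule out reflex corners) and polygonality (to rule out narrow protrusions), and in each case one has to pin down the specific offending interior vertex or unpixelated edge.
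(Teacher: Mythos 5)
There is a genuine gap: both structural claims your argument rests on are false. First, the region $I$ enclosed by a pixel cycle $C$ can contain pixels of $G$. For example, let $C$ be the $24$ boundary pixels of a $7\times 7$ block of pixel positions (a width-$1$ ring), and let $G$ consist of these together with a width-$1$ arm of three pixels extending from the middle of one side of the ring into the enclosed region. No $2\times 2$ block of pixels occurs, so every vertex lies on the outer boundary or on the (arm-shaped) hole's boundary; the graph is thin and polygonal, yet the arm lies inside $I$. Your dichotomy breaks down because a pixel $q$ inside $I$ that is edge-adjacent to a cycle pixel does \emph{not} create an interior vertex: each shared corner has only two of its four surrounding squares guaranteed to be pixels. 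Second, even when $I$ contains no $G$-pixels it need not be a rectangle. Take an L-shaped hole whose two arms both have width $2$ and surround it by a width-$1$ ring of pixels; this is again thin and polygonal, and at the reflex corner of the hole the ``vertex diagonally across the concavity'' is surrounded by only three $G$-pixels (the three ring pixels turning the corner) --- the fourth, outer square need not belong to $G$ at all. So the rectangular-ring picture, and with it the global form of your final step, collapses. (Your width-$1$ protrusion argument also misfires slightly: since grid graphs are induced, the offending unit square would have all four corners and edges present and would therefore \emph{be} a pixel, rather than leaving an edge on no pixel; the conclusion is right but for a different reason.)

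Your closing two-pixel argument is, however, exactly the right engine, and the paper's proof is a localized version of it that avoids any global claim about $I$. The paper takes the bottom-most pixel $p_{0,0}$ of $C$ in the leftmost column meeting $C$; extremality forces $p_{0,1},p_{1,0}\in C$, thinness then forces $p_{1,1}\notin G$, and inducedness (all four corners present would make $p_{1,1}$ a pixel) forces $p_{1,2}\notin G$, yielding a vertical run $p_{0,0},p_{0,1},p_{0,2},p_{0,3}$ of $C$ whose two inward-facing squares are provably absent. The same $2\times 2$/thinness observation you use on the exterior side is then applied to $p_{-1,1}$ and $p_{-1,2}$: at least one is absent, and the corresponding $p_{0,i}$ is the desired pixel. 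To repair your proof you would need to replace the global structure of $I$ with this kind of local derivation, at an extremal pixel of $C$, of which adjacent squares lie outside $G$.
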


\begin{proof}
\begin{figure}[!htbp]%%[H]
  \centering
  \def\scale{.8}
  \begin{subfigure}[b]{1.6in}
    \centering
    \includegraphics[scale=\scale]{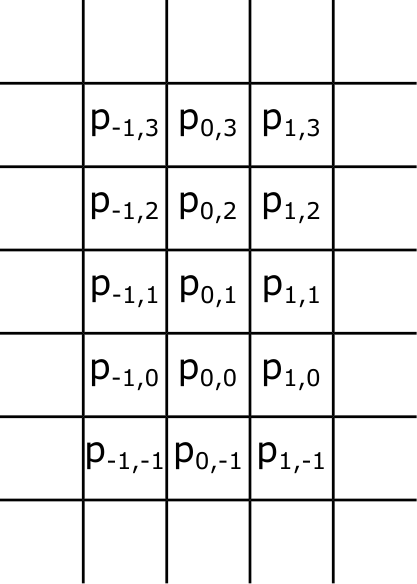}
    \caption{The naming scheme given to the pixels in the plane.\label{fig:tiling1}}
  \end{subfigure}\hfill
  \begin{subfigure}[b]{1.6in}
    \centering
    \includegraphics[scale=\scale]{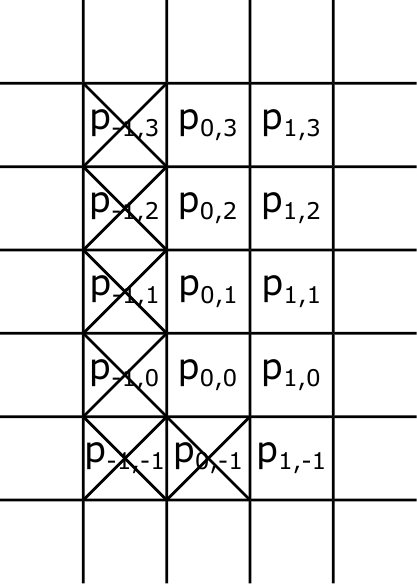}
    \caption{Pixels that cannot be in $C$ by definition of $p_{0,0}$ are crossed out.\label{fig:tiling2}}
  \end{subfigure}\hfill
  \begin{subfigure}[b]{1.6in}
    \centering
    \includegraphics[scale=\scale]{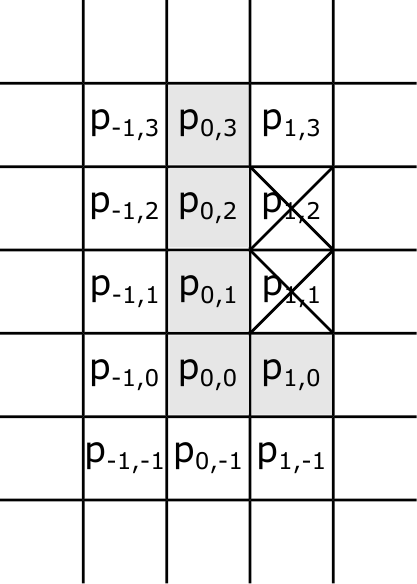}
    \caption{Pixels that must be in $C$ are shaded gray and those not in $G$ are crossed out.\label{fig:tiling3}}
  \end{subfigure}
  \caption{}
\end{figure}

Consider the leftmost column of pixels which contains any pixels in $C$ and let $p_{0,0}$ be the bottom-most pixel of $C$ in this column. Assign $x$ and $y$ coordinates to the pixels in $\mathbb{Z}_{\square}$ so that $p_{0,0}$ has coordinates $(0,0)$ and the coordinates increase as we go up and to the right. See Figure~\ref{fig:tiling1}.

By definition of $p_{0,0}$, we know that $p_{-1,i}$ is not a pixel in $C$ for any $i$ and neither is $p_{0,-1}$. These pixels are crossed out in Figure~\ref{fig:tiling2}.

But $C$ is a cycle so $p_{0,0}$ must have exactly two neighbors in $C$. Therefore $p_{1,0}$ and $p_{0,1}$ must both be in $C$. Then in order for $G$ to be thin, pixel $p_{1,1}$ cannot be in $G$ (nor in $C$). $p_{0,1}$ is a pixel in $C$ and therefore must have two neighbors in $C$. Since neither $p_{-1,1}$ nor $p_{1,1}$ are pixels in $C$ we can conclude that these two neighbors must be $p_{0,0}$ and $p_{0,2}$. In particular, $p_{0,2}$ must be a pixel in $C$.

Suppose for the sake of contradiction that $p_{1,2}$ is a pixel in $G$. Then $p_{1,2}$, $p_{0,2}$, $p_{0,1}$, $p_{0,0}$, and $p_{1,0}$ are all pixels in $G$. As a result, all four vertices on the boundary of pixel $p_{1,1}$ are in $G$, and so since $G$ is an induced subgraph, the edges between these vertices are in $G$ as well. As a result, we can conclude that pixel $p_{1,1}$ is in $G$, which is a contradiction. Thus $p_{1,2}$ is not a pixel in $G$.

Pixel $p_{0,2}$ is in $C$ and therefore must have two neighbors in $C$. Since neither $p_{-1,2}$ nor $p_{1,2}$ is in $C$ we can conclude that these two neighbors must be $p_{0,1}$ and $p_{0,3}$. In particular, $p_{0,3}$ must be in $C$.

We have shown that $p_{0,0}$, $p_{0,1}$, $p_{0,2}$, and $p_{0,3}$ are all pixels in $C$ and that $p_{1,1}$ and $p_{1,2}$ are not pixels in $G$. See Figure~\ref{fig:tiling3}. Since $G$ is thin, either $p_{-1,1}$ or $p_{-1,2}$ must be a pixel not in $G$. Then for some $i \in \{1,2\}$ we have that $p_{0, i}$ is a pixel in $C$, $p_{0, i\pm1}$ are pixels in $C$, and $p_{\pm 1, i}$ are pixels not in $G$. 

That pixel $p_{1, i}$ is the pixel $p$ we wished to find: a pixel in $C$ such that exactly two pixels in $G$ neighbor $p$ and the two neighboring pixels are on opposite sides of $p$. 
\end{proof}

\FloatBarrier
\subsection{Hexagonal Grids}
\FloatBarrier
\label{hex_grids_subsection}

Consider the following problem:

\begin{restatable}{problem}{trvbprob}
The \prob{Tree-Residue Vertex-Breaking} problem asks for a given multigraph $G$ in which every vertex is labeled as ``breakable'' or ``unbreakable'' whether there exists a subset of the breakable vertices such that ``breaking'' those vertices results in a tree.

Here the operation of \emph{breaking} a vertex in a multigraph (shown in Figure~\ref{fig:vertex_breaking_example}) results in a new multigraph by removing the vertex, adding a number of new vertices equal to the degree of the vertex in the original multigraph, and connecting these new vertices to the neighbors of the vertex in a one-to-one manner. 
\end{restatable}

\begin{figure}[!htbp]
  \centering
  \def\scale{0.35}
  \hspace{1in}
  \begin{subfigure}[b]{0.8in}
    \centering
    \includegraphics[scale=\scale]{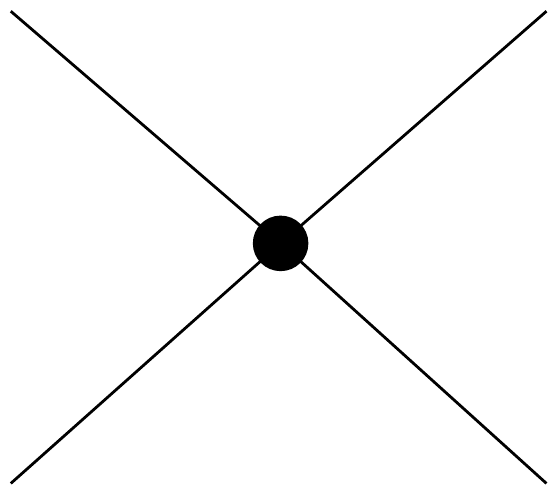}
    \caption{Before.}
  \end{subfigure}\hfill
  \begin{subfigure}[b]{.8in}
    \centering
    \includegraphics[scale=\scale]{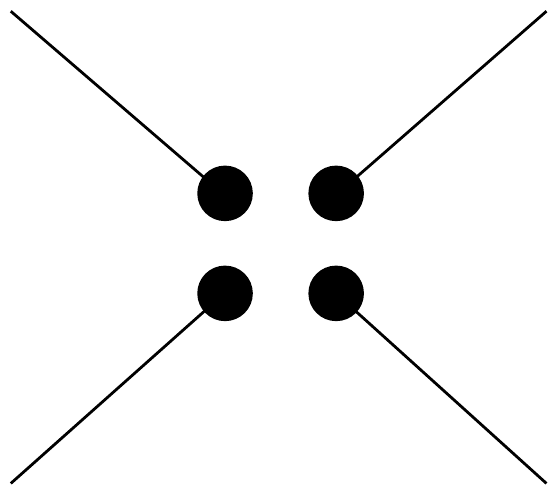}
    \caption{After.}
  \end{subfigure}
  \hspace{1in}
  \caption{An example of breaking a vertex}
  \label{fig:vertex_breaking_example}
\end{figure}

This problem and its variants were studied in \cite{trvb}:

\begin{theorem} {\rm \cite{trvb}}
The \prob{Max-Degree-$3$ Tree-Residue Vertex-Breaking} problem asks the same question as the \prob{Tree-Residue Vertex-Breaking} problem, but restricts the inputs to be graphs whose vertices each have degree at most~$3$. The \prob{Max-Degree-$3$ Tree-Residue Vertex-Breaking} problem is polynomial-time solvable.
\end{theorem}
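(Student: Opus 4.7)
The plan is to reduce \prob{Max-Degree-$3$ Tree-Residue Vertex-Breaking} to a polynomial-time-solvable combinatorial structure such as matroid intersection or a spanning-tree problem on an auxiliary graph. First I would set up the key counting identity: breaking a vertex of degree $d$ replaces one vertex by $d$ leaves while leaving the edge count unchanged, so if $B$ is the chosen set of broken vertices then the result has $E$ edges and $V + \sum_{v \in B}(\deg(v) - 1)$ vertices. Because breaking never decreases the number of connected components, a disconnected input is an immediate ``no'' instance, so we may assume the input is connected, in which case a valid $B$ must satisfy $\sum_{v \in B}(\deg(v) - 1) = E - V + 1$. For max-degree $3$ this is a tame linear constraint since each contribution lies in $\{0,1,2\}$.

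Having the right count is still not enough: the result must also be acyclic, which is equivalent (given the count) to being connected. Acyclicity says exactly that $B$ hits every simple cycle of $G$, i.e., $B$ is a feedback vertex set composed only of breakable vertices. So the task reduces to finding a feedback vertex set of breakables that meets the count identity. I would try to express this as a matroid intersection on the set of breakable vertices, with one matroid enforcing cycle-hitting (every cycle of $G$ contains a chosen vertex) and the other enforcing the correct count together with the no-over-fragmentation condition needed for connectivity. Alternatively, I would try to build an auxiliary graph in which unbreakable vertices are absorbed into paths, breakable vertices correspond to small local choice gadgets, and spanning trees of the auxiliary graph are in bijection with valid breakings.

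The main obstacle is proving that such a matroid pair, or such an auxiliary graph, faithfully captures both acyclicity and connectivity after breaking. The max-degree-$3$ restriction should be essential: at each breakable vertex there are at most three incident edges, so the local ``split versus keep'' decision has only a few configurations, and both cycle-hitting and connectivity can be controlled locally. Sanity instances such as a theta graph of breakable degree-$3$ vertices (where breaking one leaves a cycle and breaking both disconnects, so no valid $B$ exists) and a single cycle containing one breakable degree-$2$ vertex (trivially yes) exhibit the subtle interplay that the construction must respect. Once the reduction is correctly set up, the polynomial-time algorithm follows by invoking matroid intersection or a standard spanning-tree routine on the auxiliary graph.
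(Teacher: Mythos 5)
First, a framing point: the paper does not actually prove this theorem --- it is imported wholesale from \cite{trvb} as a black box --- so there is no in-paper argument to compare against; your proposal must stand on its own. Judged that way, it sets up a correct reformulation but stops before the mathematical content. Your counting identity is right, and so is the observation that, once $\sum_{v\in B}(\deg(v)-1)=E-V+1$ holds, acyclicity of the residue is equivalent to its being a tree; hence the problem does reduce to finding a set $B$ of breakable vertices that meets every cycle of $G$ and satisfies the exact weight constraint. But everything after that is a wish rather than a construction. The family of vertex sets whose removal leaves a forest is not the spanning-set family of any matroid (take two triangles sharing a vertex $c$: the maximal induced forests $\{a,b,d,e\}$ and $\{a,c,d\}$ have different sizes, so ``induces a forest'' violates the exchange axiom), so ``one matroid enforcing cycle-hitting'' does not exist as stated. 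What you are implicitly invoking is feedback vertex set, which is NP-hard in general; in max-degree-$3$ graphs it is tractable, but via matroid \emph{parity}, not intersection, and even that does not obviously accommodate your exact-weight constraint restricted to the breakable vertices. The auxiliary-graph alternative is likewise never exhibited. So the central step --- an explicit polynomial-time procedure together with a proof that it decides the problem --- is missing.

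A smaller but telling error: your theta-graph sanity check is wrong. In a theta graph every cycle passes through both degree-$3$ branch vertices, so breaking just one of them destroys all cycles and leaves a tree, with $\deg(v)-1=2=E-V+1$ exactly; it is a ``yes'' instance, not the ``no'' instance you describe. Since this example was meant to validate the local intuition behind your construction, its failure suggests the construction has not been stress-tested. To complete the argument you would need either to exhibit the auxiliary graph or matroid structure explicitly and prove the bijection with valid breakings, or simply to cite the self-contained polynomial-time algorithm for the max-degree-$3$ case given in \cite{trvb}, as the paper does.
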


In this section, we will show that

\begin{theorem}
There exists a polynomial-time reduction from the \prob{Hamiltonicity of Hexagonal Polygonal Thin Grid Graphs} problem to the \prob{Max-Degree-$3$ Tree-Residue Vertex-Breaking} problem.
\label{thm:hex_polygonal_thin_overall}
\end{theorem}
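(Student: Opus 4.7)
The plan is to construct a polynomial-time computable multigraph $H$ with vertices labeled breakable or unbreakable such that a hexagonal polygonal thin grid graph $G$ is Hamiltonian if and only if $H$ admits a tree-residue vertex-breaking. The vertices of $H$ will correspond to pixels of $G$, and its edges will correspond to shared edges between neighboring pixels; the labeling will encode the local Hamiltonicity constraints at each degree-3 vertex of $G$.

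The first step is a structural analysis of hexagonal polygonal thin grid graphs. Since $G$ is thin, any vertex lies in at most two of the three surrounding hexagons, so every vertex has degree 2 or 3. I would partition edges into \emph{shared edges} (on the boundary of two pixels, both endpoints of degree 3) and \emph{boundary edges} (on the boundary of only one pixel). A key structural lemma to prove is that no two shared edges of the same pixel can share a vertex, for otherwise that common vertex would lie in three pixels, contradicting thinness. Consequently, the shared edges of any pixel are pairwise non-adjacent along its 6-cycle boundary, so each pixel has at most three shared edges, and the pixel-adjacency multigraph automatically has maximum degree 3 as required.

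Next I would analyze what a Hamiltonian cycle looks like locally. Every boundary edge incident to a degree-2 vertex is forced into any Hamiltonian cycle, so each pixel's boundary decomposes into arcs of forced edges separated by shared edges and by ``free'' boundary edges (non-shared edges whose endpoints are both of degree 3). Finding a Hamiltonian cycle then reduces to deciding which shared and free edges to include, subject to the constraint that exactly two of the three edges incident to every degree-3 vertex are used. Based on how many shared edges each pixel has and how its free boundary edges are arranged along the boundary, I would assign each vertex of $H$ a breakable or unbreakable label (possibly inserting small auxiliary vertices to handle free boundary edges and corner configurations) so that the vertex-breaking operation faithfully models these local choices, while preserving the max-degree-3 bound.

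Finally I would verify the correspondence in both directions. In the forward direction, a Hamiltonian cycle $C$ determines which shared edges are used and, at each pixel, how the cycle uses or omits that pixel; the corresponding broken graph must be connected and acyclic precisely because $C$ is a single simple cycle visiting every vertex exactly once. In the reverse direction, from any subset of breakable vertices whose breaking produces a tree, I would reconstruct a Hamiltonian cycle by piecing together forced arcs and the selected shared/free edges around the resulting pixel clusters. The main obstacle will be designing the labeling (and the auxiliary gadget vertices, if any) so that the local constraint at every degree-3 vertex and the global single-cycle requirement simultaneously translate into the tree condition of TRVB; once this local-to-global encoding is set up correctly, standard planar topology arguments should close both implications.
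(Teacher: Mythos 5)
Your proposal sets up the same framework as the paper: the TRVB instance is the pixel-adjacency multigraph, and your structural observation that thinness forbids a vertex from lying in three pixels (hence each pixel has at most three neighbors and the max-degree-$3$ bound holds automatically) is exactly the paper's starting point. However, the proposal defers the entire technical core of the reduction, and you say so yourself (``the main obstacle will be designing the labeling''). Two things are missing. First, the labeling rule: the paper needs no auxiliary vertices or corner gadgets at all --- a pixel's vertex is labeled breakable if and only if the pixel has exactly three neighbors, and every other vertex is unbreakable. Your edge-centric framing (deciding which shared and free edges to include, subject to degree constraints at the degree-$3$ vertices of $G$) does not obviously lead there; the paper instead works with the dual, region-based view in which a Hamiltonian cycle is the perimeter of the set of pixels lying inside it, and the whole reduction hinges on characterizing that set.

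Second, and more substantively, the correctness argument requires a genuinely nontrivial case analysis that your plan does not anticipate. One must show that in any Hamiltonian cycle, every pixel with at most two neighbors lies \emph{inside} the cycle, that only degree-$3$ pixels may lie outside, and that two adjacent degree-$3$ pixels cannot both lie outside; the paper establishes this by enumerating the local cycle configurations around a pixel and then eliminating several of them via a propagation argument (certain locally legal patterns force an identical pattern to recur infinitely far away, so they cannot occur in a finite graph). Only with this characterization in hand does the equivalence ``$G$ Hamiltonian $\iff$ there is a tree of pixels containing all low-degree pixels and at least one of each adjacent degree-$3$ pair $\iff$ the pixel graph is a yes-instance of TRVB'' go through, where the last step also uses the observation that breaking both endpoints of an edge isolates that edge as a separate component, so TRVB automatically enforces the adjacent-pair condition. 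Without these pieces your forward direction (``the broken graph must be connected and acyclic precisely because $C$ is a single simple cycle'') is not yet an argument: it presupposes that $C$ is the perimeter of a set of whole pixels and that the inside/outside assignment of pixels is exactly what a breaking set encodes, neither of which has been established.
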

and therefore that 

\begin{corollary}
The \prob{Hamiltonicity of Hexagonal Polygonal Thin Grid Graphs} problem is polynomial-time solvable. 
\end{corollary}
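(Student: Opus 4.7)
The plan is to build a polynomial-time reduction: given a hexagonal polygonal thin grid graph $G$, construct a max-degree-$3$ multigraph $H$ with breakable/unbreakable labels so that $G$ is Hamiltonian iff some subset of the breakable vertices of $H$ can be broken to produce a tree.

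First, I would recast Hamiltonicity combinatorially. Using the polygonal property (every edge lies in a pixel) together with thinness (every vertex lies on some boundary), a short case analysis shows that at every vertex of $G$ exactly one incident face is a \emph{non-pixel face} (the outer face or a hole): a degree-$2$ vertex sits in one pixel and one non-pixel face, while a degree-$3$ vertex sits in two pixels and one non-pixel face. A Hamiltonian cycle $C$ in $G$ partitions the plane and induces a $2$-coloring $\sigma : \mathrm{Faces}(G) \to \{\mathrm{in},\mathrm{out}\}$ with $\sigma(\mathrm{outer})=\mathrm{out}$. One verifies that $C$ can be recovered from $\sigma$ iff $\sigma$ obeys the local rules ``at each degree-$2$ vertex the pixel and non-pixel face receive opposite colors'' and ``at each degree-$3$ vertex the two pixels and the non-pixel face are not all equal''; and that the induced $2$-regular spanning subgraph forms a single Hamiltonian cycle exactly when both colour classes are connected in the planar dual~$G^*$ (so that the selected edges form a minimal cut of $G^*$).

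Next I would construct $H$. Its vertex set would consist of one vertex $v_f$ for each non-pixel face $f$ of $G$, with $v_{\mathrm{outer}}$ marked unbreakable (pinning $\sigma(\mathrm{outer})=\mathrm{out}$) and each $v_h$ for a hole $h$ marked breakable (so that the choice of breaking or not breaking $v_h$ encodes $\sigma(h)$). For each pixel $p$ of $G$ I would insert a bounded-size \emph{pixel gadget} that attaches to the non-pixel face vertices incident to the boundary of $p$, offers exactly the two configurations realizing $\sigma(p)\in\{\mathrm{in},\mathrm{out}\}$, forces the local rules around $p$, and couples across shared-edge boundaries to the gadget of the neighboring pixel. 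The correspondence to establish is that $H$ admits a breaking whose residue is a tree iff the induced $\sigma$ obeys the local rules and makes both colour classes connected in~$G^*$, yielding a bijection between TRVB solutions and Hamiltonian cycles of~$G$.

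The hard part will be the gadget design. A hexagonal pixel has only a constant number of distinct ``side patterns'' (each of its six sides being either shared with another pixel or open to a non-pixel face), so a finite case analysis should produce the gadgets; the delicate points are ensuring that every TRVB vertex retains degree at most~$3$ after all incident pixel gadgets are glued in, and verifying the global topological correspondence between ``residue is a tree'' in $H$ and ``both colour classes connected in~$G^*$'' in $G$. The latter I would tackle via a cycle-rank / Euler-characteristic count on $H$ against the planar dual of $G$, leveraging the fact that the in/out partition of a planar face set is recoverable from the acyclic, spanning structure of the residue.
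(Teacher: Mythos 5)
Your combinatorial recasting is sound: in a hexagonal polygonal thin grid graph every vertex is indeed incident to exactly one non-pixel face, a Hamiltonian cycle does correspond to an in/out $2$-coloring of the faces obeying your local rules with both colour classes dual-connected, and reducing to \prob{Max-Degree-$3$ Tree-Residue Vertex-Breaking} is the right target. But the proof stops exactly where the work begins. The pixel gadgets---the entire content of the reduction---are never exhibited, and the plan as stated runs into a concrete obstacle you only gesture at: a single hole (and certainly the outer face) can border arbitrarily many pixels, so the vertex $v_f$ you create for a non-pixel face would need one attachment per incident pixel gadget and cannot have degree at most $3$. Splitting $v_f$ into a bounded-degree tree of vertices then interferes with the semantics of ``breaking $v_h$ encodes $\sigma(h)$.'' Moreover that degree of freedom is spurious: one can show (as the paper does) that every hole must lie \emph{outside} any Hamiltonian cycle, so your breakable hole-vertices encode choices that never occur, and your correctness argument would additionally have to rule them out. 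Finally, the equivalence between ``residue is a tree'' and ``both colour classes connected in $G^*$'' is asserted via an unspecified Euler-characteristic count; that equivalence is the heart of the theorem and cannot be deferred.

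The paper's reduction shows that none of this machinery is needed. It takes $G'$ to be simply the pixel-adjacency graph of $G$ (one vertex per pixel, one edge per pair of edge-sharing pixels), marking a pixel breakable exactly when it has three neighbouring pixels. Thinness forbids the configuration of three mutually adjacent hexagonal pixels, which forces every pixel to have at most three neighbours, so $G'$ is automatically max-degree-$3$ with no gadgets at all. A local case analysis shows that every pixel with fewer than three neighbours lies inside any Hamiltonian cycle, that no two adjacent degree-$3$ pixels both lie outside, and that the cycle is precisely the boundary of the tree formed by the inside pixels; breaking a vertex of $G'$ corresponds to placing that pixel outside, and ``residue is a tree'' corresponds directly to ``the inside pixels form a connected, hole-free region,'' i.e.\ their boundary is a single cycle through every vertex. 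If you want to rescue your face-colouring formulation, the shortest path is to observe that the non-pixel faces need not appear in the TRVB instance at all: their connectivity is exactly the acyclicity of the inside-pixel tree.
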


We prove Theorem~\ref{thm:hex_polygonal_thin_overall} with the following reduction: On input a hexagonal polygonal thin grid graph $G$, construct the graph $G'$ whose vertices are pixels and whose edges connect adjacent vertices. Label a vertex in $G'$ as breakable if it has degree-$3$. Otherwise label the vertex unbreakable. Output the resulting labeled graph.

To prove the correctness of this reduction, we first consider the behavior of a Hamiltonian cycle in the local vicinity of a pixel, then narrow down the possibilities using non-local constraints, and finally use the global constraints imposed by the existence of a Hamiltonian cycle to characterize the hexagonal polygonal thin grid graphs with Hamiltonian cycles. We will show using this characterization that a hexagonal polygonal thin grid graph is Hamiltonian if and only if the corresponding reduction output is a ``yes'' instance of \prob{Max-Degree-$3$ Tree-Residue Vertex-Breaking}.

\begin{lemma}
In a hexagonal polygonal thin grid graph, the only possible behaviors for a Hamiltonian cycle near a given pixel in the graph look like one of the diagrams in Figures \ref{fig:hex_zero}, \ref{fig:hex_one}, \ref{fig:hex_two}, or \ref{fig:hex_three} depending on the number of other pixels adjacent to the pixel in question.
\end{lemma}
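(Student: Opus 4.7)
The plan is to first reduce to four structural cases using thinness and polygonality, and then perform a finite local constraint analysis in each case. The key structural observation is that exactly three pixels of the hexagonal tiling meet at each lattice vertex, and at a vertex $v$ of a pixel $p$ the two pixels other than $p$ meeting at $v$ are precisely the edge-neighbors of $p$ on the two edges of $p$ incident to $v$. If both of those edge-neighbors lie in $G$, then all three pixels surrounding $v$ are in $G$, forcing $v$ to be interior and contradicting thinness. Hence the edge-neighbors of $p$ occupy a set of pairwise non-adjacent edges of $p$; this bounds their number by $3$, giving the four-case split $k \in \{0,1,2,3\}$ and, when $k=3$, forcing an alternating arrangement on every other edge.

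For the local analysis, I would catalogue the edges of $G$ incident to the six vertices $v_0,\dots,v_5$ of $p$: the six edges $e_0,\dots,e_5$ of $p$ itself, together with one outward edge $f_i$ at each $v_i$. By polygonality, $f_i$ lies in $G$ precisely when at least one of the two pixels adjacent to $p$ at $v_i$ is in $G$, so the set of available edges at each $v_i$ is determined by $k$ and by the positions of the $k$ neighbors. A Hamiltonian cycle must use exactly two of the available edges at each $v_i$, so for each $k$ I would enumerate the assignments of cycle/non-cycle to the relevant edges that satisfy this degree condition at all six vertices. The $k=0$ case forces $p$ to be the entire graph with cycle $\partial p$; the $k=1, 2$ cases are short propagations; in the $k=3$ case every vertex of $p$ is incident to an external edge, and I would fix the cycle-membership of one pair $(e_i, f_i)$ and propagate the forced consequences around the hexagon, discarding any partial assignment that violates the degree condition.

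The main obstacle will be the $k=3$ case, since the six degree equations couple tightly through the shared edges $e_i$ and it is easy to miss a subcase. I would organize the enumeration so that each determined status at $v_i$, combined with the degree constraint at $v_{i+1}$, uniquely fixes the status of $e_{i+1}$ and $f_{i+1}$; walking once around the hexagon should produce exactly the patterns depicted in Figure~\ref{fig:hex_three} and rule out all others.
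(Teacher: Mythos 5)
Your proposal is correct and follows essentially the same route as the paper: the paper likewise uses thinness to forbid three pixels meeting at a common vertex (its Figure~\ref{fig:three_hexes} argument, which is exactly your ``both edge-neighbors at $v$'' configuration), concludes that a pixel has at most three neighbors arranged on pairwise non-adjacent edges, and then determines the cycle patterns in each of the cases $k\in\{0,1,2,3\}$ by exhaustive enumeration under the constraint that every vertex is incident to exactly two cycle edges. Your write-up is somewhat more explicit than the paper's about how polygonality determines which outward edges are present and how to organize the $k=3$ enumeration, but the substance is identical.
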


\begin{figure}[!htbp]%%[H]
  \centering
  \def\scale{0.35}
  \begin{subfigure}[b]{0.8in}
    \centering
    \includegraphics[scale=\scale]{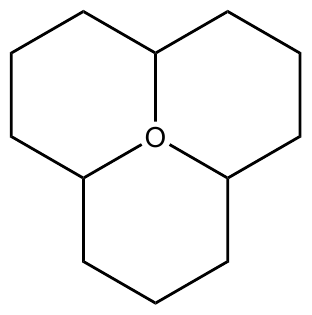}
    \caption{Not thin\label{fig:three_hexes}}
  \end{subfigure}\hfill
  \begin{subfigure}[b]{1in}
    \centering
    \includegraphics[scale=\scale]{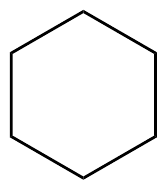}
    \caption{Zero neighbors\label{fig:hex_zero}}
  \end{subfigure}\hfill
  \begin{subfigure}[b]{1in}
    \centering
    \includegraphics[scale=\scale]{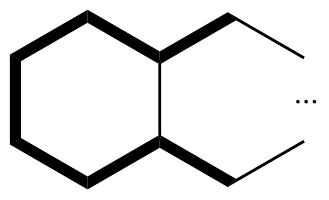}
    \caption{One neighbor\label{fig:hex_one}}
  \end{subfigure}\hfill
  \begin{subfigure}[b]{2.5in}
    \centering
    \includegraphics[scale=\scale]{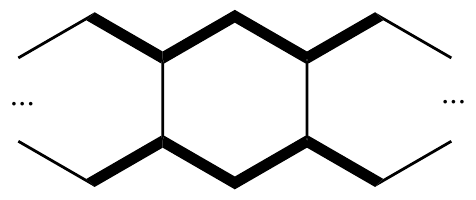}\hfill
    \includegraphics[scale=\scale]{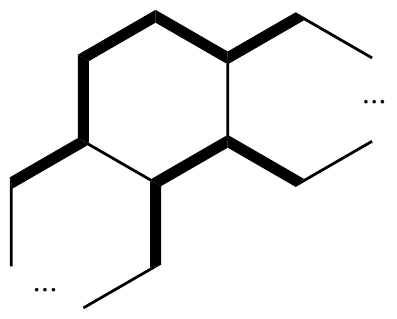}\hfill
    \includegraphics[scale=\scale]{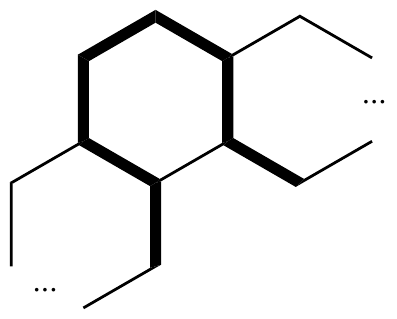}
    \caption{Two neighbors\label{fig:hex_two}}
  \end{subfigure}
  \begin{subfigure}[b]{\linewidth}
    \centering
    \includegraphics[scale=\scale]{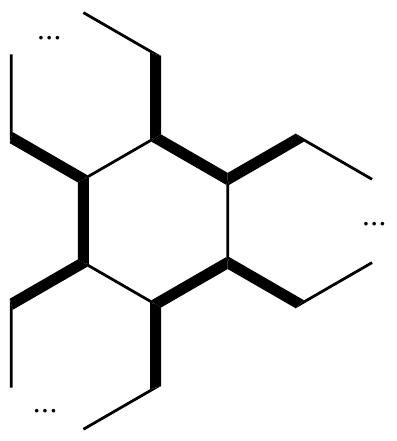}\hfill
    \includegraphics[scale=\scale]{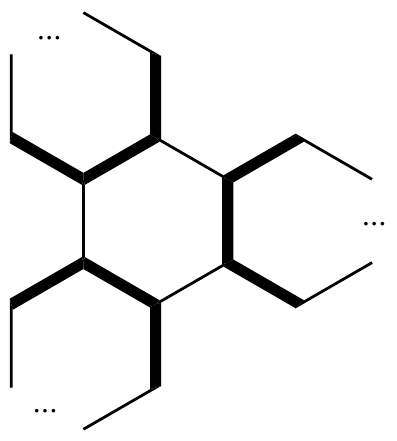}\hfill
    \includegraphics[scale=\scale]{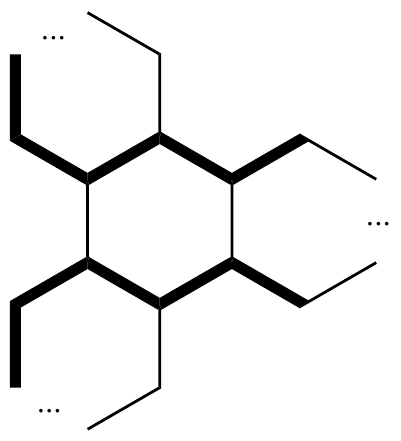}\hfill
    \includegraphics[scale=\scale]{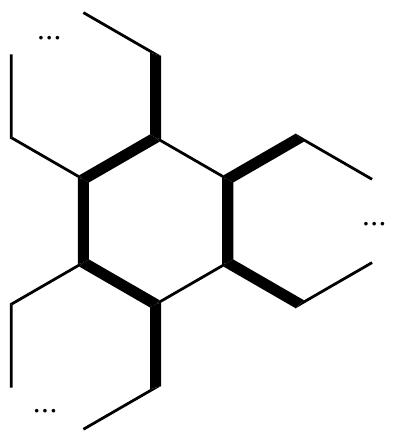}\hfill
    \includegraphics[scale=\scale]{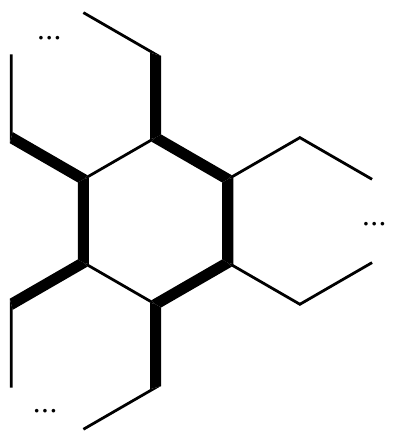}\hfill
    \includegraphics[scale=\scale]{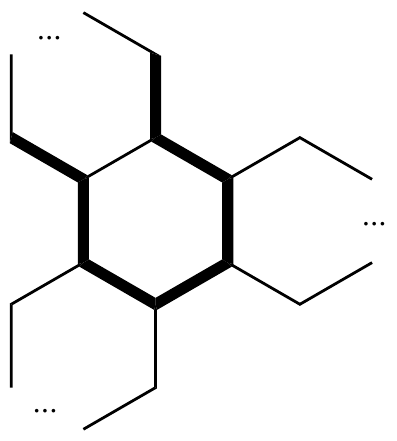}\hfill
    \includegraphics[scale=\scale]{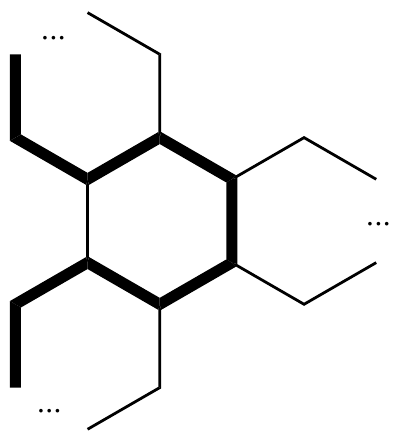}
    \caption{Three neighbors\label{fig:hex_three}}
  \end{subfigure}
  \caption{Local constraints on hexagons}
\end{figure}

\begin{proof}
Consider the pattern of pixels in Figure~\ref{fig:three_hexes}. In any grid graph containing three pixels in this arrangement, the circled vertex is not on the boundary. Thus, because every vertex must be on the boundary of a thin grid graph, the three-pixel pattern does not occur in any hexagonal thin grid graph. 

Next consider a single pixel. It can have up to 6 neighboring pixels, but in order to avoid the three pixel arrangement from Figure~\ref{fig:three_hexes}, it will have at most 3 neighbors.

If a pixel has zero neighbors (i.e., Figure~\ref{fig:hex_zero}), it is a single connected component of the graph. Because the graphs we consider are connected, that means that the pixel is the entire graph. In that case, there is a Hamiltonian cycle (consisting of the whole graph). Since in this case we can easily solve the Hamiltonicity problem, we restrict our attention to other cases.

If a pixel has exactly one neighbor, the cycle must pass through it as shown in Figure~\ref{fig:hex_one} (up to rotation).

If a pixel has exactly two neighbors, they can be arranged (up to rotation) in two ways. If the two neighboring pixels are opposite, the cycle must pass through the pixel as shown in Figure~\ref{fig:hex_two} (left). In the other arrangement, there are two possibilities as shown in Figure~\ref{fig:hex_two} (right).

Finally, there is exactly one way to arrange a pixel with three neighboring pixels, and the cycle can pass through this type of pixel in seven different ways. This is shown in Figure~\ref{fig:hex_three}.

The different possibilities of how to arrange the cycle in a given diagram were computed by exhaustive search subject to the constraints that (1) every vertex must be in the cycle and (2) exactly two edges next to each vertex must be in the cycle.
\end{proof}

\begin{lemma}
In a hexagonal polygonal thin grid graph, the only possible behaviors for a Hamiltonian cycle near a given pixel look like one of the diagrams in Figure~\ref{fig:restricted_possibilities}. Furthermore, the final situation in Figure~\ref{fig:restricted_possibilities} necessarily leads to the situation shown in the rightmost part of Figure~\ref{fig:weird}.
\end{lemma}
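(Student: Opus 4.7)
The plan is to sharpen the previous lemma's case analysis using two additional constraints beyond the local ``every vertex has degree $2$ in the cycle'' condition already exploited: (i) the Hamiltonian cycle must be a \emph{single} cycle, so no proper sub-cycle can be completed on the way; and (ii) along each edge shared by two adjacent pixels, the local behaviors at the two pixels must agree on whether the shared edges are in the cycle. Together with the polygonal and thin conditions, these should prune the seven three-neighbor patterns and the two bent two-neighbor patterns of Figure~\ref{fig:hex_three} and Figure~\ref{fig:hex_two} down to the list shown in Figure~\ref{fig:restricted_possibilities}.

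First I would dispose of any configuration in which the cycle closes up using only the boundary of the pixel itself (and possibly the boundary of an adjacent pixel). Such a pattern already constitutes a cycle, so by (i) it forces the graph to be that pixel (or those two pixels), which is a trivial case handled separately. Several of the three-neighbor possibilities are of this form and are eliminated immediately. Next, for each remaining local pattern in Figures~\ref{fig:hex_two} and~\ref{fig:hex_three}, I would look at the edges the pattern forces into or out of the cycle along each side shared with a neighboring pixel, and check whether any listed possibility for that neighbor (keyed to its own number of neighbors, as supplied by the previous lemma) is compatible with those forced edges. When no compatible neighbor-pattern exists, the original configuration is ruled out; running this local compatibility check uniformly should leave exactly the configurations drawn in Figure~\ref{fig:restricted_possibilities}.

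For the final surviving configuration in Figure~\ref{fig:restricted_possibilities}, the local compatibility check alone does not eliminate it, but it does single out, for each of the three neighboring pixels, a unique allowed behavior. I would then propagate one more step outward: chasing how the cycle must continue through these neighbors shows that the edges between the neighbors and the pixels beyond them are all forced, producing exactly the configuration shown in the rightmost part of Figure~\ref{fig:weird}. The main obstacle is keeping the case analysis organized and symmetric: I would fix an orientation of the central hexagon, enumerate the neighbor arrangement up to rotation and reflection, and use that symmetry to reduce the propagation bookkeeping. The more delicate step is the final propagation, since it has to genuinely leave the central pixel and look at the second ring of pixels to force the ``weird'' pattern.
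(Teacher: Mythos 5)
There is a genuine gap in the central step of your argument. You propose to prune the candidate list from the previous lemma by a \emph{bounded-radius local compatibility check}: for each local pattern, look at the forced edges on the sides shared with neighboring pixels and reject the pattern if no allowed behavior of the neighbor is compatible. But the configurations the paper actually has to eliminate (those containing the pattern at the start of Figure~\ref{fig:bad_1}) are \emph{not} locally incompatible with their neighbors. The paper's argument is an infinite-descent: starting from that three-pixel pattern with its three forced edges, the degree-two constraints at specific vertices, together with the polygonal property (every edge lies on a pixel) and thinness (which selects which of the two candidate pixels can exist), force a chain of further pixels; the one branch of the case analysis that does not immediately violate thinness (Figure~\ref{fig:bad_3}) reproduces a reflected, translated copy of the original pattern one row lower. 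So at every finite stage there \emph{is} a compatible continuation --- your check would never fire --- and the contradiction is only that a finite graph cannot contain the resulting infinite chain of pixels. A one-ring (or any fixed-radius) compatibility test cannot detect this; you need the self-reproducing structure of the forced continuation.

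Two smaller points. First, your step (i) --- discarding patterns that close up a proper sub-cycle on the pixel boundary --- does no work here: none of the surviving candidates from the previous lemma are eliminated for that reason (the zero-neighbor case was already handled there). Second, the elimination of the penultimate three-neighbor scenario, and the propagation from the last pattern of Figure~\ref{fig:restricted_possibilities} to Figure~\ref{fig:weird}, both rely on the induced-subgraph property (two adjacent lattice points in the graph force the edge, hence possibly a pixel, between them) combined with thinness and polygonality to \emph{create} new pixels and edges that then contradict thinness or force the next ring of behavior. Your description of the final propagation step is in the right spirit, but your compatibility check as stated only consults the neighbors' allowed cycle-patterns and never invokes this pixel-creation mechanism, which is what actually drives both the remaining elimination and the forced ``weird'' configuration.
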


\begin{figure}[!htbp]%[H]
  \centering
  \def\scale{0.28}
  \def\sscale{0.22}
  \def\gap{\hspace{0.0in}}
  \subcaptionbox{\label{fig:bad_1}}
   {\includegraphics[scale=\scale]{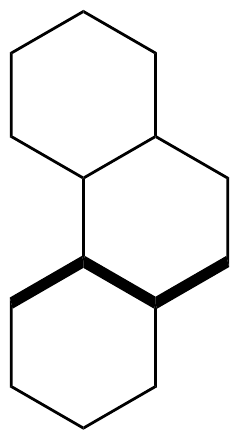}~~
    \includegraphics[scale=\scale]{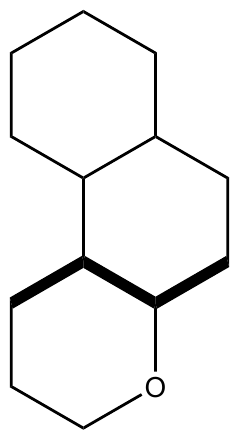}~~
    \includegraphics[scale=\scale]{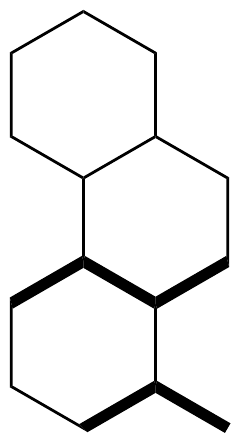}~~
    \includegraphics[scale=\scale]{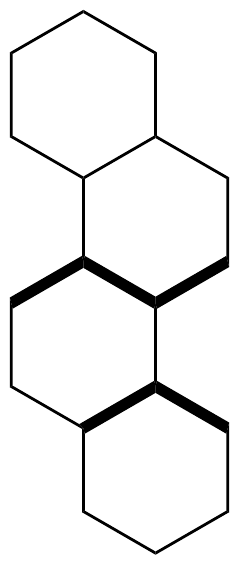}~~
    \includegraphics[scale=\scale]{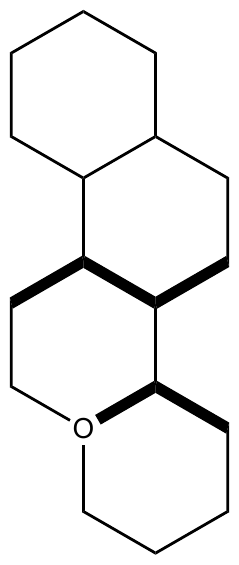}}
    
  \subcaptionbox{\label{fig:bad_2}}
   {\includegraphics[scale=\scale]{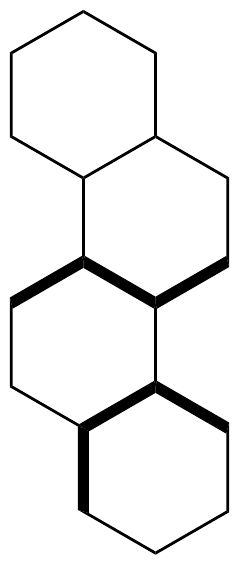}~~
    \includegraphics[scale=\scale]{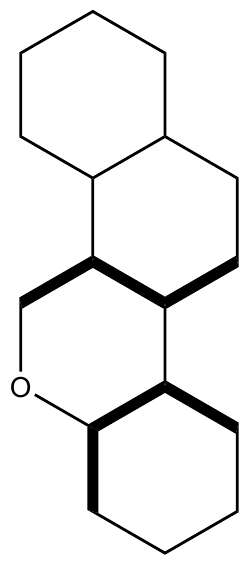}~~
    \includegraphics[scale=\scale]{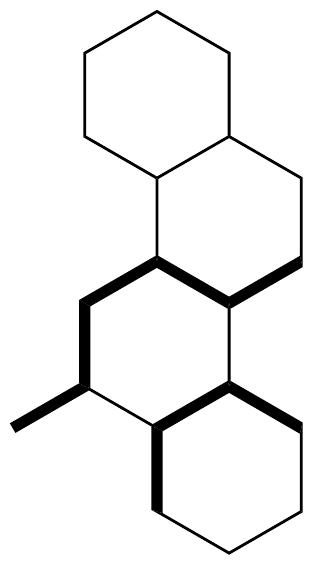}~~
    \includegraphics[scale=\scale]{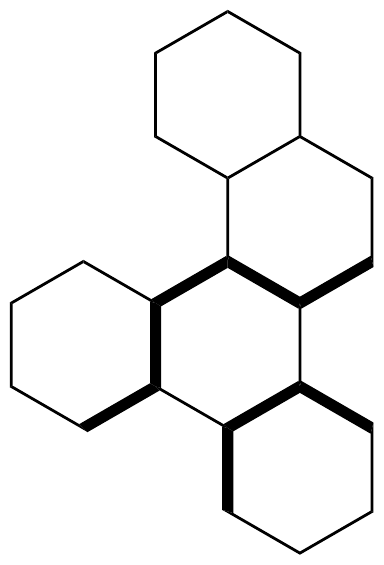}~~
    \includegraphics[scale=\scale]{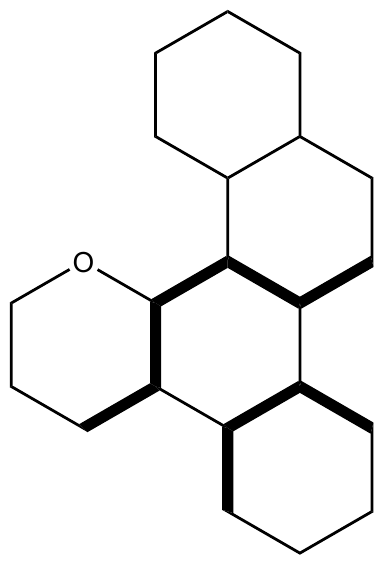}~~
    \includegraphics[scale=\scale]{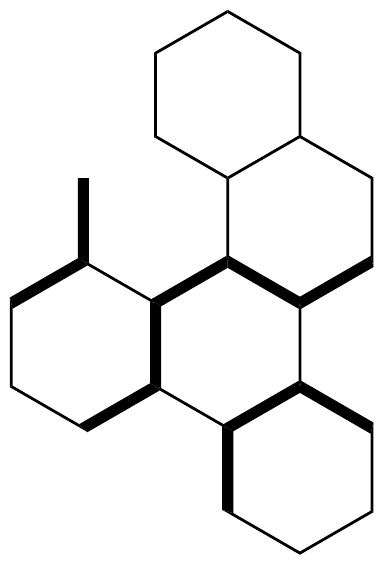}~~
    \includegraphics[scale=\scale]{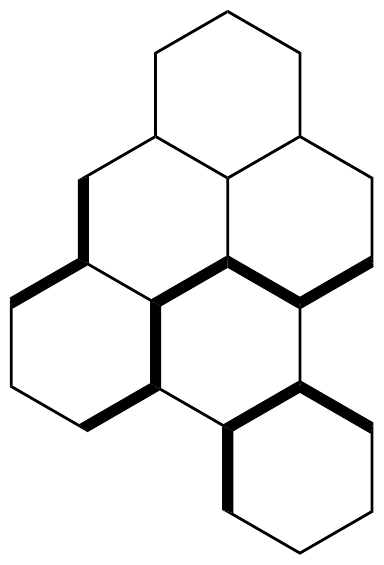}}\hfill
  \subcaptionbox{\label{fig:bad_3}}
   {\includegraphics[scale=\scale]{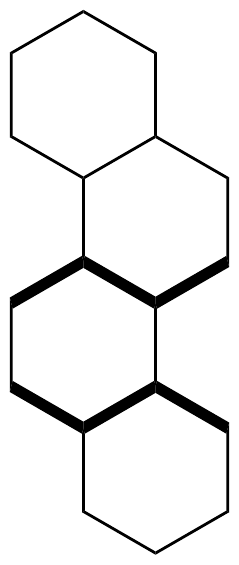}~~
    \includegraphics[scale=\scale]{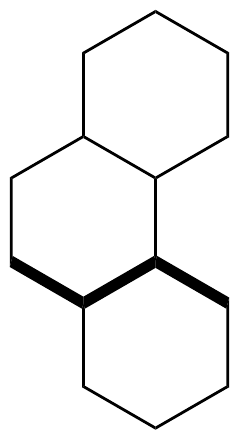}}\hfill
  \subcaptionbox{\label{fig:other_bad}}
   {\includegraphics[scale=\scale]{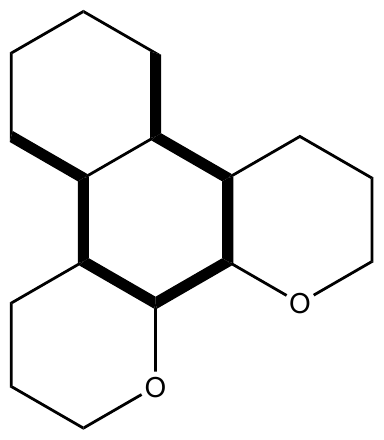}~~
    \includegraphics[scale=\scale]{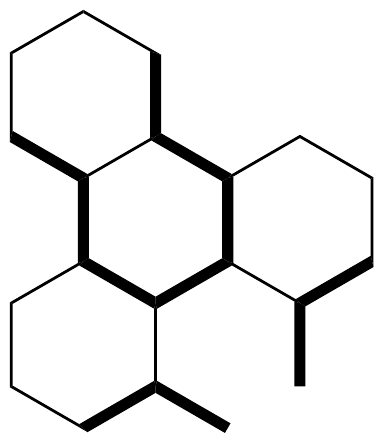}~~
    \includegraphics[scale=\scale]{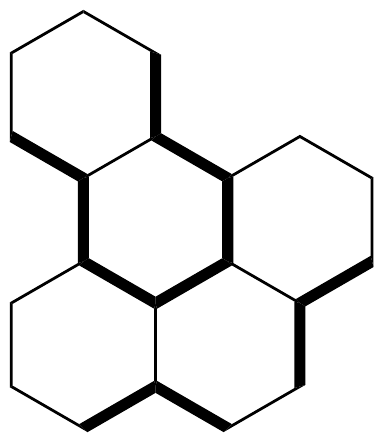}}\hfill
  \subcaptionbox{\label{fig:restricted_possibilities}}
   {\includegraphics[scale=\sscale]{images/hexagonal_polygonal_thin/hex_one}~~
    \includegraphics[scale=\sscale]{images/hexagonal_polygonal_thin/hex_two_a}~~
    \includegraphics[scale=\sscale]{images/hexagonal_polygonal_thin/hex_two_b_1}~~
    \includegraphics[scale=\sscale]{images/hexagonal_polygonal_thin/hex_three_1}~~
    \includegraphics[scale=\sscale]{images/hexagonal_polygonal_thin/hex_three_2}~~
    \includegraphics[scale=\sscale]{images/hexagonal_polygonal_thin/hex_three_5}}
   
  \subcaptionbox{\label{fig:weird}}
   {\includegraphics[scale=\scale]{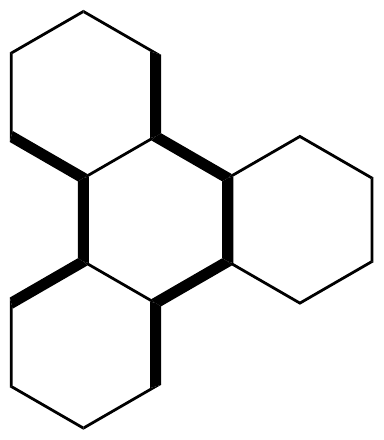}~~
   \includegraphics[scale=\scale]{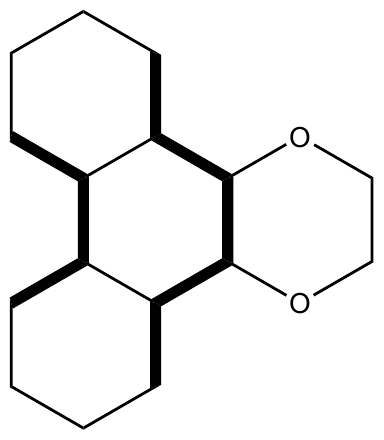}~~
   \includegraphics[scale=\scale]{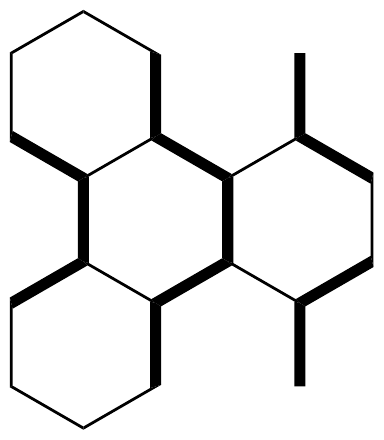}~~
   \includegraphics[scale=\scale]{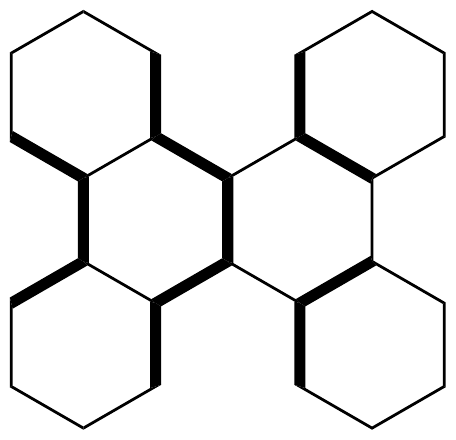}~~
   \includegraphics[scale=\scale]{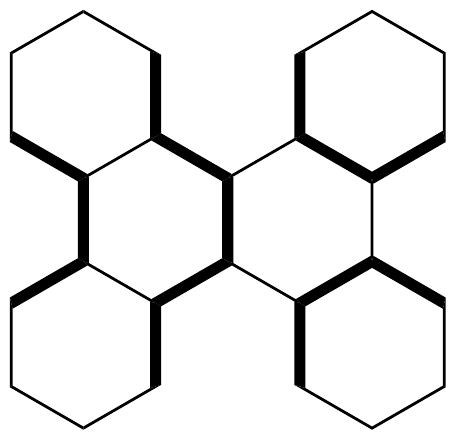}}
  \caption{Less local constraints on hexagons}
\end{figure}

\begin{proof}
Suppose that we have three pixels arranged as shown in the leftmost part of Figure~\ref{fig:bad_1} with the three bold edges definitely included in the Hamiltonian Cycle. Because the circled vertex in the second part of the figure has to have two edges, we can conclude that the situation must be as shown in the third part of the figure. But the bottom right edge must be part of a pixel, and because every vertex must be on the boundary, only one of the two possible pixels will work. This yields the situation in the fourth part of the figure. Consider the circled vertex in the fifth part of the figure. That vertex must have two edges in the Hamiltonian cycle touching it. One is already accounted for, so we just have to decide on the other.

Consider for the sake of contradiction that the chosen edge was as shown in the first part of Figure~\ref{fig:bad_2}. Then because of the circled vertex in the second part of the figure, the situation must be as shown in the third part of the figure. Again, every edge must be part of a pixel, and again only one of the two possible pixels would yield a thin graph. Thus we arrive at the situation in the fourth part of the figure. Because of the circled vertex in the fifth part of the figure, we arrive at the situation in the sixth. But then the dotted edge in the seventh figure must also be in the graph (because no two adjacent vertices can exist in the graph without the edge between them also being in the graph). This, however, yields a graph that is not thin. Thus we have a contradiction.

Therefore, what must actually have happened is shown in the left side of Figure~\ref{fig:bad_3}. Looking only at the bottom three pixels in this new situation and ignoring some of the bold lines, we have the situation in the right half of the figure. Note, however, that this situation is identical (up to reflection and translation down) to the situation at the start of Figure~\ref{fig:bad_1}. Thus, having the pattern in the first part of Figure~\ref{fig:bad_1} necessitates another copy of the same pattern lower down (and flipped). That in turn necessitates another copy lower down, which necessitates another copy lower down, etc... In other words, no finite graph can contain the pattern shown at the start of Figure~\ref{fig:bad_1}.  

As a result of this, many of the ``possible'' local solutions at a pixel are actually not allowed. 

We can restrict the possibilities even more, however. Consider the penultimate scenario from Figure~\ref{fig:hex_three}. Under this scenario, consider the circled vertices in the first part of Figure~\ref{fig:other_bad}. The constraints at those vertices lead to the scenario in the center part of that figure, which in turn leads to the existence of the edge added in the final part of the figure. This contradicts the fact that the grid graph must be thin, so the actual list of possible local solutions does not include the penultimate scenario from Figure~\ref{fig:hex_three}. 

Then the restricted list of possibilities is as shown in Figure~\ref{fig:restricted_possibilities} where we include only those local solutions which omit the pattern at the start of Figure~\ref{fig:bad_1} and where we exclude the penultimate scenario from Figure~\ref{fig:hex_three}.

Finally, we wish to show that the last situation listed in Figure~\ref{fig:restricted_possibilities} directly leads to the situation shown in the last part of Figure~\ref{fig:weird}. The first part of Figure~\ref{fig:weird} is exactly the final situation listed in Figure~\ref{fig:restricted_possibilities}. Consider the circled vertices in the second part of Figure~\ref{fig:weird}. The constraints at these vertices lead to the situation shown in the third part of Figure~\ref{fig:weird}. Next, due to the thin and polygonal properties, the pixels added in the penultimate part of Figure~\ref{fig:weird} must be present in the graph. Consider the rightmost pixel with three neighbors. We have a list of situations that are possible in the vicinity of a pixel with three neighbors (in Figure~\ref{fig:restricted_possibilities}), and only one of those possibilities matches the already chosen edges. Thus, we must have the situation shown in the last part of Figure~\ref{fig:weird}.
\end{proof}

\begin{lemma}
There exists a Hamiltonian cycle in a hexagonal polygonal thin grid graph if and only if there exists a tree of pixels such that every pixel with fewer than three neighbors is in the tree and such that at least one pixel out of every pair of adjacent degree-$3$ pixels is in the tree.
\end{lemma}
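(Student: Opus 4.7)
The plan is to establish the biconditional via the natural correspondence $H = \partial\bigl(\bigcup_{p \in T} p\bigr)$ between Hamiltonian cycles $H$ in $G$ and pixel subsets $T$ of $G$.

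For the forward direction, given a Hamiltonian cycle $H$, let $T$ be the set of pixels of $G$ lying in the bounded component of $\mathbb{R}^2 \setminus H$, choosing the orientation so that pixels with fewer than three neighbors end up inside $T$ (the first three panels of Figure~\ref{fig:restricted_possibilities} show that $H$ locally wraps each such pixel onto a single fixed side, and this side is globally consistent with the Jordan interior because $H$ is a single simple closed curve). Three facts must then be checked. First, every pixel with fewer than three neighbors lies in $T$, by the orientation choice. Second, for every pair of adjacent degree-$3$ pixels, at least one lies in $T$: classify each of the three degree-$3$ panels of Figure~\ref{fig:restricted_possibilities} as ``inside'' or ``outside'' $T$ by inspection, and use the forced extension of Figure~\ref{fig:weird} together with the exclusions from the previous lemma to rule out both pixels of an adjacent degree-$3$ pair being outside $T$ simultaneously. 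Third, $T$ is a pixel tree: $\bigcup_{p \in T} p$ is the connected, simply-connected bounded region enclosed by the Jordan curve $H$, and a simply-connected pixel region with connected pixel-adjacency graph is acyclic as a pixel graph, hence a tree.

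For the reverse direction, given a pixel tree $T$ satisfying the two conditions, let $H = \partial\bigl(\bigcup_{p \in T} p\bigr)$. Because $T$ is a pixel tree, the region is connected and has no holes, so $H$ is a single simple cycle in $G$. To show $H$ is Hamiltonian it suffices to show every vertex $v$ of $G$ is a corner of some $T$-pixel, since thinness (which prevents all three pixels of the infinite grid at $v$ from lying in $G$) then guarantees $v$ actually lies on the boundary. Suppose for contradiction that every pixel of $G$ incident to $v$ lies outside $T$: the two conditions force each such pixel to be degree-$3$ with all three of its pixel-neighbors in $T$. Since in the hexagonal grid any two pixels sharing a vertex also share an edge, the polygonal property (ensuring $v$ is incident to at least one pixel of $G$) combined with the case $k \in \{1,2\}$ for the number of $G$-pixels at $v$ forces either another $G$-pixel at $v$ to be a pixel-neighbor of the first (hence in $T$, contradicting the assumption), or $v$ to have too few incident $G$-pixels for the first pixel to be degree-$3$ (contradicting the degree-$3$ requirement).

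The main obstacle will be the second forward-direction property. Classifying each of the three permitted degree-$3$ configurations of Figure~\ref{fig:restricted_possibilities} as ``inside'' or ``outside'' $T$ is mechanical, but ruling out adjacent outside-outside pairs requires invoking both the exclusions performed in the previous lemma and the global propagation of Figure~\ref{fig:weird} in a coordinated way. This step is precisely where the combinatorial condition ``at least one of each adjacent degree-$3$ pair is in the tree'' acquires its geometric meaning as a shadow of the forbidden propagating pattern.
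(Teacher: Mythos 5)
Your overall strategy --- identifying Hamiltonian cycles with boundaries of pixel trees and checking the two combinatorial conditions via the local classification of Figure~\ref{fig:restricted_possibilities} --- is the same as the paper's, and your reverse direction is essentially the paper's argument recast as a contradiction. But the forward direction has two genuine gaps. First, you cannot ``choose the orientation so that pixels with fewer than three neighbors end up inside $T$'': the bounded component of $\mathbb{R}^2\setminus H$ is what it is, so you must \emph{prove} that every low-degree pixel lies in it. Each local panel only tells you on which side of the nearby arcs of $H$ the pixel sits; it does not tell you whether that is the bounded or unbounded side globally, and different pixels could a priori disagree. The paper closes this with a propagation argument --- adjacent low-degree pixels are on the same side of the cycle, the low-degree pixels flanking a single degree-$3$ pixel are on the same side, and the four low-degree neighbors of an adjacent degree-$3$ pair are on the same side --- so that \emph{all} low-degree pixels share one side, which is then anchored to ``inside'' via the pixels bordering the unbounded face. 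Your remark that the local side is ``globally consistent with the Jordan interior because $H$ is a single simple closed curve'' asserts exactly the thing that needs proof.

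Second, your acyclicity argument rests on a false general principle: a simply-connected union of pixels whose adjacency graph is connected need \emph{not} have an acyclic pixel graph (a hexagon together with its six neighbors is simply connected, yet its pixel graph is full of cycles). Acyclicity here genuinely uses thinness: a cycle of inside-pixels would enclose either another pixel/interior vertex --- impossible in a thin graph --- or a hole of $G$, which is excluded only after one shows that every hole lies \emph{outside} $H$ (the paper deduces this from all low-degree pixels being inside). Relatedly, your identification of $\bigcup_{p\in T}p$ with the bounded component of $\mathbb{R}^2\setminus H$ silently assumes no hole of $G$ sits inside $H$. All of this is repairable using the same-side propagation argument above, but as written the forward direction does not go through.
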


\begin{proof}
First suppose there exists a Hamiltonian cycle in a hexagonal polygonal thin grid graph. 

Based on the local solutions for degree-$1$ and degree-$2$ pixels, whenever one pixel with degree at most two is adjacent to another, the two pixels are on the same side of the cycle (both inside or both outside). Based on the local solutions for degree-$3$ pixels (and based on the specific situation that the final situation listed in Figure~\ref{fig:restricted_possibilities} necessarily leads to), whenever two degree-$1$ or degree-$2$ pixels are adjacent to the same degree-$3$ pixel, the two pixels are on the same side of the cycle. Finally, whenever two degree-$3$ pixels are neighbors, their other four neighbors are all degree-$1$ or degree-$2$ pixels and are all on the same side of the cycle. All together, this implies that all pixels with at most two neighbors must be on the same side of the cycle. And because the unbounded face is outside the cycle, the faces next to it---pixels---must be inside the cycle. 

We can conclude that all pixels with two or fewer neighbors are always inside the cycle. This immediately implies that all holes are outside the cycle.

Clearly, the pixels inside the Hamiltonian cycle are connected. Furthermore, they are acyclic because a cycle inside the Hamiltonian cycle would imply either that the Hamiltonian cycle contains a hole or that the graph contains an interior vertex. Thus the pixels inside the Hamiltonian cycle form a tree.

In addition, it is easy to verify that whenever two degree-$3$ pixels are adjacent, they are not both outside the Hamiltonian cycle.

Next suppose that there exists a tree of pixels such that every pixel with fewer than three neighbors is in the tree and such that at least one pixel out of every pair of adjacent degree-$3$ pixels is in the tree. Then consider the perimeter of this tree of pixels. We claim that the perimeter is a Hamiltonian cycle. Clearly, the perimeter is a cycle, so the only question is whether it touches every vertex. Every vertex is either between two degree-$3$ vertices or adjacent to at least one degree-$2$ or degree-$1$ pixel. In either case, the assumptions are sufficient to show that the vertex is on the perimeter of at least one pixel in the tree. Because the graph is thin, there are no interior vertices; in other words, every vertex on the perimeter of a pixel in the tree is also on the perimeter of the entire tree of pixels. Thus the perimeter of the tree of pixels hits every vertex and is therefore a Hamiltonian cycle.

We have now shown both directions of the desired statement.
\end{proof}

\begin{lemma}
Suppose $G$ is a hexagonal polygonal thin grid graph and $G'$ is the output of the reduction on input $G$. Then $G'$ is a ``yes'' instance of \prob{Max-Degree-$3$ Tree-Residue Vertex-Breaking} if and only if there exists a tree of pixels in $G$ such that every pixel with fewer than three neighbors is in the tree and such that at least one pixel out of every pair of adjacent degree-$3$ pixels is in the tree.
\end{lemma}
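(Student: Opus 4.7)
The plan is to set up a direct correspondence between subsets $B$ of breakable vertices of $G'$ (the candidates for what to break) and their complements $T = V(G') \setminus B$, viewed as sets of pixels in $G$, and then to show that the three conditions on $T$ in the statement correspond one-to-one with the conditions for $B$ to be a valid ``yes'' certificate for \prob{Max-Degree-$3$ Tree-Residue Vertex-Breaking}.

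First I would analyze exactly what happens to $G'$ when a set $B$ of breakable vertices is broken. For each edge $e = \{u,v\}$ of $G'$, a simple case analysis on how many endpoints lie in $B$ shows that: $e$ survives intact if both endpoints are in $T$; $e$ becomes a pendant leaf attached to the endpoint in $T$ if exactly one endpoint lies in $B$; and $e$ becomes an isolated two-vertex component (an edge between two new degree-$1$ vertices) if both endpoints lie in $B$. One needs to check that this description is independent of the order in which the vertices of $B$ are broken, but this is routine. Thus the graph produced by breaking $B$ is $G'[T]$ with some extra pendant leaves attached at $T$-endpoints of $T$-$B$ edges, in disjoint union with one isolated two-vertex component per edge of $G'[B]$.

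From this description, deciding when the result is a tree is straightforward: the attached pendants neither create cycles nor alter connectivity of $G'[T]$, while each isolated edge coming from $G'[B]$ is a separate connected component. So the broken graph is a tree if and only if $B$ is an independent set of $G'$ and $G'[T]$ is a tree. Together with the requirement that every vertex of $B$ be breakable in $G'$ (equivalently, that $T$ contain every pixel with fewer than three neighbors), these conditions match exactly the three conditions on $T$ in the lemma: $T$ contains every degree-$<3$ pixel, $T$ is a tree of pixels, and at least one of every two adjacent degree-$3$ pixels lies in $T$. I expect the only step requiring any real care is verifying the isolated-edge description for $B$-$B$ edges and confirming order-independence; once that is in hand, both directions of the biconditional follow by directly unpacking the above equivalences.
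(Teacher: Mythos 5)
Your proposal is correct and takes essentially the same approach as the paper: the paper likewise describes the broken graph as $G'[T]$ together with pendant degree-$1$ vertices (via its $G'_{remove}$ versus $G'_{break}$ decomposition), uses independence of $B$ (equivalently, the condition on adjacent degree-$3$ pixels) to rule out stranded edges, and matches the breakability requirement to the condition that $T$ contain every pixel with fewer than three neighbors. Your edge-by-edge case analysis is just a slightly more explicit packaging of the same argument.
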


\begin{proof}
First suppose that $G'$ is a ``yes'' instance of \prob{Max-Degree-$3$ Tree-Residue Vertex-Breaking}. Then let $B$ be the set of breakable vertices in $G'$ whose breaking yields a tree. Vertices in $G'$ correspond with pixels in $G$, so define $T$ to be the set of pixels in $G$ that do not correspond to vertices in $B$.

Breaking a vertex in $G'$ corresponds to removing the vertex and then adding some number of degree-$1$ vertices. The result of breaking the vertices of $B$ in $G'$ is a tree $G'_{break}$. If instead of breaking the vertices we simply removed them, we would get a graph $G'_{remove}$ which could instead be obtained by removing the degree-$1$ vertices that were added during the breaking operation. Since removing degree-$1$ vertices from a tree yields a tree, we can conclude that $G'_{remove}$ is a tree. But $G'_{remove}$ has as vertices the pixels of $T$ and as edges the adjacencies between pixels. Thus $T$ forms a tree of pixels.

By the way the reduction was defined, every breakable vertex corresponds to a pixel in $G$ of degree-$3$. Thus $B$ contains only vertices of degree-$3$ and so $T$ contains every pixel that has fewer than three neighbors.

Consider any pair of adjacent degree-$3$ pixels. If the vertices corresponding to both pixels were in $B$ then both vertices would be broken, leading to the edge between those two vertices being disconnected from the rest of the graph. Since breaking the vertices of $B$ in $G'$ yields a tree, this is not the case. Thus, at least one of the two vertices is not in $B$, and so at least one of the pair of pixels is in $T$.

We have shown that $T$ is a tree of pixels in $G$ such that every pixel with fewer than three neighbors is in the tree and such that at least one pixel out of every pair of adjacent degree-$3$ pixels is in the tree. Thus whenever $G'$ is a ``yes'' instance of \prob{Max-Degree-$3$ Tree-Residue Vertex-Breaking} such a tree of pixels must exist.

Next we prove the other direction. Suppose that $T$ is a tree of pixels in $G$ such that every pixel with fewer than three neighbors is in the tree and such that at least one pixel out of every pair of adjacent degree-$3$ pixels is in the tree. Define $B$ to be the set of breakable vertices in $G'$ corresponding to pixels of $G$ that are not in $T$. 

Consider the graph $G'_{remove}$, which is constructed from $G'$ by removing every vertex in $B$. The vertices of this graph are the pixels in $T$ and the edges are adjacencies of pixels. Thus, since $T$ is a tree of pixels, $G'_{remove}$ is a tree. 

Also consider the graph $G'_{break}$ that results from breaking the vertices in $B$. Breaking a vertex can be accomplished by removing it and then adding some number of degree-$1$ vertices. Therefore the graph $G'_{break}$ could also be constructed from $G'_{remove}$ by adding some number of degree-$1$ vertices. If every vertex that is added has as its sole neighbor a vertex from $G'_{remove}$ then the addition of these degree-$1$ vertices maintains the tree property, allowing us to conclude that $G'_{break}$ is a tree.

Note that by the definition of $T$, no two vertices in $B$ are adjacent. Suppose $v'$ is a degree-$1$ vertex created to neighbor vertex $u \in G'$ during the breaking operation of vertex $v \in B$. Since $B$ contains no adjacent pairs of vertices, $u \not\in B$, and therefore $u$ is never broken when constructing $G'_{break}$. Thus the sole neighbor of $v'$ in $G'_{break}$ is $u$, which is a vertex from $G'_{remove}$. As stated above, applying this logic to every $v'$ allows us to assert that $G'_{break}$ is a tree, and therefore (since the maximum degree of vertices in $G'$ is at most $3$) that $G'$ is a ``yes'' instance of \prob{Max-Degree-$3$ Tree-Residue Vertex-Breaking}.

We have shown both directions, thus proving the desired equivalence.
\end{proof}

Putting the last two lemmas together, we conclude that the given reduction is correct.

\FloatBarrier
\section{Hexagonal Thin Grid Graph Hamiltonicity is NP-complete}
\FloatBarrier
\label{sec:hex thin}

In this section, we show that the \prob{Hexagonal Thin Grid Graph Hamiltonicity} problem is NP-complete. Membership in NP is trivial, while NP-hardness follows via reduction from \prob{$6$-Regular Breakable Planar Tree-Residue Vertex-Breaking}.

Recall the definition of the \prob{Tree-Residue Vertex-Breaking} problem from Section~\ref{hex_grids_subsection}:

\trvbprob*

The problem we will reduce from is one of the variants of \prob{Tree-Residue Vertex-Breaking} from \cite{trvb}:

\begin{theorem} {\rm \cite{trvb}}
The \prob{$6$-Regular Breakable Planar Tree-Residue Vertex-Breaking} problem asks the same question as the \prob{Tree-Residue Vertex-Breaking} problem, but restricts the inputs to be a planar multigraph whose vertices are each breakable and have degree exactly~$6$. The \prob{$6$-Regular Breakable Planar Tree-Residue Vertex-Breaking} problem is NP-complete.
\end{theorem}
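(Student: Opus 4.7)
Since this theorem is stated as a cited result from~\cite{trvb}, no proof appears in the present paper; my proposal is for how one could establish it from scratch.

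Membership in NP is immediate: a short certificate is the subset $B$ of breakable vertices to break, and given $B$ one can verify in polynomial time that the resulting multigraph is connected and acyclic. For NP-hardness, I would reduce in stages, starting either from an already NP-hard Tree-Residue Vertex-Breaking variant or, if proving NP-hardness of TRVB itself, from a classical planar NP-hard problem such as \prob{Planar $1$-in-$3$ SAT}. The intuition is that breaking a vertex is a local operation that simultaneously cuts cycles and preserves connectivity in a controlled way, so a tree-residue can naturally encode a consistent set of binary choices, which I would exploit to build variable and clause gadgets.

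The core of the proof is gadget design. I would build three types of local, planar gadgets: (i) an ``unbreakable-to-breakable'' gadget that replaces each unbreakable vertex by a small all-breakable subgraph in which every tree-residue of the whole instance is forced to leave a designated vertex of the gadget unbroken, so the gadget simulates an unbreakable vertex; (ii) a ``degree-padding'' gadget that attaches to a vertex of degree less than $6$, raising its degree to exactly $6$ without affecting feasibility; and (iii) a planar-embedding argument showing that all these replacements can be performed locally inside the existing embedding, preserving planarity. Composing the gadgets transforms an arbitrary input instance into a $6$-regular, all-breakable, planar TRVB instance in polynomial time, and I would prove a biconditional ``yes-instance'' correspondence by showing gadget outputs are in bijection with ``yes'' witnesses.

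The main obstacle I anticipate is the simultaneous satisfaction of all three restrictions. Planarity is straightforward to maintain by local substitution, but designing all-breakable and exactly-$6$-regular gadgets that are \emph{semantically neutral}---so that every tree-residue of the augmented instance restricts to a tree-residue of the original, and vice versa---is delicate. Breaking a vertex changes connectivity and acyclicity globally, so verifying a gadget's correctness requires a careful case analysis over which subsets of gadget vertices can be broken in any overall tree-residue, together with a counting argument (using the identity $n + D - k - 1 = m$ relating the vertex count, the total degree broken, the number of vertices broken, and the edge count of a tree-residue) to pin down the forced structure of gadget behaviour.
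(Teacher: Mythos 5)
This theorem is imported from~\cite{trvb} and the present paper contains no proof of it, so there is no in-paper argument to compare against; what can be judged is whether your from-scratch sketch would plausibly go through. Your NP-membership argument is fine, and your instinct to use the tree-counting identity (breaking a set $B$ of degree-$6$ vertices leaves the edge count fixed while adding $5|B|$ vertices, so a tree residue forces $|B| = (E - V + 1)/5$) is exactly the kind of rigidity one exploits in such gadgets. The overall shape---start from a planar SAT-type problem, build variable/clause gadgets out of breakable vertices, then massage the instance into the restricted form---is also the right genre of argument.

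However, the two normalization gadgets you describe are the actual mathematical content, and as stated they would not work. Gadget (ii), ``degree-padding\ldots without affecting feasibility,'' is not a local no-op: any edge you add either closes a new cycle (which must then be broken by some vertex, perturbing the forced count $|B|$ and potentially the connectivity analysis) or dangles to a fresh vertex, and that fresh vertex is itself required to be breakable and of degree exactly $6$, so the padding recurses. Similarly, gadget (i) must force a breakable vertex to stay unbroken using only other breakable degree-$6$ vertices; the standard trick (arrange that breaking it would disconnect the residue, e.g., via parallel edges to a pendant structure) again manufactures new vertices whose degrees must be fixed at $6$ and which must themselves be controlled. Achieving all three constraints simultaneously---planar, all-breakable, exactly $6$-regular---while keeping a clean biconditional is precisely where the work lies, and your proposal names this difficulty without resolving it. So the proposal is a reasonable outline of strategy but not yet a proof; to make it one you would need to exhibit concrete gadgets and verify, via the counting identity and a connectivity case analysis, that every tree residue of the transformed instance restricts to one of the original and conversely.
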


The main work of this reduction is constructing a gadget to simulate a degree-$6$ breakable vertex. The desired behavior of the gadget is shown in Figure~\ref{fig:degree_6_gadget_idea}. If we define a wire to be a path of pixels then the idea of the gadget is to connect $6$ incoming wires in such a way that the local constraints on a hypothetical Hamiltonian cycle allow only two possible solutions within the gadget. In one of the locally allowed solutions (Figure~\ref{fig:degree_6_connecting}) the regions inside the six wires are connected through the gadget while in the other (Figure~\ref{fig:degree_6_disconnecting}) the regions are all disconnected at the gadget. Note that the gadget shown in Figure~\ref{fig:degree_6_gadget_idea} is only a schematic and cannot be used as the actual gadget for the reduction because it does not lie on the hexagonal grid.

\begin{figure}[!htbp]%%[H]
  \centering
  \def\scale{.3}
  \begin{subfigure}[b]{1.6in}
    \centering
    \includegraphics[scale=\scale]{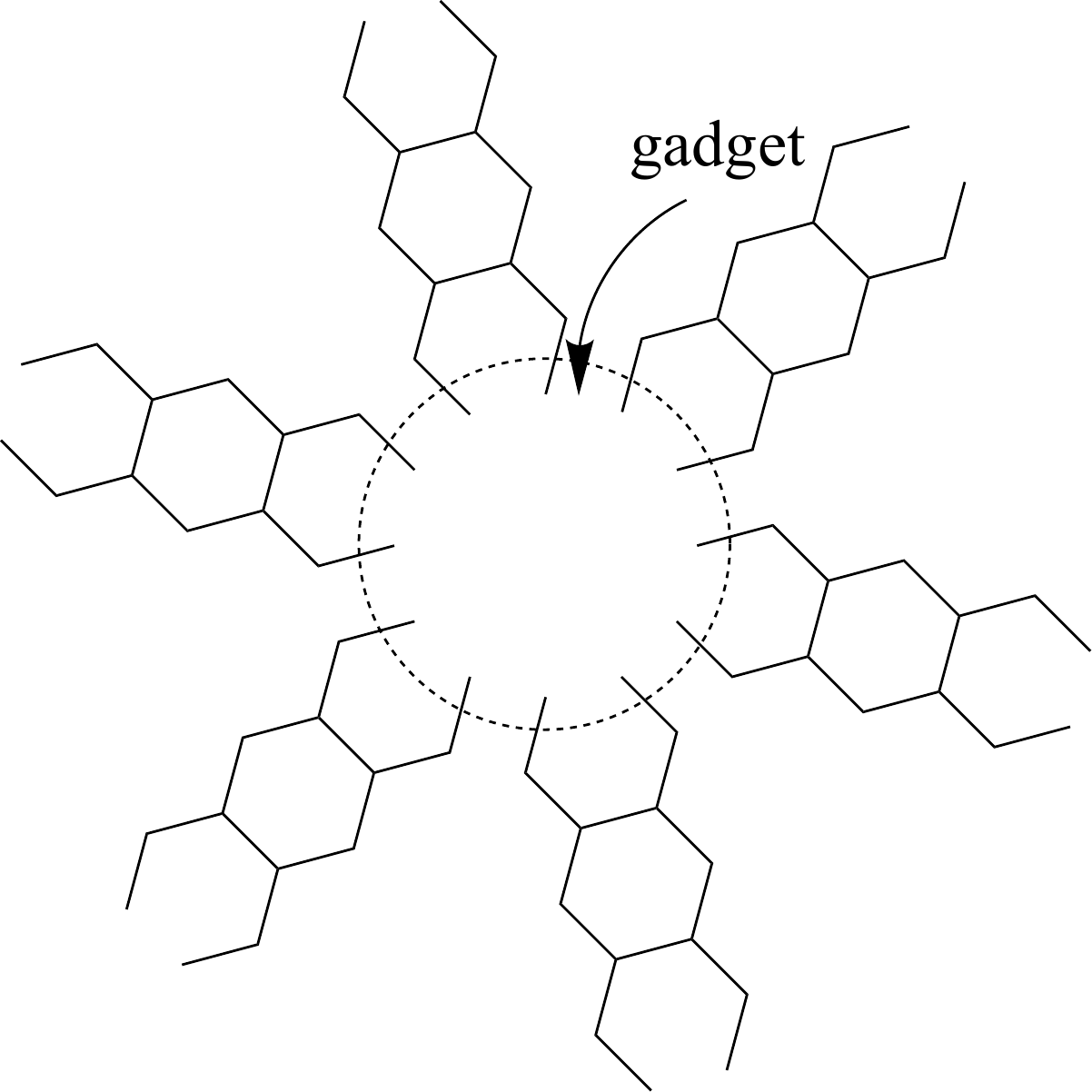}
    \caption{A degree-$6$ breakable gadget connects $6$ wires with only two local solutions}
  \end{subfigure}\hfill
  \begin{subfigure}[b]{1.6in}
    \centering
    \includegraphics[scale=\scale]{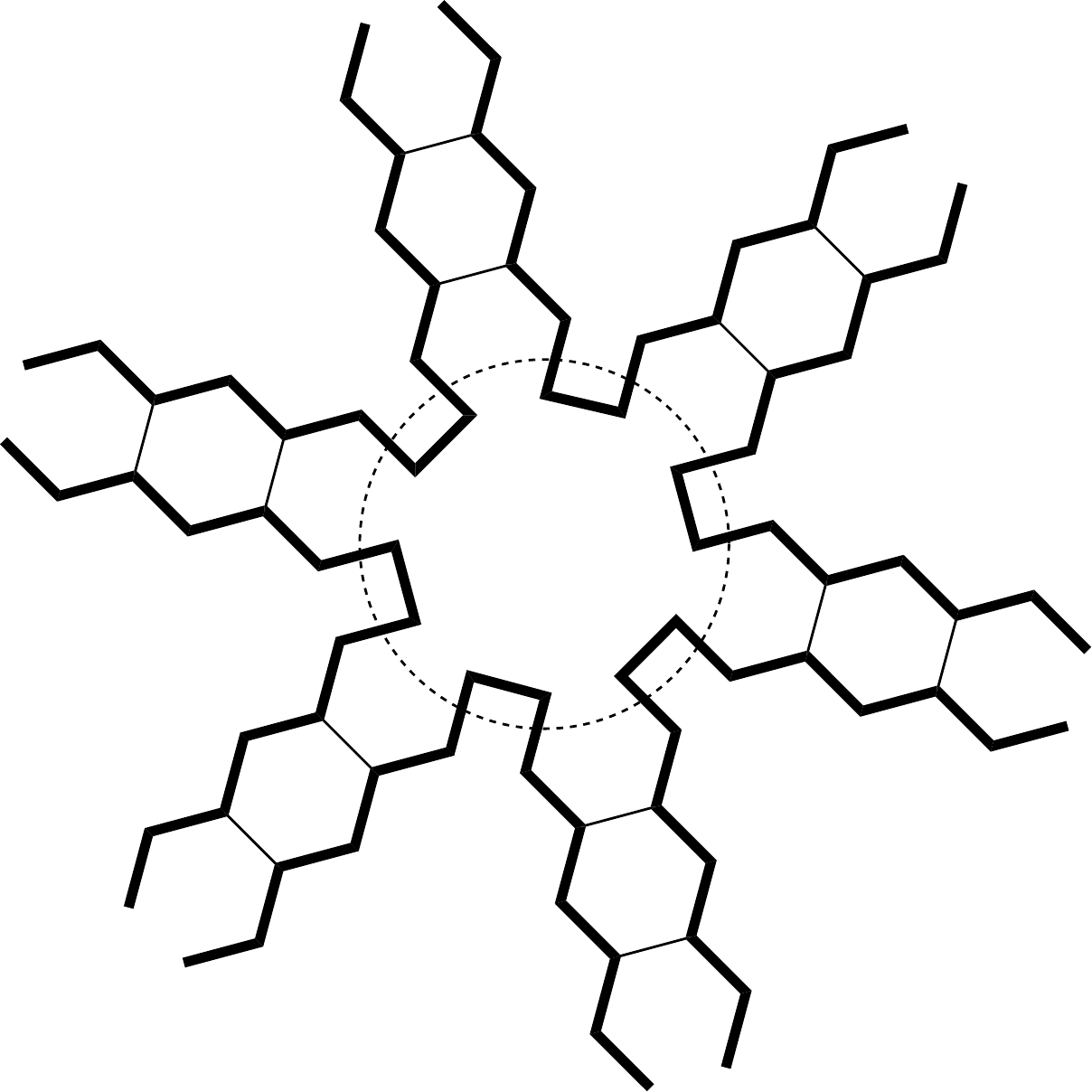}
    \caption{One solution of the gadget connects the regions inside the wires}
    \label{fig:degree_6_connecting}
  \end{subfigure}\hfill
  \begin{subfigure}[b]{1.6in}
    \centering
    \includegraphics[scale=\scale]{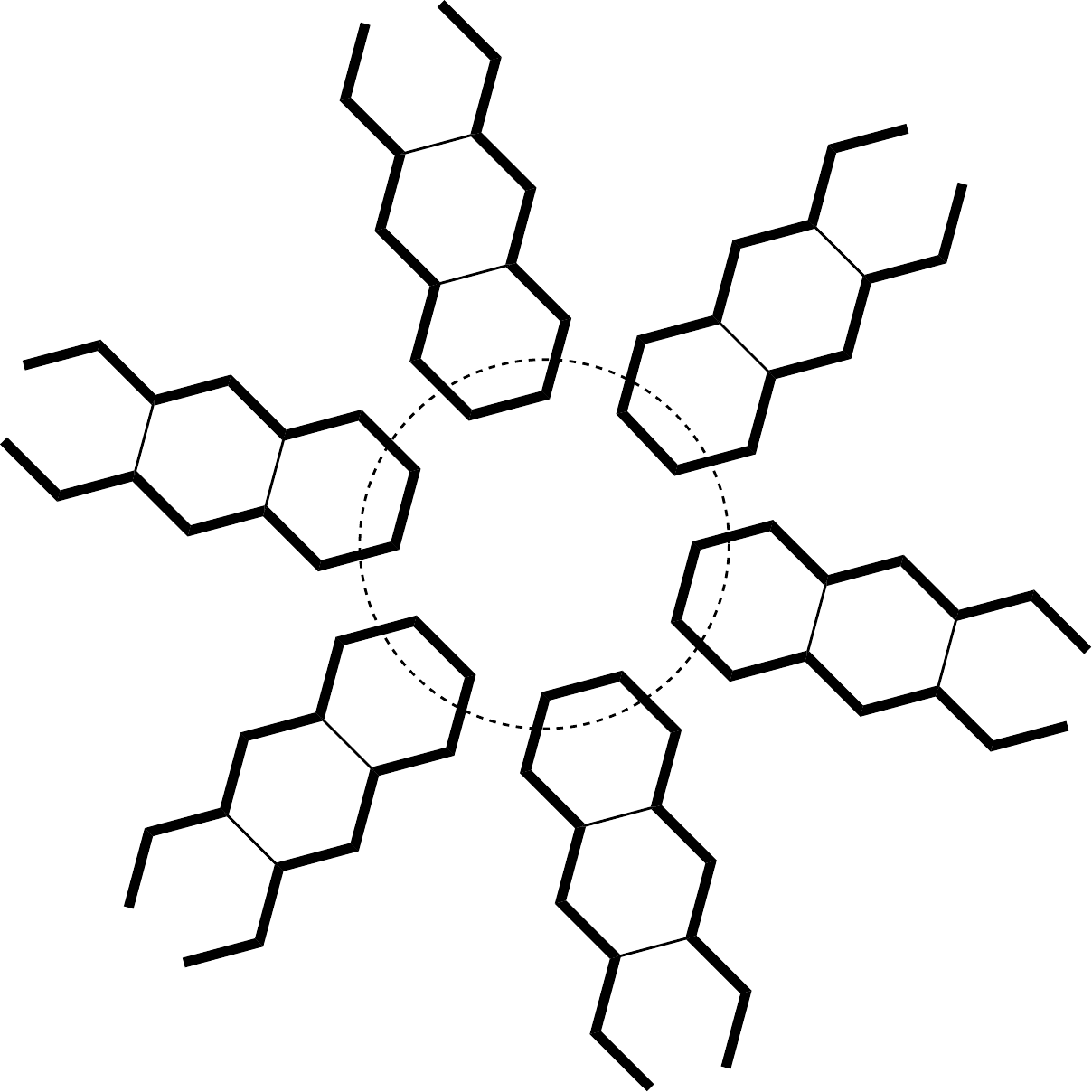}
    \caption{The other solution disconnects the regions inside the wires}
    \label{fig:degree_6_disconnecting}
  \end{subfigure}
  \caption{Desired behavior of a degree-$6$ breakable vertex gadget}
  \label{fig:degree_6_gadget_idea}
\end{figure}

Below, we will (1) provide and analyze a reduction from the \prob{$6$-Regular Breakable Planar Tree-Residue Vertex-Breaking} problem under the assumption of the existance of a degree-$6$ breakable vertex gadget and (2) provide the gadget and prove that it has the desired behavior.

\subsection{Reduction}

Suppose we have a degree-6 breakable vertex gadget with the desired behavior. Then to complete the reduction from the \prob{$6$-Regular Breakable Planar Tree-Residue Vertex-Breaking} problem, we simply lay out the given multigraph in the plane, replace every vertex with the degree-$6$ breakable vertex gadget, and extend a wire for each edge from the gadget of one endpoint to the gadget of the other. Since finding a planar embedding for a graph is a polynomial time operation (provided one exists), such a reduction can be completed in polynomial time.

Below, we define more precisely what constraints a degree-$6$ vertex gadget must satisfy and then prove that the above reduction is correct. In short, the idea is that there is a correspondence between breaking a vertex and choosing the gadget solution shown in Figure~\ref{fig:degree_6_disconnecting}; under this correspondence the shape of the region inside the candidate set of edges is the same as the post-breaking multigraph. This region is connected and hole-free if and only if the post-breaking graph is connected and acyclic. Since a region has as its boundary a single cycle if and only if the region is connected and hole-free, we conclude that the candidate set of edges is a Hamiltonian cycle if and only if the answer to the corresponding \prob{$6$-Regular Breakable Planar Tree-Residue Vertex-Breaking} instance is ``yes.''

Suppose we start with an instance of \prob{$6$-Regular Breakable Planar Tree-Residue Vertex-Breaking} consisting of multigraph $M$ and produce via the above reduction a grid graph $G$.

\begin{definition}
Define a \emph{candidate solution} for the \prob{Hexagonal Thin Grid Graph Hamiltonicity} instance $G$ to be a set of edges $C$ in $G$ satisfying the following constraints: (1) every vertex in $G$ is the endpoint of exactly two edges in $C$ and (2) no one gadget or wire contains a cycle of edges from $C$ entirely inside it.
\end{definition}

Note that failing to satisfy either of the constraints in the above definition is sufficient to disqualify a set of edges from being a Hamiltonian cycle. Thus the question of whether a Hamiltonian cycle exists in $G$ is the same as the question of whether a candidate solution consisting of just one cycle exists. 

Before, we described a degree-$6$ breakable vertex gadget by saying that the local constraints on a hypothetical Hamiltonian cycle allow only two possible solutions within the gadget. To make this more precise, intersecting the set of edges in the gadget with the set of edges in a candidate solution should only have two possible results. As before, the two possibilities are shown in Figure~\ref{fig:degree_6_gadget_idea}. We will refer to the posibilities in Figure~\ref{fig:degree_6_connecting} and Figure~\ref{fig:degree_6_disconnecting} as the connecting and disconnecting solutions respectively. We will refer to a vertex gadget using one of these solutions as a connecting or disconnecting vertex gadget.

Consider a wire between two gadgets and let $C$ be any candidate solution. In both possible solutions of a gadget, there are two edges from the gadget entering into the wire. Thus two edges from $C$ must enter the wire from each end. Simply applying the definition of a candidate solution (no cycles within a wire and every vertex must touch two edges), we can conclude that the edges in the wire that belong to $C$ must be the boundary edges of the wire. For example, see Figure~\ref{fig:wire_example}.

\begin{figure}[!htbp]%%[H]
  \centering
  \def\scale{1}
  \begin{subfigure}[b]{2.5in}
    \centering
    \includegraphics[scale=\scale]{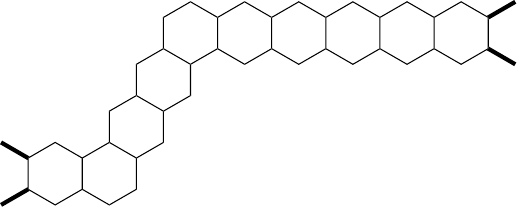}
    \caption{}
  \end{subfigure}\hfill
  \begin{subfigure}[b]{2.5in}
    \centering
    \includegraphics[scale=\scale]{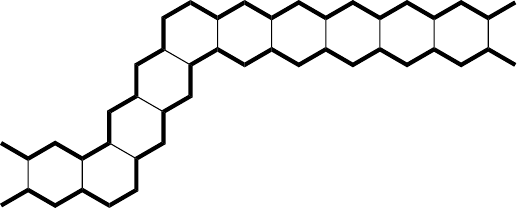}
    \caption{}
  \end{subfigure}
  \caption{The wire shown connects two degree-$6$ breakable vertex gadgets. The bold edges on the left are the edges that must belong to any candidate solution because of the behavior of the degree-$6$ vertex gadgets. The bold edges on the right show the resulting forced behavior in the wire.}
  \label{fig:wire_example}
\end{figure}

As a result, a candidate solution is completely constrained by the choice of behavior at each degree-$6$ vertex gadget. We can identify the disconnecting solution of a vertex gadget with the choice of breaking the corresponding vertex and identify the connecting solution with the choice of leaving the vertex unbroken. Under this correspondence we have a bijection between the choice of candidate solution in the \prob{Hexagonal Thin Grid Graph Hamiltonicity} instance $G$ and the choice of what vertices to break in the \prob{$6$-Regular Breakable Planar Tree-Residue Vertex-Breaking} instance $M$. We show below that the candidate solutions which are Hamiltonian cycles correspond under this bijection with the choices of vertices to break whose breaking converts $M$ into a tree. As a result, a Hamiltonian cycle exists in $G$ if and only if it is possible to break vertices in $M$ so as to obtain a tree, so we conclude that the reduction is correct.

By definition, a candidate solution $C$ is a disjoint cycle cover of $G$ (since every vertex has two edges in $C$ incident on it). Therefore $C$ seperates the plane into several regions. Among these regions, consider the ones which contain the interior of a wire. Let $R$ be the union of these regions. 

Looking at the possible behaviors of a candidate solution in a vertex gadget, it is easy to see that $R$ will consist of the interior of each wire, one connected hole-free region from each connecting vertex gadget, and six connected hole-free regions from each disconnecting vertex gadget. In fact, the boundary of $R$ is exactly the set of edges $C$. 

Consider the above decomposition of $R$ into sub-regions. The choice of candidate solution $C$ corresponds with a set $S$ of vertices to break in $M$. Let $M_S$ be the version of $M$ with the vertices of $S$ broken. The sub-regions of $R$ exactly correspond with the vertices and edges of $M_S$. If $u$ and $v$ are vertices in $M$ then the edge $(u,v)$ in $M$ also occurrs in $M_S$ and corresponds to the interior of the wire between the vertex gadgets for $u$ and $v$. If $u \not\in S$ is a vertex of $M$ then $u$ is also a vertex in $M_S$ and corresponds to the sub-region of $R$ from the vertex gadget for $u$. If $u \in S$ is a vertex of $M$ then there are six vertices in $M_S$ that result from the breaking of $u$, and these six vertices corresponds to the six sub-regions of $R$ from the vertex gadget for $u$.

Clearly, the vertices and edges of $M_S$ correspond to the sub-regions of $R$. Furthermore, it can be easily verified that two sub-regions of $R$ touch if and only if those two sub-regions correspond to a vertex $v$ and an edge $e$ in $M_S$ such that the $v$ is an endpoint of $e$. Furthermore, there is no hole in $R$ at such a point of contact between two sub-regions. We show below using these facts that $M_S$ is connected if and only if $R$ is connected and that $M_S$ is acyclic if and only if $R$ is hole-free.

$M_S$ is connected if and only if there exists a path through $M_S$ from any edge or vertex $a$ to any other edge or vertex $b$. Such a path can be expressed as a list of edges and vertices $a=x_0, x_1, \ldots, x_k = b$ such that every consecutive pair $x_i$ and $x_{i+1}$ consists of one edge and one vertex that is an endpoint of that edge. We can create a corresponding list of sub-regions of $R$ of the form $y_0, y_1, \ldots, y_k$ where $y_i$ is the sub-region corresponding to edge or vertex $x_i$. Note that in this list, every consecutive pair $y_i$ and $y_{i+1}$ consists of two touching subregions. Thus we see that $M_S$ is connected if and only if for every pair of sub-regions $a', b'$, there exists a list of the form $a'=y_0, y_1, \ldots, y_k=b'$ with $y_i$ touching $y_{i+1}$ for each $i$. Since every sub-region is itself connected, this last condition is equivalent to the condition that there exist a path in $R$ between any two points; or in other words, $M_S$ is connected if and only if $R$ is connected.

Consider any cycle in $M_S$ consisting of vertices and edges $x_0, x_1, \ldots, x_k = x_0$. This cycle corresponds to a cycle of sub-regions $y_0, y_1, \ldots, y_k = y_0$. Such a cycle of sub-regions will have an inner boundary, or in other words a hole. On the other hand, for any hole in $R$, we can list the subregions going around that hole: $y_0, y_1, \ldots, y_k = y_0$. Since each sub-region is hole free and each sub-region contact point does not have a hole, this list of subregions will correspond to a cycle of vertices and edges $x_0, x_1, \ldots, x_k = x_0$ in $M_S$. Thus we see that $M_S$ is acyclic if and only if $R$ is hole-free.

$M_S$ is a tree, or in other words connected and acyclic, if and only if $R$ is hole-free and connected. But $R$ is hole-free and connected if and only if the boundary of $R$ (which happens to be $C$) is exactly one cycle. Since $C$ is a cycle cover of $G$, $C$ is a Hamiltonian cycle if and only if $C$ consists of exactly one cycle. Putting that all together, we see that $M_S$ is a tree if and only if $C$ is a Hamiltonian cycle in $G$.

As you can see, the constructed grid graph is Hamiltonian if and only if the input \prob{$6$-Regular Breakable Planar Tree-Residue Vertex-Breaking} instance is a ``yes'' instance, and therefore the reduction is correct. 

\subsection{Degree-$6$ breakable vertex gadget}

Thus, all that is left is to demonstrate a degree-$6$ breakable vertex gadget of the desired form. The gadget we will use is shown in Figure~\ref{fig:degree_6_gadget}.

\begin{figure}[!htbp]%%[H]
  \centering
  \begin{subfigure}[b]{2.5in}
    \centering
    \includegraphics[scale=0.6]{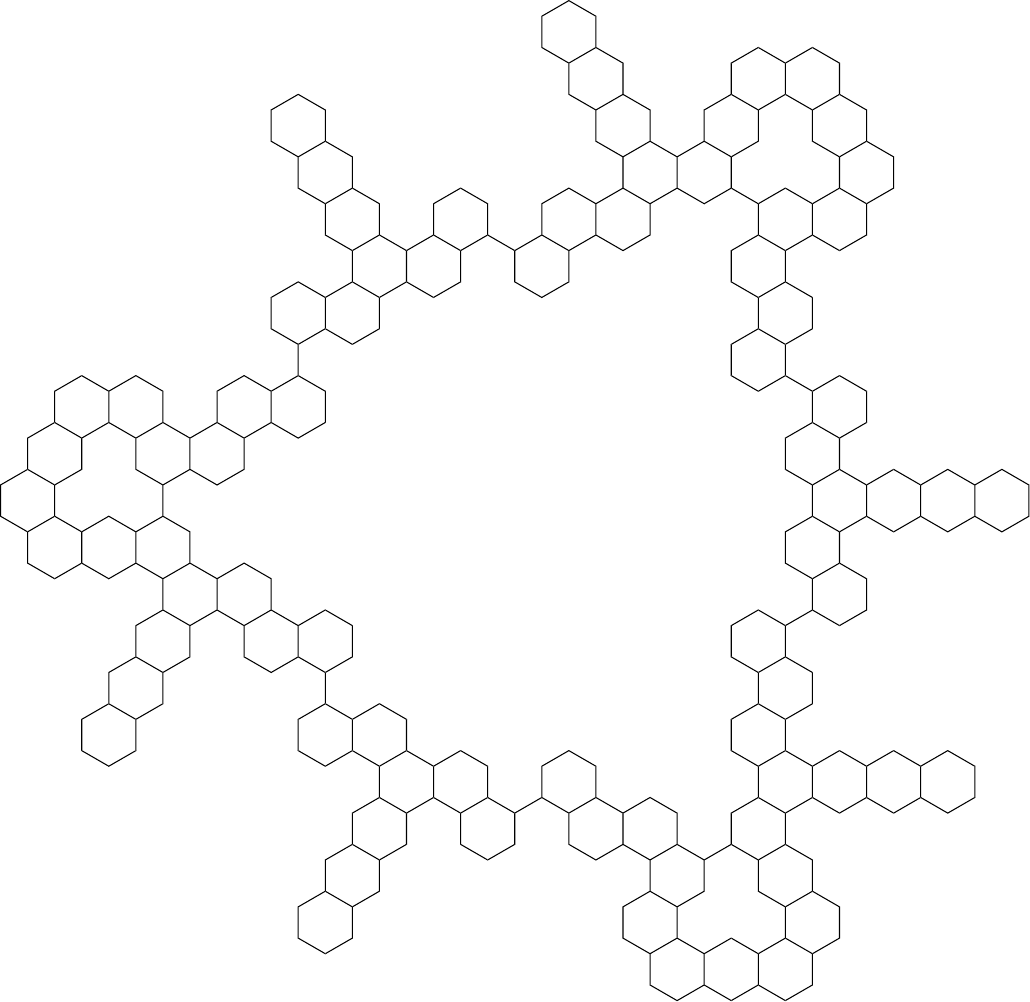}
    \caption{A degree-$6$ breakable vertex gadget (with six wires).}
    \label{fig:degree_6_gadget}
  \end{subfigure}\hfill
  \begin{subfigure}[b]{2.5in}
    \centering
    \includegraphics[scale=0.6]{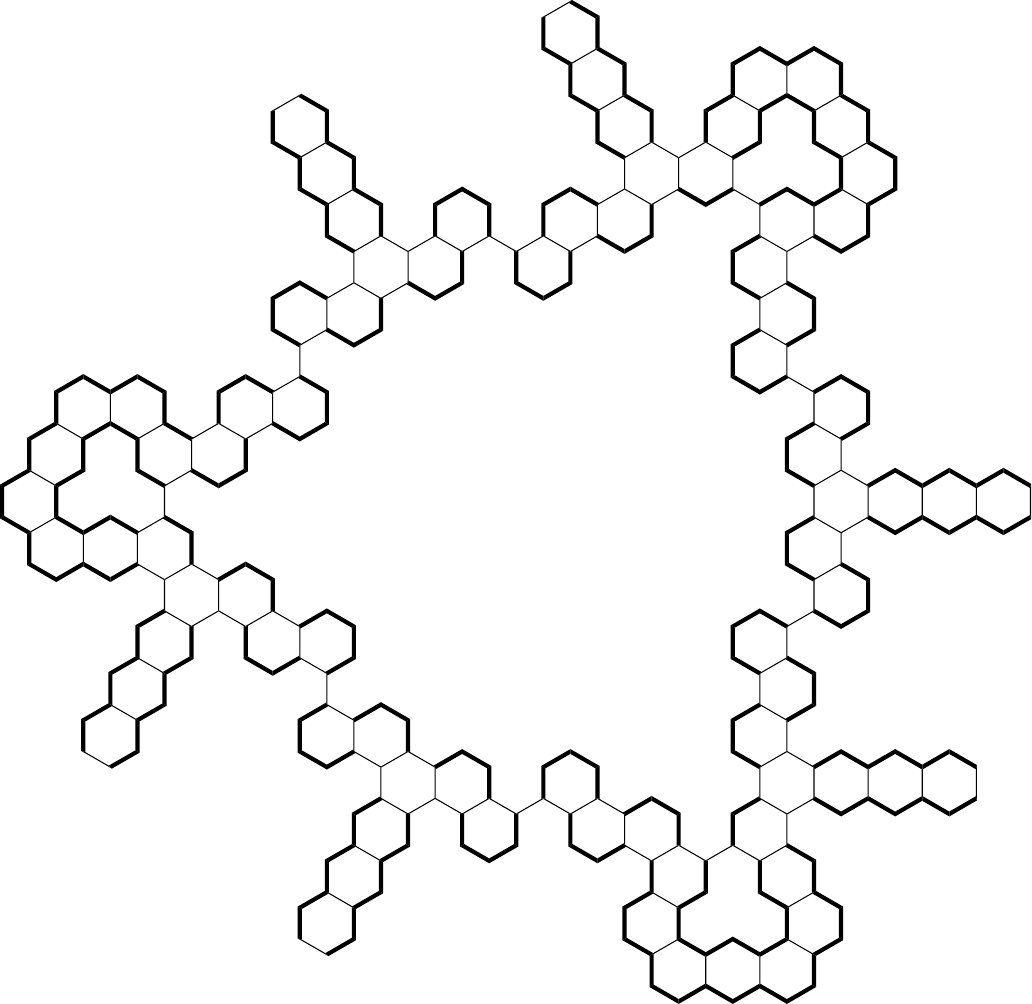}
    \caption{The gadget with edges that must be in a Hamiltonian cycle in bold.}
    \label{fig:degree_6_gadget_solution_forced}
  \end{subfigure}\hfill
  \begin{subfigure}[b]{2.5in}
    \centering
    \includegraphics[scale=0.6]{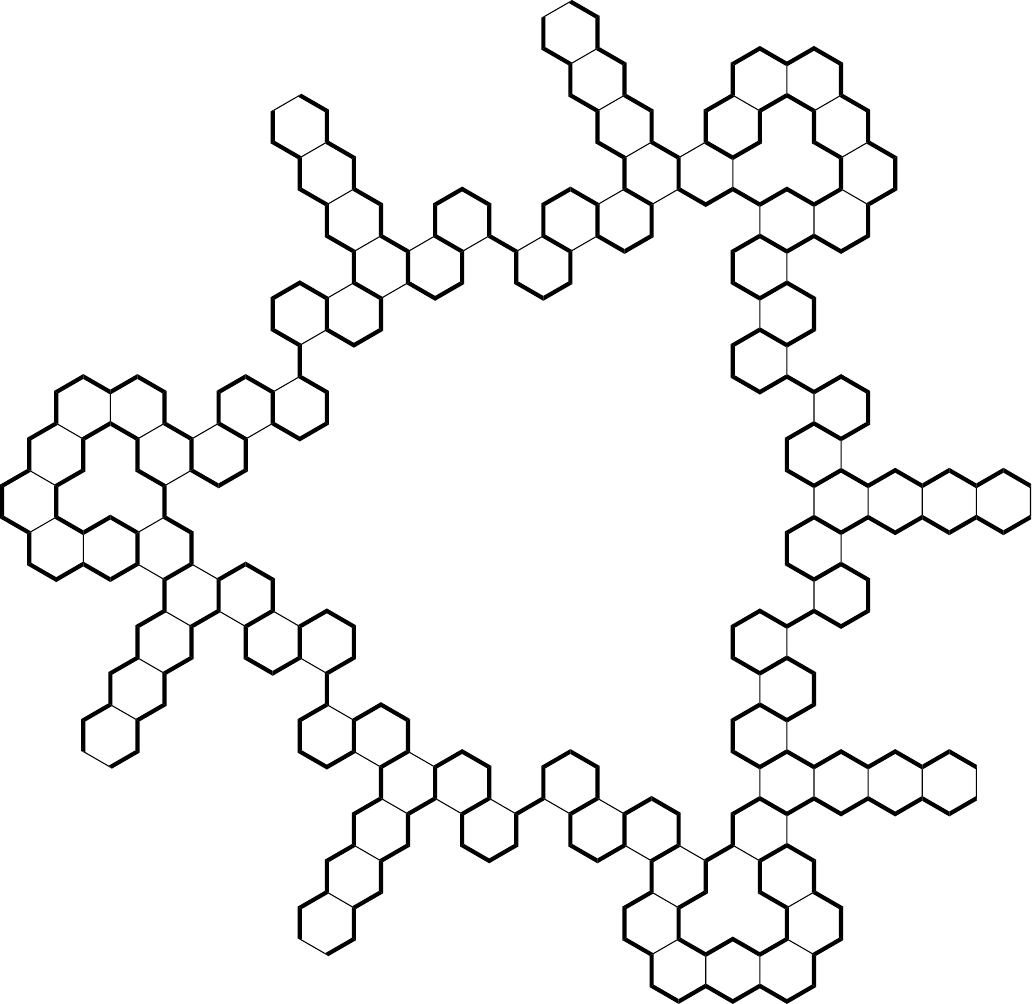}
    \caption{The first solution of the gadget with edges chosen for the Hamiltonian cycle bold. Note that the regions inside the wires are connected via this gadget.}
    \label{fig:degree_6_gadget_solution_1}
  \end{subfigure}\hfill
  \begin{subfigure}[b]{2.5in}
    \centering
    \includegraphics[scale=0.6]{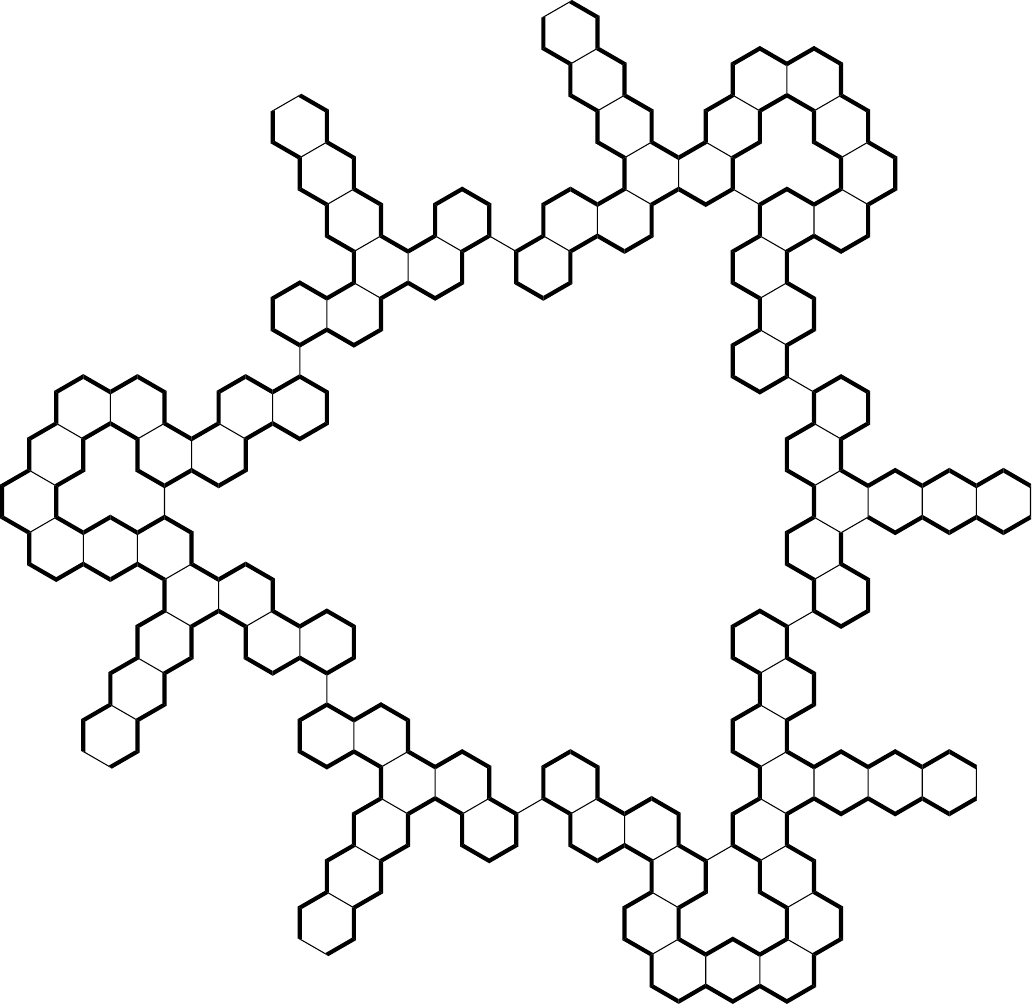}
    \caption{The second solution of the gadget with edges chosen for the Hamiltonian cycle bold. Note that the regions inside the wires are disconnected via this gadget.}
    \label{fig:degree_6_gadget_solution_2}
  \end{subfigure}
  \caption{A degree-$6$ breakable vertex gadget together with the two possible solutions.}
  \label{fig:degree_6_gadget_all}
\end{figure}

We show below that the gadget only has the two desired solutions.

\begin{theorem}
The gadget shown in Figure~\ref{fig:degree_6_gadget} has exactly two possible solutions and they correspond with the solutions shown in Figure~\ref{fig:degree_6_gadget_idea}.
\end{theorem}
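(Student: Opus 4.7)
The plan is to establish the theorem by case analysis on the admissible local behaviors of a Hamiltonian cycle at each pixel of the gadget, starting from edges that are forced and propagating inward until the only unresolved choices produce precisely the two solutions claimed.

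First, I would classify the admissible local cycle patterns at each pixel of the gadget, mirroring the analysis carried out for polygonal thin grid graphs in Section~\ref{sec:polygonal thin}. The three-pixel configuration of Figure~\ref{fig:three_hexes} is forbidden because its central vertex would be interior, violating thinness; hence every pixel of the gadget has at most three pixel-neighbors, and the exhaustive patterns of Figures~\ref{fig:hex_one}, \ref{fig:hex_two}, and \ref{fig:hex_three} apply. The forbidden-pattern arguments of the same section, which rely almost entirely on thinness, further narrow the admissible patterns to the list in Figure~\ref{fig:restricted_possibilities}; the one step in that analysis that invoked the polygonal property I would replace with a direct inspection of the gadget.

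Next, I would identify the edges forced to appear in any candidate solution. Each of the six wires ends in degree-one and degree-two pixels whose admissible patterns are essentially unique (Figures~\ref{fig:hex_one} and \ref{fig:hex_two}). Starting at the open end of each wire, the degree-one pattern fixes the adjacent degree-two pixel's pattern, which in turn fixes the next, and so on; propagating this forcing along every wire and inward to the gadget body yields the bold edges drawn in Figure~\ref{fig:degree_6_gadget_solution_forced}. At this stage every edge of the gadget outside a small central region is determined.

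Finally, I would resolve the central region. At each remaining inner pixel, enough incident edges are already fixed that the admissible-pattern list of Figure~\ref{fig:restricted_possibilities} collapses to at most two options, and I expect the branching to be governed by a single symmetric binary choice near the center of the gadget. This yields exactly the two completions shown in Figures~\ref{fig:degree_6_gadget_solution_1} and \ref{fig:degree_6_gadget_solution_2}; verifying that both completions satisfy the candidate-solution conditions is then a direct check. The main obstacle is keeping the case analysis organized: the gadget is large enough that a careless enumeration branches heavily, so the right sequencing of propagation --- always extending from a pixel whose admissible pattern has already been narrowed to a single option --- is essential. The gadget's internal symmetry reduces the work considerably, and the follow-on forcing rule associated with Figure~\ref{fig:weird} immediately discards many locally admissible but globally infeasible partial completions.
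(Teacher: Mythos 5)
There is a genuine gap at the final step. The heart of the paper's proof is not the propagation of forced edges (which both you and the paper do, and which yields Figure~\ref{fig:degree_6_gadget_solution_forced}) but the argument that the remaining freedom collapses to a single binary choice. After the forced edges are placed, the gadget decomposes into six contiguous regions of pixels arranged in a cycle, joined by six single edges that belong to no pixel, and each of these six edges is a priori an independent in/out decision --- naively $2^6$ cases. The paper couples them with a parity argument: if some but not all of the six edges were used, some region would have exactly one of its two single edges in the candidate solution, so the candidate solution would cross that region's boundary an odd number of times (twice through its wire plus once through the single edge), which is impossible for a disjoint cycle cover. Your proposal never identifies these six connector edges as the locus of branching and offers no mechanism to couple them; the sentence ``I expect the branching to be governed by a single symmetric binary choice near the center'' is precisely the claim that needs proof, and local propagation alone does not obviously deliver it. You would need to supply this counting argument (or an equivalent global constraint) to close the proof.

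A secondary problem is your plan to import the admissible-pattern catalogue of Section~\ref{sec:polygonal thin}. That catalogue (Figures~\ref{fig:restricted_possibilities} and~\ref{fig:weird}) was derived for \emph{polygonal} thin grid graphs, and the gadget here is deliberately not polygonal: the six single edges joining the regions belong to no pixel, so vertices incident to them admit local behaviors that the Section~\ref{sec:polygonal thin} enumeration never considers. You acknowledge that one step invokes polygonality, but the issue is broader than a single step --- the per-pixel enumeration itself presupposes that every edge lies on a pixel. The paper avoids this entirely by working directly from the degree-two constraint at each vertex rather than from the pixel-pattern catalogue, which is the safer route here.
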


\begin{proof}

Figure~\ref{fig:degree_6_gadget_solution_forced} shows the gadget from Figure~\ref{fig:degree_6_gadget}, but with all edges that must be in any candidate solution bold. These constraints can be derived purely from the fact that every vertex must be the endpoint of exactly two edges in a candidate solution.

The gadget contains six contiguous regions of pixels, arranged in a cycle with a single edge that is not part of any pixel between adjacent regions. We claim that either all of these single edges must be in the candidate solution or none of them. It can be verified that these two possibilities lead, again via the constraints on candidate solutions, to the two solutions shown in Figures~\ref{fig:degree_6_gadget_solution_1} and~\ref{fig:degree_6_gadget_solution_2}.

All that's left to show is that our claim is correct: that using only some of the single edges in a candidate solution is impossible. Suppose for the sake of contradiction that a candidate solution exists which uses some but not all of these edges. Going around the cycle of regions, there will be some region where we transition from using the single edge to not using it. That region will have exactly one of its two single edges in the candidate solution. Thus, the candidate solution will enter or exit the region exactly three times: twice through the wire that exists the gadget from this region and once through the single edge. Since this is impossible for a cycle cover, we arrive at the desired contradiction.
\end{proof}

\FloatBarrier
\section{Square Polygonal Grid Graph Hamiltonicity is NP-complete}
\FloatBarrier
\label{sec:square polygonal}

In this section, we show that the \prob{Square Polygonal Grid Graph Hamiltonicity} problem is NP-complete. Membership in NP is trivial, while NP-hardness follows from a polynomial-time reduction. We reduce from the \prob{Planar Monotone Rectilinear 3SAT} problem---which was shown NP-hard in \cite{deBerg}---to the \prob{Square Polygonal Grid Graph Hamiltonicity} problem. In order to state the \prob{Planar Monotone Rectilinear 3SAT} problem, we first need some preliminary terms:

\begin{definition}
A \defn{rectilinear embedding} of a CNF formula into a plane is a planar embedding of the variable-clause bipartite graph for the CNF formula into the plane such that the vertices (variables and clauses) are mapped to horizontal segments, the edges are mapped to vertical segments, and the variables all lie on the $x$ axis. Furthermore, a \defn{monotone rectilinear embedding} of a CNF formula into a plane is a rectilinear embedding with the additional property that the clauses positioned above the $x$ axis contain only positive literals while clauses positioned below the $x$ axis contain only negative literals. 
\end{definition}

Figure~\ref{fig:rectilinear_3sat} shows an example of a rectilinear embedding. This embedding could represent several possible CNF formulas, such as $(x_1 \vee \overline{x_2} \vee \overline{x_3}) \wedge (x_3 \vee \overline{x_4} \vee x_5) \wedge (\overline{x_2} \vee x_3 \vee \overline{x_5})$ or $(\overline{x_1} \vee \overline{x_2} \vee \overline{x_3}) \wedge (\overline{x_3} \vee x_4 \vee x_5) \wedge (x_2 \vee x_3 \vee x_5)$. If, however, this is a monotone rectilinear embedding then a specific CNF formula, $(x_1 \vee x_2 \vee x_3) \wedge (x_3 \vee x_4 \vee x_5) \wedge (\overline{x_2} \vee \overline{x_3} \vee \overline{x_5})$ is implied.

\begin{figure}[!htbp]%[H]
    \centering
    \includegraphics[scale = 0.75]{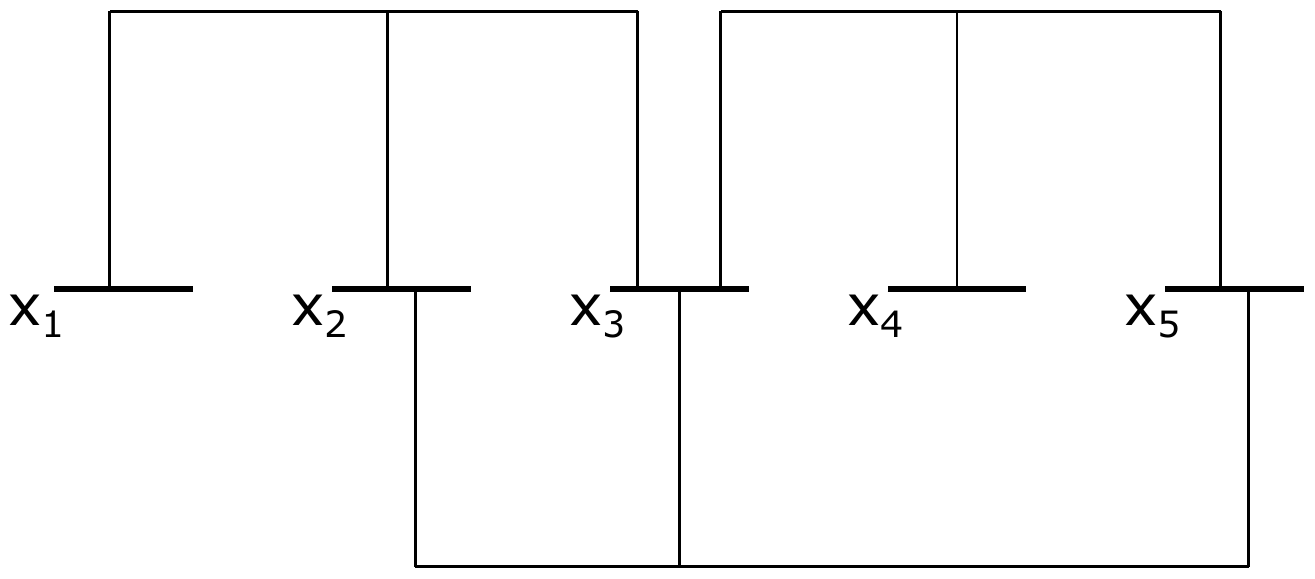}
    \caption{A rectilinear embedding}
    \label{fig:rectilinear_3sat}
\end{figure}

Now the problem we are reducing from can be stated as follows:

\begin{problem}
The problem \prob{Planar Monotone Rectilinear 3SAT} decides, for every monotone rectilinear embedding of a 3-CNF formula into the plane, whether the given instance is satisfiable.
\end{problem}

Our overall strategy for reducing \prob{Planar Monotone Rectilinear 3SAT} to \prob{Hamiltonicity of Square Polygonal Grid Graphs} is the following:

\begin{enumerate}
\item We begin by describing several simple gadgets.
\begin{enumerate}
\item A wire gadget consists of a path of pixels.
\item A one-enforcer gadget, when inserted into a wire, enforces the fact that any Hamiltonian cycle in the grid graph must pass through the wire only once, zig-zagging through the wire in one direction.
\item A two-enforcer gadget, when inserted into a wire, enforces the fact that any Hamiltonian cycle in the grid graph must pass through the wire twice, once in each direction.
\end{enumerate}
\item We can then combine these gadgets to form variable and clause gadgets, which we will use to simulate a \prob{Planar Monotone Rectilinear 3SAT} instance.
\begin{enumerate}
\item A variable gadget consists of a loop of wire, which when correctly inserted into a one-enforced wire enforces certain constraints on any Hamiltonian cycle (provided one exists). In particular, among the two wires in the gadget (the top and bottom halves of the loop of wire), one will be one-enforced, and the other will be two-enforced. This choice corresponds to a choice of value for the variable.
\item A clause gadget attaches to three variable gadgets (attaching to either the top or bottom wires of each) in such a way that the previously stated facts about variable gadgets continue to hold. If all three attaching points connect the clause gadget to one-enforced wire (due to the choices at the variable gadgets) then there exists no Hamiltonian cycle in the grid graph. Conversely, whenever this fails to occur, it is possible to modify one of the pieces-of-cycle passing through the variable gadgets to also pass through every point in the clause gadget.
\end{enumerate}
\item Next, we assemble a grid graph out of these gadgets to simulate a \prob{Planar Monotone Rectilinear 3SAT} instance. We create a variable gadget for every variable and connect them in a loop with one-enforcers appropriately placed. We then add clause gadgets outside the loop and inside the loop to represent the positive and negative clauses. For an example, see Figure~\ref{fig:square_polygonal_example}.
\item If there exists a satisfying assignment then we can construct a cycle passing through the one-enforced loop and through the variable gadgets in such a way that each clause gadget attaches to at least one two-enforced wire (among the three variable gadget wires that it attaches to). Then by the clause gadget properties listed above, we can modify that cycle to also pass through every point in each clause gadget. This yields a Hamiltonian cycle. If there exists no satisfying assignment then any cycle passing through the one-enforced wire loop and variable gadgets will result in at least one clause gadget being attached to three one-enforced wires. Then by the clause gadget properties listed above, no Hamiltonian cycle exists. This shows that the reduction is answer preserving. 
\end{enumerate}

\begin{figure}[!htbp]
  \centering
  \subcaptionbox{The \prob{Planar Monotone Rectilinear 3SAT} instance $(x_1 \vee x_2 \vee x_3) \wedge (x_1 \vee x_3 \vee x_4) \wedge (\overline{x_1} \vee \overline{x_2} \vee \overline{x_4}) \wedge (\overline{x_2} \vee \overline{x_3} \vee \overline{x_3}) \wedge (\overline{x_2} \vee \overline{x_3} \vee \overline{x_4})$}
    {\includegraphics[width=.5\textwidth]{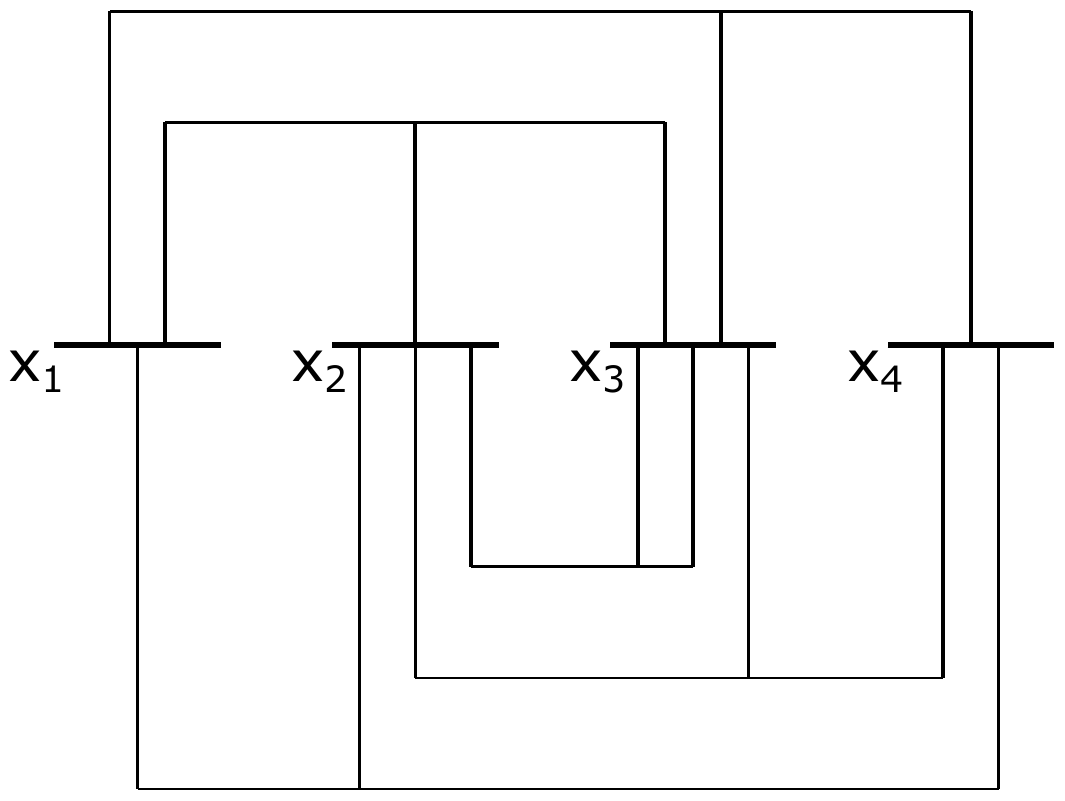}}\hfill
  \subcaptionbox{The corresponding \prob{Hamiltonicity of Square Polygonal Grid Graphs} instance}
    {\includegraphics[width=.5\textwidth]{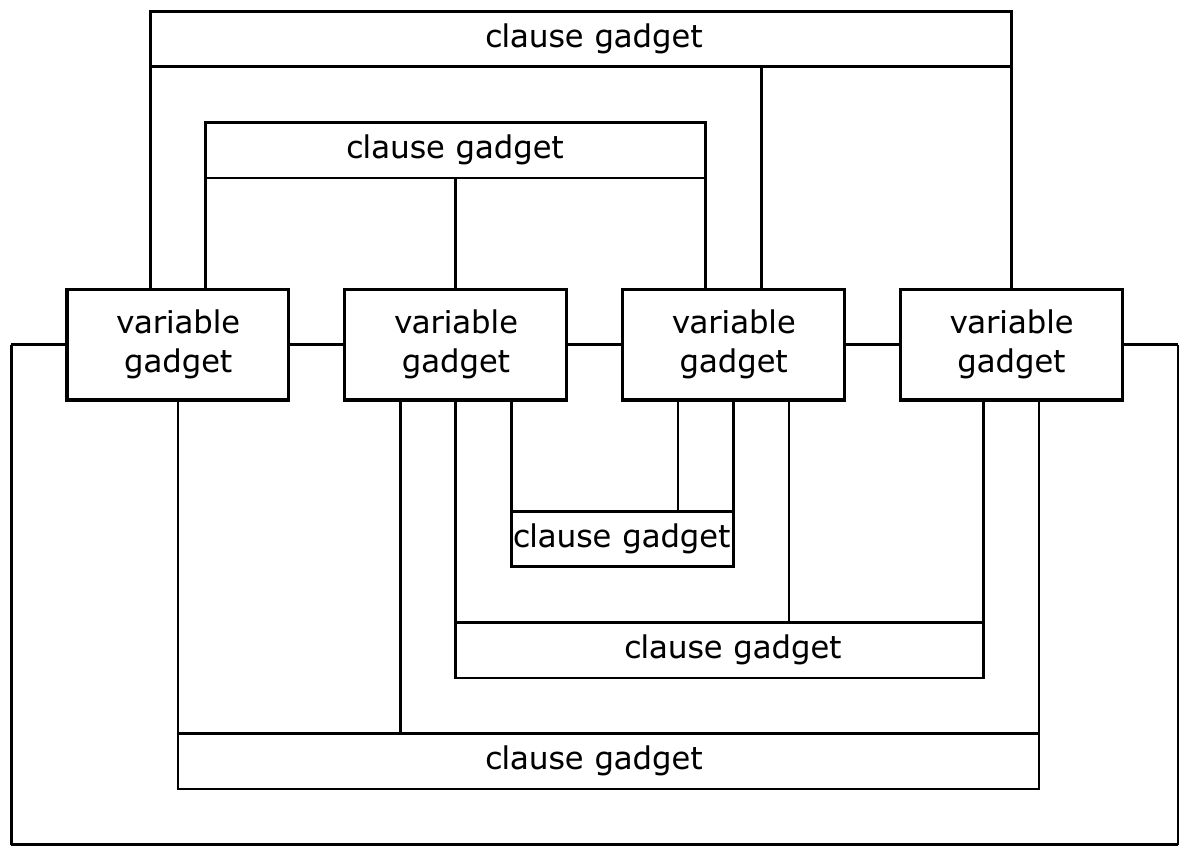}}
  \caption{An example of the reduction from \prob{Planar Monotone Rectilinear 3SAT} to \prob{Hamiltonicity of Square Polygonal Grid Graphs}}
  \label{fig:square_polygonal_example}
\end{figure}

We follow this outline below to prove that

\begin{theorem}
There exists a polynomial time reduction from the \prob{Planar Monotone Rectilinear 3SAT} problem to the \prob{Hamiltonicity of Square Polygonal Grid Graphs} problem.
\end{theorem}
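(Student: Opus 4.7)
The plan is to implement the reduction essentially as the preceding outline describes, by designing five gadgets on the square grid, verifying their local Hamiltonicity constraints, and then composing them around a planar embedding of the input monotone rectilinear 3SAT formula. The gadgets are: (i) a \emph{wire}, a straight or bent path of pixels; (ii) a \emph{one-enforcer}; (iii) a \emph{two-enforcer}; (iv) a \emph{variable gadget} built from a loop of wire; and (v) a \emph{clause gadget} attached to three variable loops. At every step I must ensure the construction remains polygonal: every vertex and edge lies on some pixel, and no cut vertex merges two boundaries.

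First I would construct and verify the two enforcers. Viewing a wire as a two-track corridor along which a Hamiltonian cycle zig-zags, a one-enforcer is a small local widening whose interior vertex degrees leave exactly one combinatorial way for the cycle to traverse it, namely a single back-and-forth pass; a two-enforcer is the dual, forcing two parallel passes. Correctness of each is an exhaustive case analysis on the degree-$2$ Hamiltonian constraint at each interior vertex, in the same style as the analysis carried out in Section~\ref{hex_grids_subsection}. The variable gadget is then a closed loop of wire (with a top half and bottom half meeting at two splice points) spliced into an outer one-enforced wire in such a way that the single pass through the splice forces exactly one of the two halves to be one-enforced and the other to be two-enforced; this binary choice encodes the truth value of the variable. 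The clause gadget is a polygonal region with three attachment ports, designed so that a two-enforced attachment can ``donate'' its extra pass to sweep through the clause interior, while a one-enforced attachment cannot; the pixel layout must be such that the clause interior is coverable by a Hamiltonian cycle if and only if at least one of its three ports is two-enforced.

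Given the gadgets, the assembly follows the outline: place a variable gadget for each variable along the $x$-axis, link them with one-enforcers into one outer loop, and for each clause of the monotone rectilinear embedding place a clause gadget above or below (according to the clause's side) with wires running to its three variable loops. Because the input comes with a planar monotone rectilinear embedding, the clause wires route without crossings and the whole construction fits in polynomial area, giving a polynomial-time reduction. For correctness in both directions: a satisfying assignment induces variable-gadget choices in which every clause has at least one two-enforced port, and I detour the corresponding pass through the clause interior to obtain a single Hamiltonian cycle; conversely, any Hamiltonian cycle induces, at each variable, a choice of which half is one-enforced, and if the induced assignment failed to satisfy some clause, all three of that clause's ports would be one-enforced, leaving its interior pixels uncoverable, a contradiction.

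The main obstacle is the detailed pixel-level design of the variable gadget and especially the clause gadget subject to the polygonal constraint, which is quite restrictive: every vertex must belong to a pixel and no vertex may be a cut vertex merging two boundaries, so the usual tricks of placing a single pendant edge or pinch-point are unavailable. The delicate step is showing rigorously that when all three clause ports are one-enforced the clause interior truly cannot be covered; I would establish this by combining a parity argument on the number of cycle edges crossing the clause boundary with a small exhaustive case analysis on the interior pixels, analogous to the argument used for the degree-$6$ breakable vertex gadget in Section~\ref{sec:hex thin}.
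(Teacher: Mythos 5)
Your proposal follows essentially the same route as the paper: the same five gadgets (wire, one-enforcer, two-enforcer, variable loop spliced into a one-enforced outer cycle, clause gadget with three ports usable only at a two-enforced wire), the same parity-based arguments for propagation and for why three one-enforced ports kill the clause, and the same assembly around the monotone rectilinear embedding. The only thing you defer --- the explicit pixel-level layouts satisfying the polygonal constraint --- is exactly what the paper supplies via its figures, and your stated verification strategy (exhaustive degree-2 case analysis at each vertex plus parity counting of cycle arcs crossing gadget boundaries) is the one the paper actually uses.
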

and therefore that 

\begin{corollary}
The \prob{Hamiltonicity of Square Polygonal Grid Graphs} problem is NP-complete. 
\end{corollary}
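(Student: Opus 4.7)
The plan is to follow the four-step outline sketched just above the theorem. First I would build the three basic gadgets. A \emph{wire} is simply a corridor one or two pixels wide along which a Hamiltonian cycle has essentially two qualitative traversal patterns: a single zig-zag strand (the ``one-enforced'' pattern) or two parallel strands running up and down the wire (the ``two-enforced'' pattern). A \emph{one-enforcer} is a small local perturbation inserted into a wire whose forced edges (forced purely by the degree-two condition at each vertex) rule out the two-strand pattern; a \emph{two-enforcer} is a symmetric perturbation that rules out the zig-zag pattern. Each enforcer must be verified by a small, finite case analysis driven by the usual ``each vertex needs exactly two incident cycle edges'' rule, together with the observation that the grid graph is polygonal (so every edge lies on a pixel, which restricts where the cycle can turn).

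Next I would use these to build the variable and clause gadgets. A \emph{variable gadget} is a closed rectangular loop of wire with a single one-enforcer inserted on one short side; this forces exactly one of the two long sides (top or bottom) to be traversed in zig-zag mode and the other in two-strand mode, which encodes the Boolean value. A \emph{clause gadget} is a small region attached to three variable loops along either all three tops (for a positive clause) or all three bottoms (for a negative one). The gadget must be designed so that the Hamiltonian cycle can cover every vertex inside it if and only if at least one of the three attaching wires is currently in two-strand mode: in that case one of the two strands can be ``borrowed'' and rerouted through the clause interior, while the other strand continues along the wire; but if all three attaching wires are in zig-zag mode, there is no spare strand to enter the clause and the vertices inside the clause cannot all be visited.

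After the gadgets are in place I would lay out the whole construction on the integer grid following the given monotone rectilinear embedding: place variable gadgets along a horizontal axis and join them into one long one-enforced wire loop, then attach positive clause gadgets above and negative clause gadgets below along the vertical edges of the embedding. Here I need to check that the resulting graph is actually a \emph{square polygonal} grid graph: every edge lies on a pixel, and no single vertex separates two boundaries. This is essentially a geometric bookkeeping argument: each gadget is individually polygonal, the wires joining them preserve the property, and by leaving enough padding between components (polynomially bounded) we can avoid any unwanted adjacencies. The whole construction is clearly polynomial-time, since planar embeddings of the input are computable in polynomial time and each gadget has constant size.

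Finally I would verify correctness in both directions. Given a satisfying assignment, I set each variable gadget into the zig-zag/two-strand configuration encoding its truth value, pick for each clause a satisfied literal and reroute its spare strand through the clause gadget, and check that the result is a single Hamiltonian cycle. Conversely, given a Hamiltonian cycle, the enforcers force each variable gadget into one of the two canonical configurations, inducing a Boolean assignment; the clause-gadget analysis then shows that each clause must contain at least one variable whose attaching wire is two-enforced, i.e.\ the assignment is satisfying. The main obstacle will be the clause gadget: designing a concrete polygonal pixel pattern whose local Hamiltonian-cycle case analysis cleanly separates into ``at least one two-strand input'' versus ``all three one-strand inputs'' and whose addition to the global layout does not break the polygonal property. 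The enforcer gadgets and the global layout are comparatively routine once the clause gadget is nailed down.
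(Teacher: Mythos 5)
Your proposal follows essentially the same route as the paper: a reduction from \prob{Planar Monotone Rectilinear 3SAT} built from wires carrying a one-enforced (zig-zag) versus two-enforced (two-strand) distinction, enforcer gadgets verified by the degree-two rule, a variable loop whose top/bottom branches are forced by a parity (even-number-of-arc-crossings) argument to take opposite modes, and clause gadgets whose interior can be absorbed into the cycle exactly when at least one attaching wire is two-enforced. The one obstacle you single out---the concrete clause gadget---is resolved in the paper precisely by the parity-alignment argument you allude to: the clause's vertical two-enforced wire is positioned so that its two strands cannot link up with a one-enforced variable wire's single strand, but can (optionally) merge with a two-enforced one.
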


\FloatBarrier
\subsection{Simple Gadgets}
\FloatBarrier

\FloatBarrier
\subsubsection{Wires}
\FloatBarrier

The simplest gadget we will use will be the wire, which is simply a path of pixels. As shown in Figure~\ref{fig:wire_gadget}, there are multiple ways to pass a cycle through a wire. The important distinction to make is between ``two-enforced'' wires for which the cycle passes through most pixels twice, once in each direction (the first two solutions in Figure~\ref{fig:wire_gadget}), and ``one-enforced'' wires for which the cycle passes through most pixels once (the final two solutions in Figure~\ref{fig:wire_gadget}). 

\begin{figure}[!htbp]
  \centering
  \subcaptionbox{Gadget}
    {\includegraphics[scale=0.5]{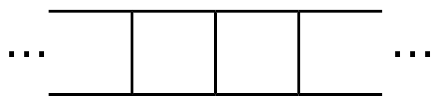}}\hfill
  \subcaptionbox{Solutions}
    {\includegraphics[scale=0.5]{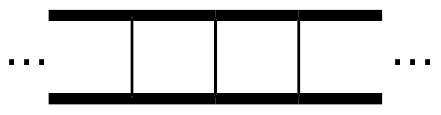}
    \includegraphics[scale=0.5]{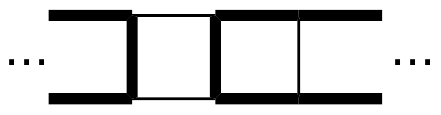}
    \includegraphics[scale=0.5]{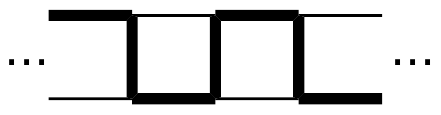}
    \includegraphics[scale=0.5]{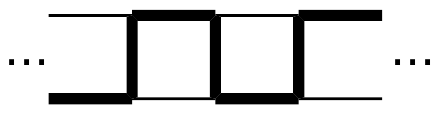}}
  \caption{A wire}
  \label{fig:wire_gadget}
\end{figure}

Note that the distinction between one-enforced and two-enforced wires must propagate through a wire by a simple contradiction argument: if a cycle goes into a wire once from one side and comes out twice from the other then it enters that wire an odd number of times in all, which is impossible for a cycle.

Furthermore, as seen in Figure~\ref{fig:wire_turn}, even if the wire turns (provided the turns are not too close), it is possible to propagate this property. 

\begin{figure}[!htbp]
  \centering
  \hspace{.5in}
  \subcaptionbox{Gadget}
    {\includegraphics[scale=0.5]{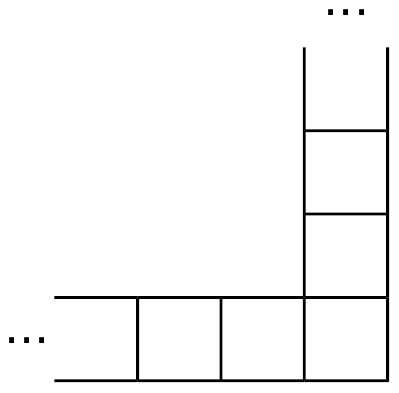}}\hfill
  \subcaptionbox{Solutions}
    {\includegraphics[scale=0.5]{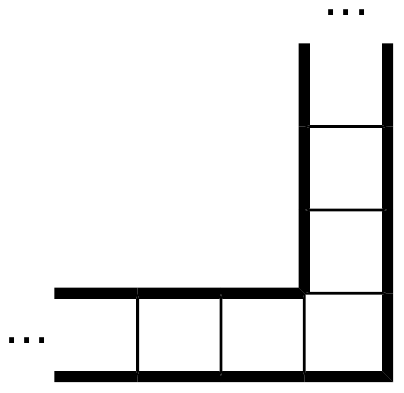}
    \includegraphics[scale=0.5]{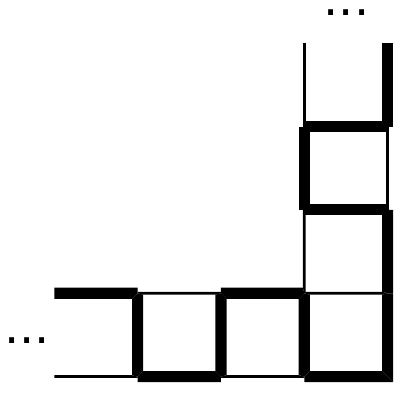}
    \includegraphics[scale=0.5]{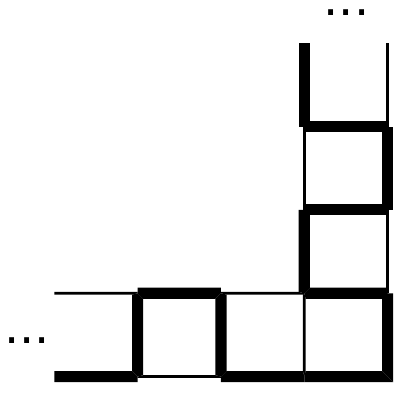}}
  \hspace{.5in}
  \caption{A turning wire}
  \label{fig:wire_turn}
\end{figure}

\FloatBarrier
\subsubsection{One-Enforcers}
\FloatBarrier

A one-enforcer, when inserted into a wire causes the wire to be one-enforced in a particular parity. The gadget is shown in Figure~\ref{fig:one_enforcer_gadget}. The two possible solutions are shown in Figure~\ref{fig:one_enforcer_satisfied}.

\begin{figure}[!htbp]
  \centering
  \subcaptionbox{Gadget
    \label{fig:one_enforcer_gadget}}
    {\includegraphics[scale=0.5]{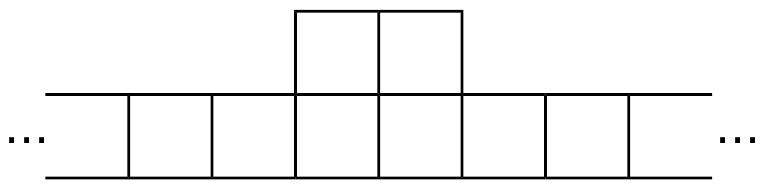}}\hfill
  \subcaptionbox{Solutions
    \label{fig:one_enforcer_satisfied}}
    {\includegraphics[scale=0.5]{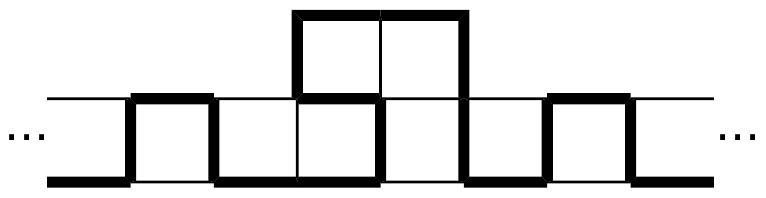}
    \includegraphics[scale=0.5]{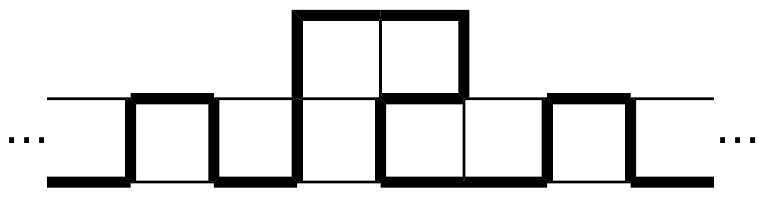}}
  \caption{A one-enforcer gadget}
\end{figure}

In both solutions, the behavior of the cycle in the wire attached to the gadget is the same: a one-enforced wire of a particular parity. The parity is different on each side; in other words, the gadget causes a parity shift in addition to its other effects.

\FloatBarrier
\subsubsection{Two-Enforcers}
\FloatBarrier

Inserting a two-enforcer into a wire causes that wire to be two-enforced. The shape of a two enforcer is shown in Figure~\ref{fig:two_enforcer_gadget}. Every edge adjacent to a vertex with only two neighbors must be in a Hamiltonian cycle if one exists. Thus all the bold edges in Figure~\ref{fig:two_enforcer_satisfied} must be in the cycle. But then the section of gadget in the middle is a two-enforced wire and this property propagates along the wire in both directions away from the gadget. 

\begin{figure}[!htbp]
  \centering
  \hspace{1in}
  \subcaptionbox{Gadget
    \label{fig:two_enforcer_gadget}}
    {\includegraphics[scale=0.5]{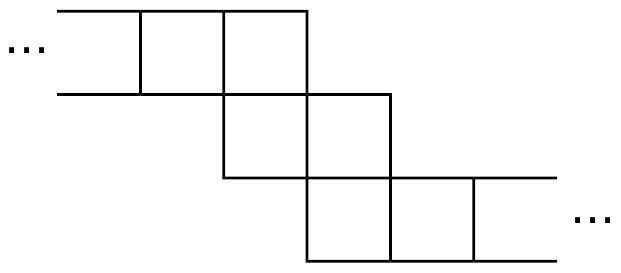}}\hfill
  \subcaptionbox{Solutions
    \label{fig:two_enforcer_satisfied}}
    {\includegraphics[scale=0.5]{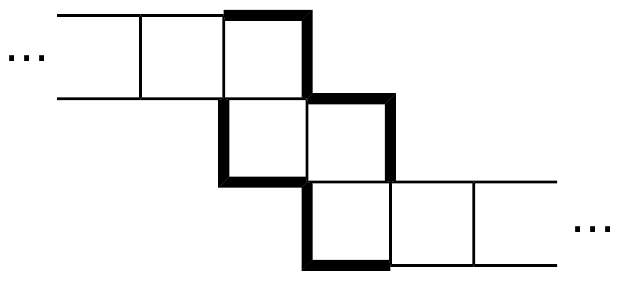}}
  \hspace{1in}
  \caption{A two-enforcer gadget}
\end{figure}

\FloatBarrier
\subsection{Variable Gadget}
\FloatBarrier

A variable gadget consists of a loop of wire of odd width and height 6 that is inserted into a one-enforced wire of a particular parity. Figure~\ref{fig:variable_gadget} shows a variable gadget together with the appropriate one-enforcers necessary for the gadget to properly function.

\begin{figure}[!htbp]
  \centering
  \subcaptionbox{\label{fig:variable_gadget}}
  {\includegraphics[scale=0.5]{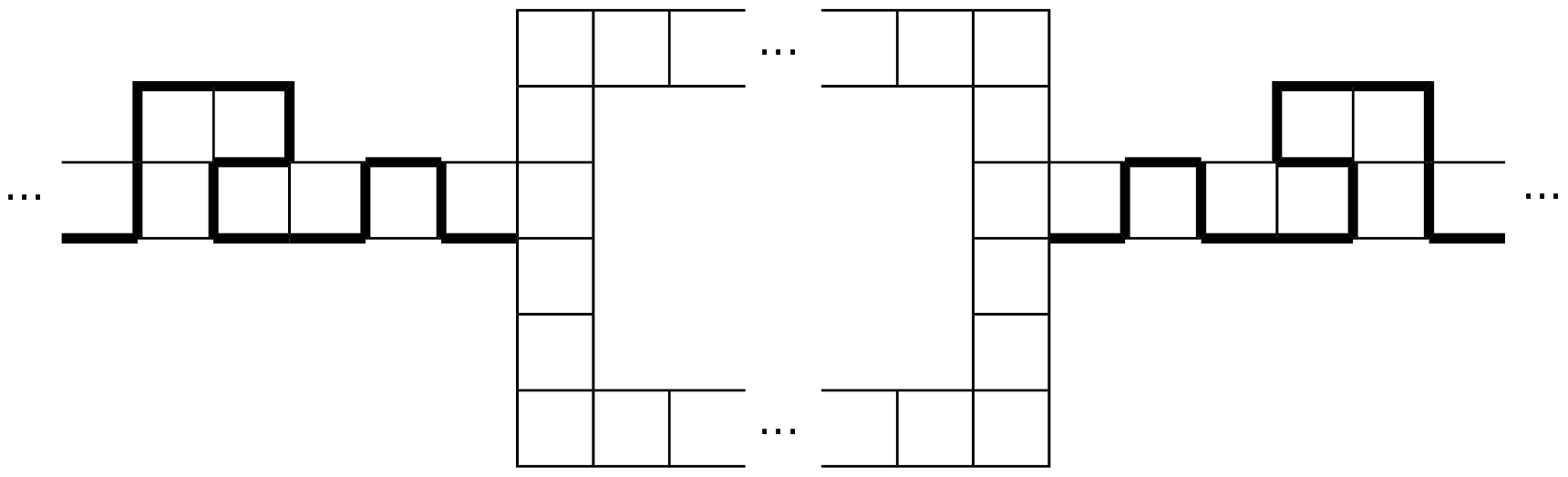}}
  \subcaptionbox{\label{fig:variable_satisfied_1}}
  {\includegraphics[scale=0.5]{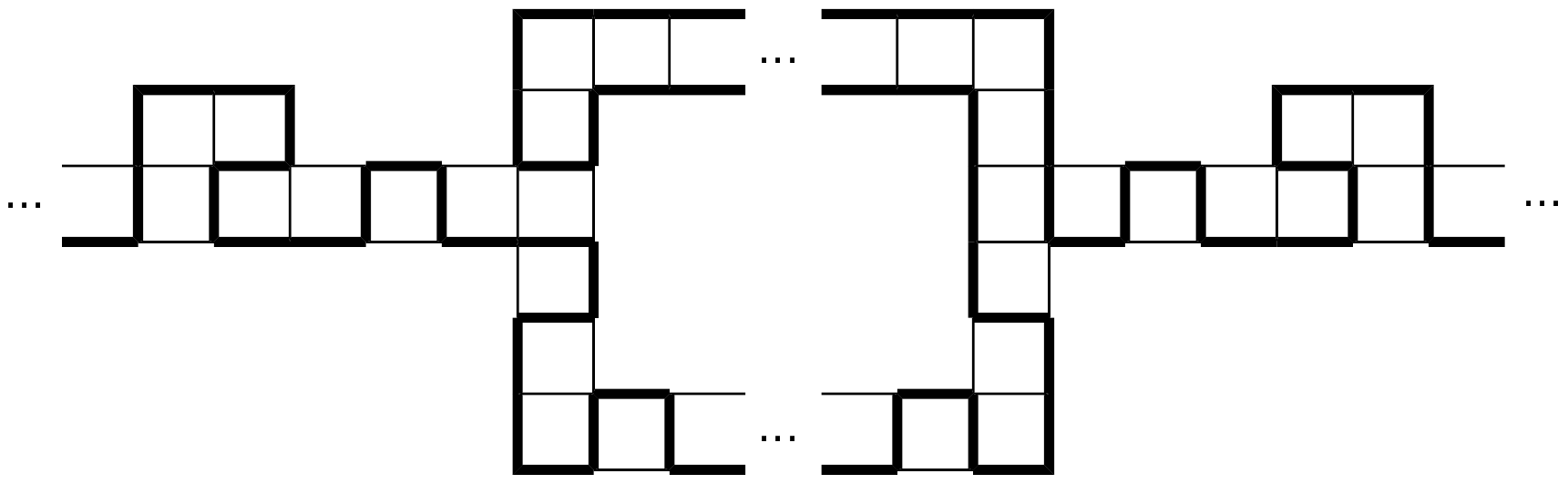}}
  \subcaptionbox{\label{fig:variable_satisfied_2}}
  {\includegraphics[scale=0.5]{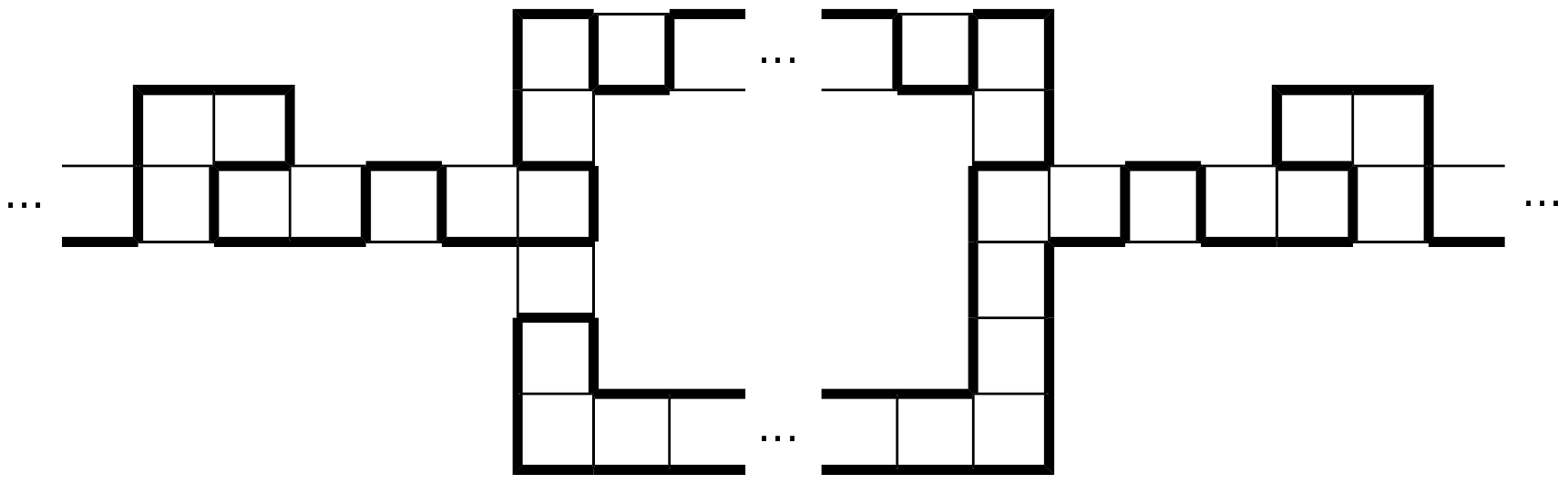}}
  \caption{A variable gadget (Figure~\ref{fig:variable_gadget}) and two ways a cycle can pass through the variable gadget (Figures~\ref{fig:variable_satisfied_1} and \ref{fig:variable_satisfied_2})}
  \label{fig:variable_gadget_all}
\end{figure}

Consider just one half of the gadget (Figure~\ref{fig:variable_half}). The possible ways for a cycle to pass through this gadget half are listed in Figure~\ref{fig:variable_half_satisfied}. The important thing to note is that in all cases one of the two branches is one-enforced and the other is two-enforced.

\begin{figure}[!htbp]
  \centering
  \subcaptionbox{Gadget
    \label{fig:variable_half}}
    {\includegraphics[scale=0.5]{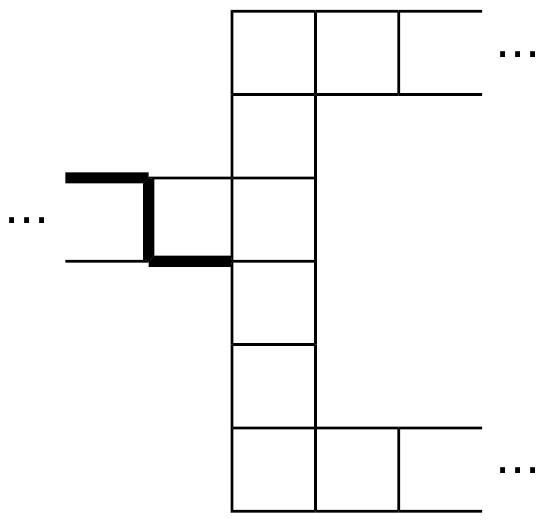}}\hfill
  \subcaptionbox{Solutions
    \label{fig:variable_half_satisfied}}
    {\includegraphics[scale=0.5]{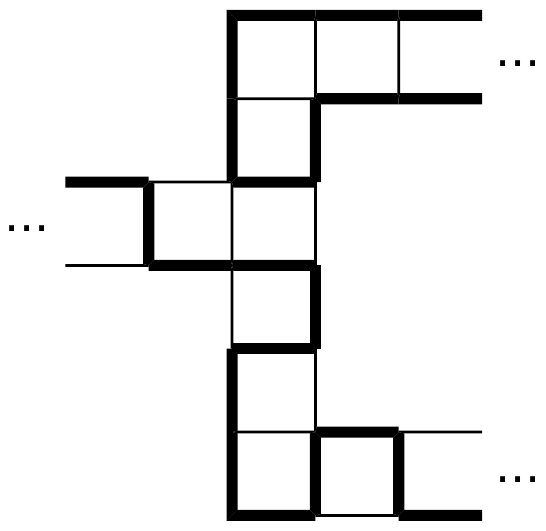}
    \hspace{-.2in}
    \includegraphics[scale=0.5]{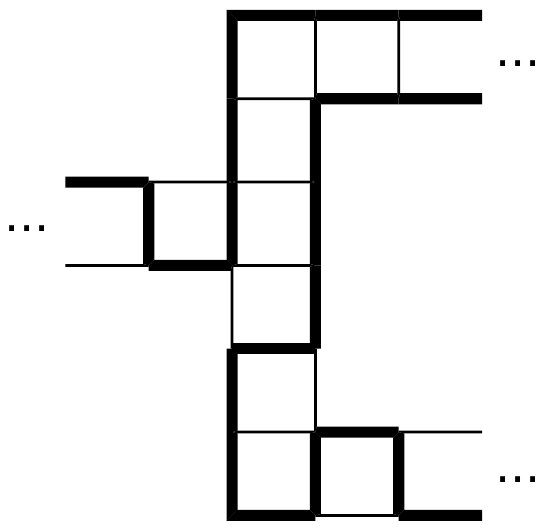}
    \hspace{-.2in}
    \includegraphics[scale=0.5]{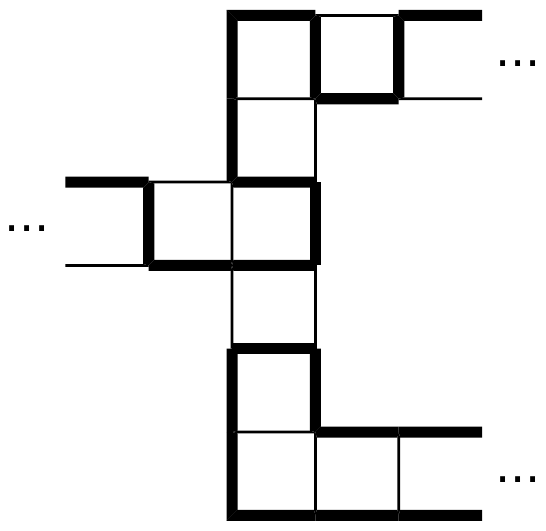}
    \hspace{-.2in}
    \includegraphics[scale=0.5]{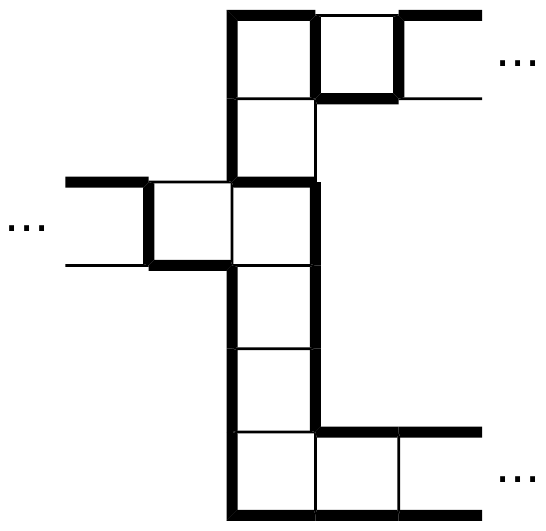}}
  \caption{One half of a variable gadget}
\end{figure}

Putting the variable gadget together from two halves, we can figure out the legal solutions by assigning solutions to the two halves in ways that don't result in isolated cycles. The only two possible ways (up to reflection) for a cycle to pass through a variable gadget are shown in Figures~\ref{fig:variable_satisfied_1} and \ref{fig:variable_satisfied_2}. As you can see, the variable gadget allows a choice: either the top or bottom wire must be two-enforced while the other must be one-enforced.

\FloatBarrier
\subsection{Clause Gadget}
\FloatBarrier

A clause gadget is a long horizontal wire with three two-enforced vertical wires brought vertically down (or up) from the horizontal wire. It is shown in Figure~\ref{fig:clause_gadget} with the two-enforcers omitted but with bold edges as appropriate to indicate that the wires are two-enforced. 

\begin{figure}[!htbp]%[H]
    \centering
    \includegraphics[scale=0.5]{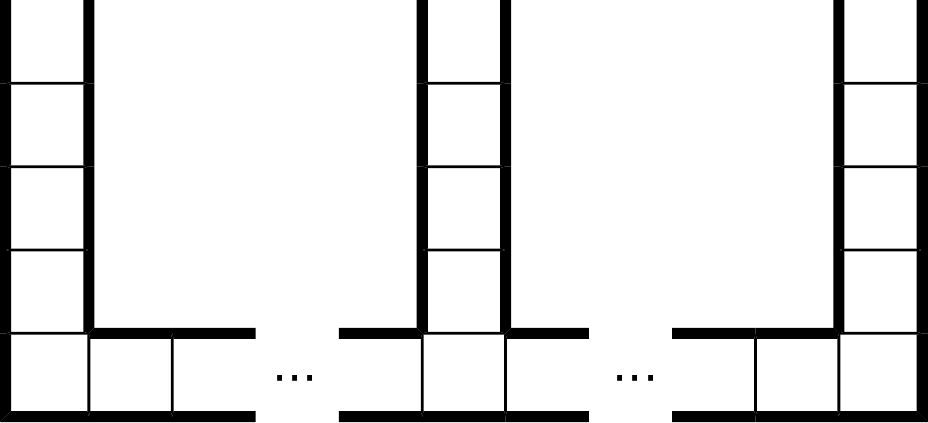} \hspace{1in}
    \includegraphics[scale=0.5]{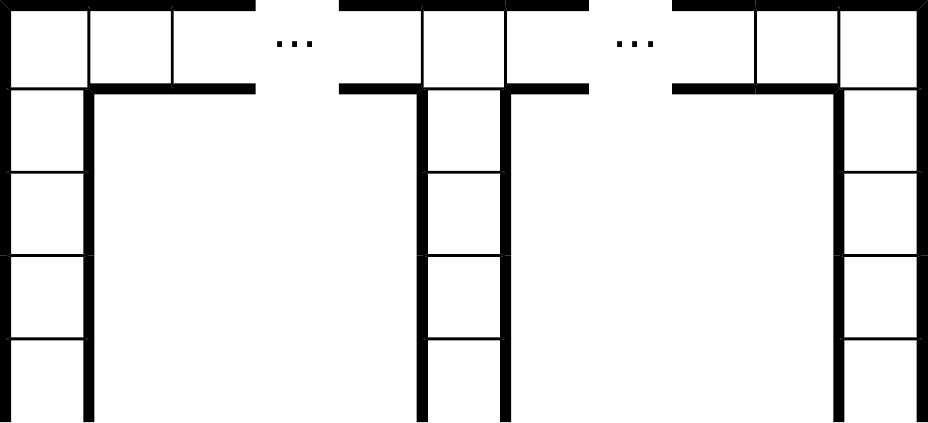}
    \caption{Two clause gadgets}
    \label{fig:clause_gadget}
\end{figure}

When used, this gadget will attach each of the three two-enforced vertical wires to a variable gadget. A variable gadget has a top horizontal wire and a bottom horizontal wire. The vertical two-enforced wire from the clause is connected directly (in a particular place described later) to one of these two wires (the bottom one if the clause's vertical wires go up from the horizontal clause section and the top if the clause's vertical wires go down from the horizontal clause section). This is a modification to the variable gadget, and so we must verify that the variable gadget continues to work as expected (and no other way) with this addition. 

In fact, a rigorous derivation of the behavior of a cycle in a variable gadget relies on a repeated application of the same parity argument. For example, to show that the two wires in a variable half cannot be both one-enforced or both two-enforced, we would argue that the total number of arcs of cycle at the edge of that gadget must be even, and therefore cannot be 3 or 5 (which would occur if the wires were both one- or two-enforced). Similarly, to show that the two-enforced wire from one half of the variable gadget cannot coincide with the one-enforced wire from the other, we argue based on the parity of the number of cycle arcs entering the wire. Since our modification only attaches a two-enforced wire, and since that addition only ever adds an even number of cycle arcs to a region, we see that the same arguments must continue to hold and therefore that the set of possible ways for a cycle to pass through a vertex gadget remains the same (one variable wire will be two-enforced while the other will be one-enforced).

So consider the horizontal variable wire that the clause gadget connects to. That wire could be either one-enforced or two-enforced. Furthermore, if the wire is one-enforced, the cycle follows a particular parity of zigzag that is known in advance. When attaching a clause gadget to the horizontal variable wire, the vertical clause wire is aligned so that if the variable wire is one-enforced, the two wires match up as shown in Figure~\ref{fig:variable_clause_mismatch_1}.

\begin{figure}[!htbp]%[H]
    \centering
    \includegraphics[scale=0.5]{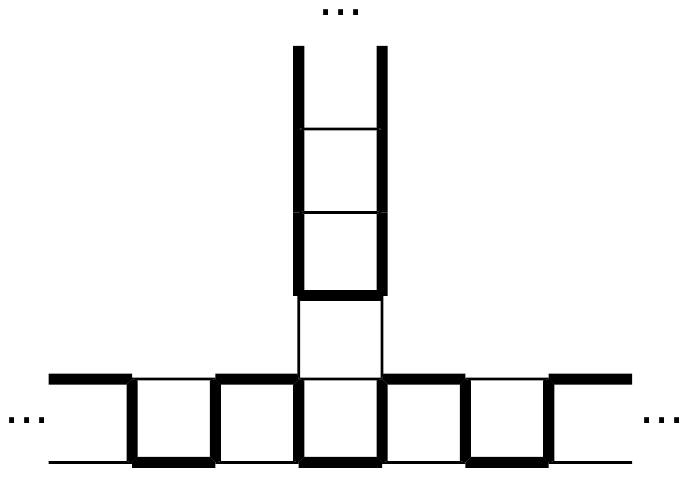}
    \includegraphics[scale=0.5]{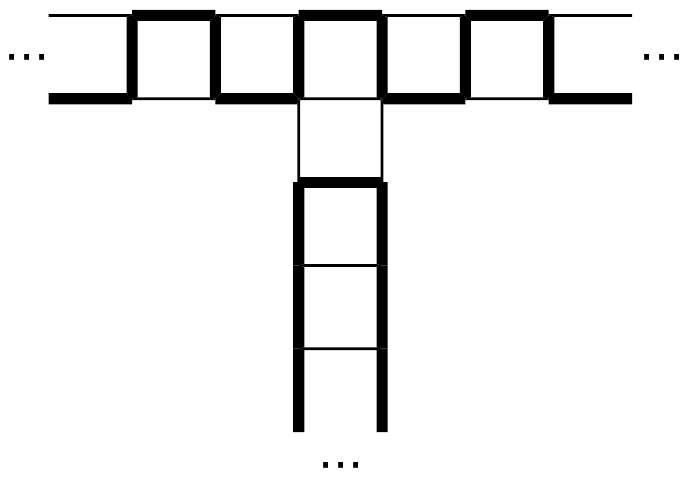}
    \caption{The junction between a variable gadget's horizontal wire and one of the three vertical wires from a clause gadgets; the case shown is where the variable gadget's wire is one-enforced; in this case, the cycle does not ``match up''}
    \label{fig:variable_clause_mismatch_1}
\end{figure}

We prove that if the variable wire is one-enforced, the situation from Figure~\ref{fig:variable_clause_mismatch_1} is exactly what will occur. Without loss of generality, suppose the horizontal wire is the top wire of the variable node. The cycle in the one-enforced variable wire propagates from the left through this wire. And the two arcs of the cycle in the two-enforced vertical clause wire propagate down. At this point, all we know is what is shown in Figure~\ref{fig:variable_clause_mismatch_2a}. Consider the circled node in that figure. It must contribute the next two edges as shown in Figure~\ref{fig:variable_clause_mismatch_2b}. Then the next circled node contributes the next edge as shown in Figure~\ref{fig:variable_clause_mismatch_2c}. Finally, the circled vertex in that figure contributes two edges to the cycle in the horizontal variable wire as shown in Figure~\ref{fig:variable_clause_mismatch_2d}. Thus the horizontal variable wire continues to be one-enforced despite the fact that the vertical wire is attached. As desired, the cycle must pass through this clause/variable junction as shown in Figure~\ref{fig:variable_clause_mismatch_1}.

\begin{figure}[!htbp]%[H]
    \centering
    \begin{subfigure}[b]{1.6in}
    \centering
    \includegraphics[scale=0.5]{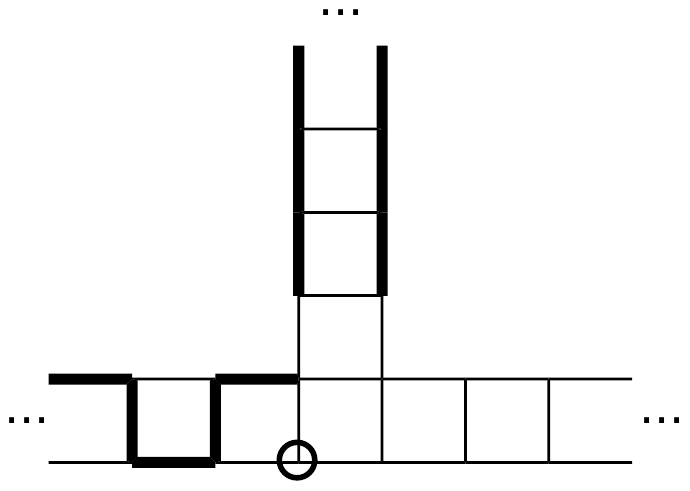} 
    \caption{}
    \label{fig:variable_clause_mismatch_2a}
    \end{subfigure}
    \begin{subfigure}[b]{1.6in}
    \centering
    \includegraphics[scale=0.5]{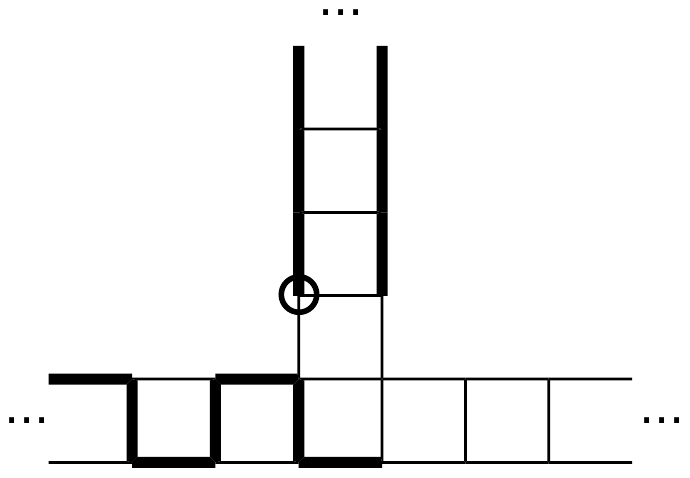} 
    \caption{}
    \label{fig:variable_clause_mismatch_2b}
    \end{subfigure}
    \begin{subfigure}[b]{1.6in}
    \centering
    \includegraphics[scale=0.5]{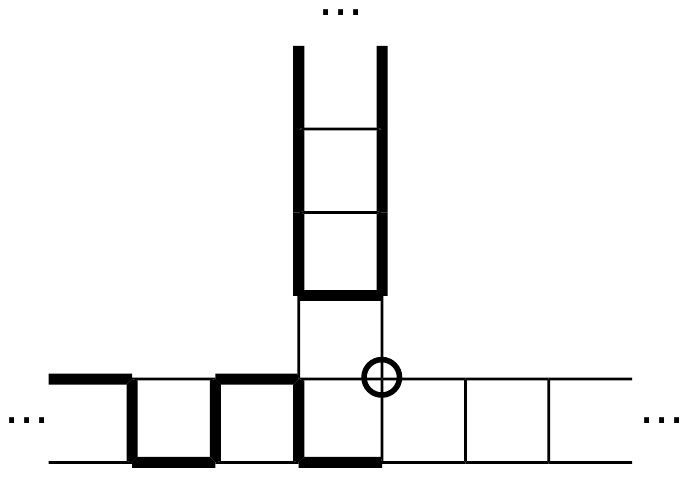} 
    \caption{}
    \label{fig:variable_clause_mismatch_2c}
    \end{subfigure}
    \begin{subfigure}[b]{1.6in}
    \centering
    \includegraphics[scale=0.5]{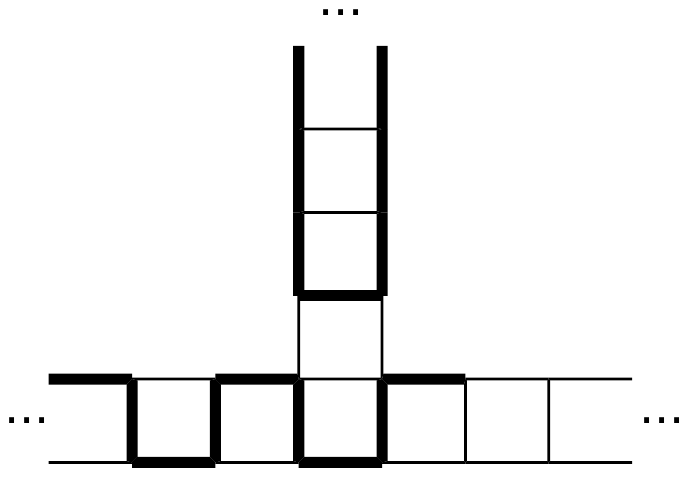} 
    \caption{}
    \label{fig:variable_clause_mismatch_2d}
    \end{subfigure}
    \caption{The parts of the cycle in the variable and clause wires do not connect if the variable wire is one-enforced}
    \label{fig:variable_clause_mismatch_2}
\end{figure}

On the other hand, when the horizontal variable wire is two-enforced, the parts of the cycle inside the variable and clause gadgets are free to connect as shown in Figure~\ref{fig:variable_clause_match}. Note however, that the parts of the cycle in the two gadgets are also free to not connect, which is useful when multiple variables satisfy a clause. 

\begin{figure}[!htbp]%[H]
    \centering
    \includegraphics[scale=0.5]{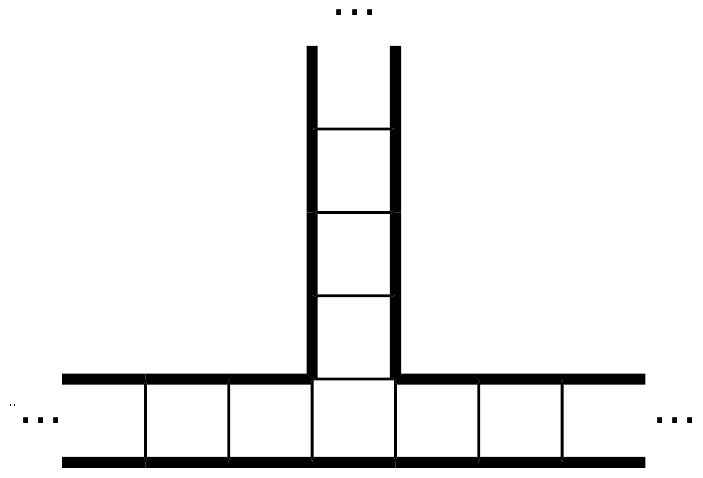}
    \caption{The parts of the cycle in the variable and clause wires can connect if the variable wire is two-enforced}
    \label{fig:variable_clause_match}
\end{figure}

Thus we can conclude the following two facts about our clause gadget. 

If a Hamiltonian cycle passes through each vertex gadget such that some clause gadget attaches to some three one-enforced wires, then the part of the cycle in the clause gadget has no way to connect to the part in the variable gadgets and so we don't actually have a Hamiltonian cycle (rather we have a contradiction). In other words, a Hamiltonian cycle passing through each vertex gadget must attach each clause gadget to at least one two-enforced horizontal variable wire.

If a grid graph has a Hamiltonian cycle and we add a clause gadget, attaching it to three horizontal vertex wires of which at least one is two-enforced, then the new graph is also Hamiltonian. This is because we can simply join the cycle that forms the boundary of the clause gadget to the pre-existing Hamiltonian cycle at one of the two-enforced horizontal variable wires (as in Figure~\ref{fig:variable_clause_match}), merging them into a Hamiltonian cycle for the entire graph.

\FloatBarrier
\subsection{Overall Reduction}
\FloatBarrier

%This gadget simulates a variable where the top wire is the ``true'' wire and the bottom is the ``false''. The choice of which of the two wires is a double wire corresponds to the choice of value for the variable. 

Suppose we are given an instance of \prob{Planar Monotone Rectilinear 3SAT}, or in other words a monotone rectilinear embedding of a 3-CNF formula in the plane. We convert this instance into a grid graph as follows. Each variable segment gets replaced by a variable gadget, which are connected in order with wires and one-enforcers. Then a long wire (possibly with extra one-enforcers for parity) connects the two variable gadgets at the ends, forming a big loop of variable gadgets (and one-enforcers). After that, we add clause gadgets for each clause. The positive clauses get a clause gadget above the variable gadgets while the negative clauses get a clause gadget inside the loop. An example schematic of what the final result might be is shown in Figure~\ref{fig:polygonal_grid_graph}, corresponding to the \prob{Planar Monotone Rectilinear 3SAT} instance shown in Figure~\ref{fig:square_polygonal_example}.

\begin{figure}[!htbp]%[H]
    \centering
    \includegraphics[width=\textwidth]{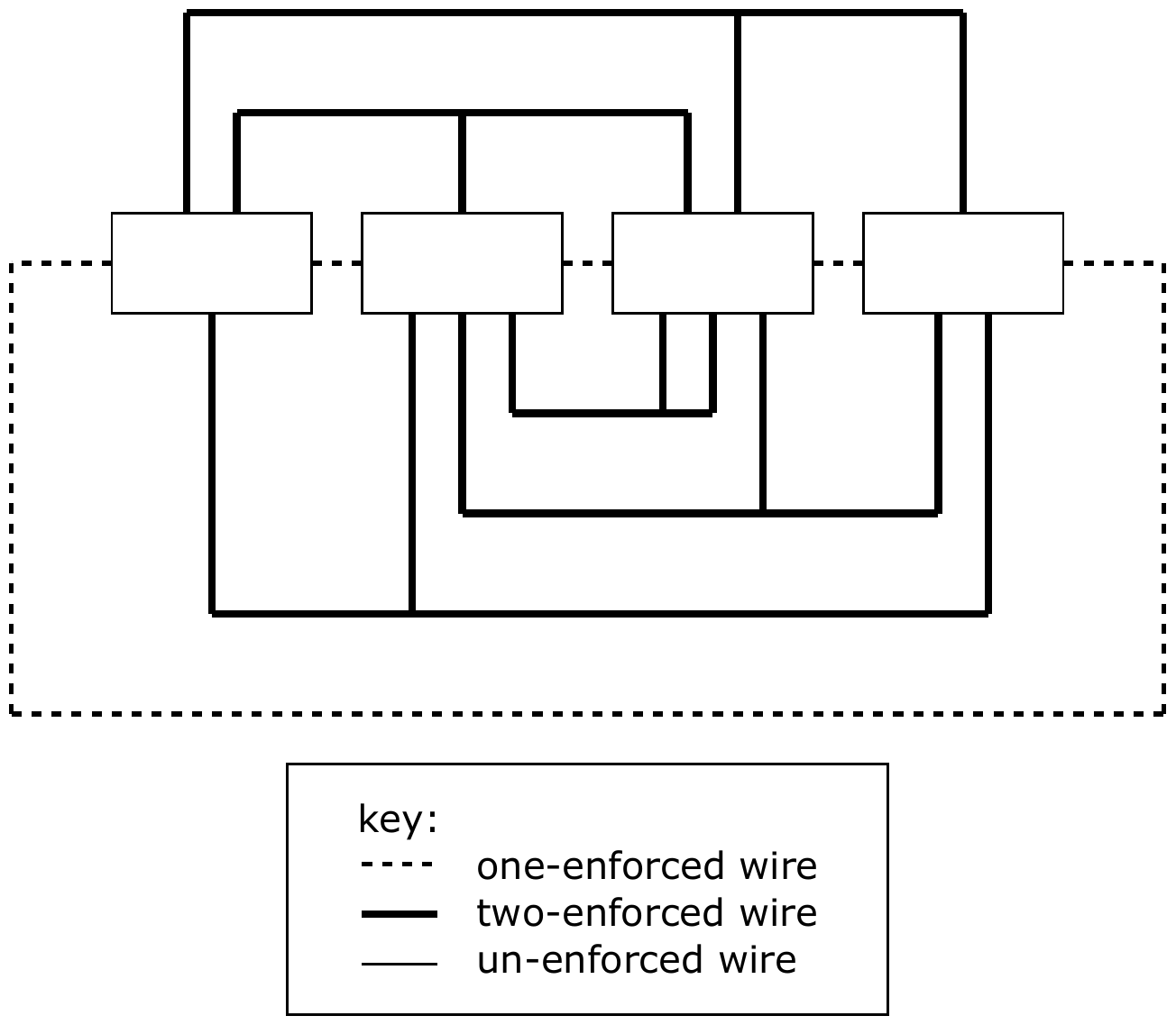}
    \caption{An example schematic for the resulting grid graph that could be produced from the instance of \prob{Planar Monotone Rectilinear 3SAT} in Figure~\ref{fig:square_polygonal_example}}
    \label{fig:polygonal_grid_graph}
\end{figure}

As you can see, the resulting instance of \prob{Hamiltonicity of Square Polygonal Grid Graphs} (the resulting grid graph) is very similar in shape to the original embedding. This similarity in shape between the embedding inputted into the reduction and the resulting instance of \prob{Hamiltonicity of Square Polygonal Grid Graphs} can be used as an argument that this is a polynomial time reduction. It is very easy to construct the \prob{Hamiltonicity of Square Polygonal Grid Graphs} instance from the embedding, and furthermore, the resulting graph is barely larger than the embedding. Due to the simplicity of the reduction, it is possible to construct the grid graph in time proportional to the size of the grid graph, but that size is itself linear in the size of the embedding. Thus we see that the reduction given above runs in polynomial time.

The remaining goal is to show that the reduction is answer preserving.

Consider first the case that the grid graph is Hamiltonian. Then we construct a variable assignment by assigning the value true to a variable if and only if that variable's variable gadget has the top wire two-enforced. Consider any [positive/negative] clause [$(x_i \lor x_j \lor x_k)$/$(\bar{x_i} \lor \bar{x_j} \lor \bar{x_k})$] under this assignment. The corresponding gadget for a [positive/negative] clause is [outside/inside] the large loop in the grid graph; thus the clause gadget is attached to the [top/bottom] wires of the variable gadgets for variables $x_i$, $x_j$, and $x_k$. By one of our derived properties for clause gadgets, we know that each clause is attached to at least one two-enforced wire; thus the [top/bottom] wire for the $x_i$, $x_j$, or $x_k$ variable gadget must be two-enforced. Equivalently, either $x_i$, $x_j$, or $x_k$ must be [true/false]. Thus, at least one variable in the clause is [true/false], and so at least one literal in the clause is true. In other words, the entire clause must be true. Then we see that we have found a satisfying assignment.

Next consider the case that a satisfying assignment for the formula exists. Consider the grid graph with all the clause gadgets removed (just a loop of variable gadgets and one-enforcers). Certainly a Hamiltonian cycle exists in this graph; in fact, many do: the cycle has two choices of behavior at each variable gadget. Construct the particular Hamiltonian cycle in which a variable gadget's top wire is two-enforced if and only if the variable is assigned a value of true. We will add the clause gadgets back into the graph one at a time. By one of our derived properties for clause gadgets, we know that if the clause being added attaches to at least one two-enforced wire then we can extend the Hamiltonian cycle to the new graph. Consider a clause gadget for a [positive/negative] clause [$(x_i \lor x_j \lor x_k)$/$(\bar{x_i} \lor \bar{x_j} \lor \bar{x_k})$]. The gadget is [outside/inside] the large loop in the grid graph and is therefore attached to the [top/bottom] wires of the variable gadgets for variables $x_i$, $x_j$, and $x_k$. But since the clause is satisfied, either $x_i$, $x_j$, or $x_k$ must be [true/false] and so the [top/bottom] wire for that variable gadget must be two-enforced. Thus the clause attaches to at least one two-enforced wire. Thus, as we add the clause gadgets back into the graph one at a time, the graph remains Hamiltonian. Clearly then the full graph is Hamiltonian.

As desired, we see that the grid graph in question is Hamiltonian if and only if the 3-CNF formula is satisfiable and that therefore we have described a polynomial-time answer-preserving reduction.

\FloatBarrier
\section{Conclusion and Further Work}
\FloatBarrier
\label{sec:conclusion}

We showed that \prob{Hamiltonicity of Thin Polygonal Grid Graphs} is solvable in polynomial time for every shape of grid graph.  In addition, we showed that \prob{Hamiltonicity of Square Polygonal Grid Graphs} and \prob{Hamiltonicity of Hexagonal Thin Grid Graphs} are both NP-complete. Having determined this, we have proved the hardness of two of the problems left open by Arkin et al.\ \cite{Arkin}, as well as determining the complexity of three problems not addressed in that paper.

This leaves only one of Arkin et al.'s open problems, namely, the complexity of \prob{Hamiltonicity of Hexagonal Solid Grid Graphs}.
Arkin et al.\ conjectured that this problem can be solved in polynomial time, based on the idea that the polynomial-time algorithm for \prob{Hamiltonicity of Square Solid Grid Graphs} could be adapted to also solve \prob{Hamiltonicity of Hexagonal Solid Grid Graphs}. 
The algorithm is a cycle-merging algorithm which starts with a 2-factor for the grid graph and progressively merges cycles until the 2-factor has one component (a Hamiltonian cycle) or until further merging is impossible. Many of the ideas appear relevant to hexagonal grid graphs as well, but a direct translation of the algorithm from solid square grid graphs and the relevant correctness proofs to hexagonal grid graphs fails. In order to make this approach work, some new insight seems necessary. Nevertheless, having struggled with this problem for some time, we believe the conjecture to be correct: \prob{Hamiltonicity of Hexagonal Solid Grid Graphs} can probably be solved in polynomial time.

\section*{Acknowledgments}
This research was initiated during the open problem sessions of MIT's graduate class 6.890: Algorithmic Lower Bounds, Fall 2014. We thank the other participants of those sessions---in particular, Quanquan Liu and Yun William Yu---for helpful discussions and for providing a stimulating research environment.

\bibliography{grid_graphs}
\bibliographystyle{plain}

\end{document}